\documentclass[11pt]{article}

\usepackage[margin=1in]{geometry}
\PassOptionsToPackage{numbers, compress}{natbib}

\usepackage{natbib}
\usepackage{amsmath}
\usepackage{algorithm}
\usepackage{algpseudocode}
\usepackage{float}
\usepackage{placeins}
\usepackage{tikz}
\usetikzlibrary{positioning}

\DeclareMathOperator{\E}{\mathbb{E}}

\usepackage[utf8]{inputenc} 
\usepackage[T1]{fontenc}    
\usepackage{hyperref}       
\usepackage{url}            
\usepackage{booktabs}       
\usepackage{amsfonts}       
\usepackage{amsmath}        
\usepackage{nicefrac}       
\usepackage{microtype}      
\usepackage{xcolor}         
\usepackage{float}
\usepackage{graphicx}
\usepackage{caption}
\usepackage{multirow}

\newtheorem{theorem}{Theorem}[section]
\newtheorem{proposition}[theorem]{Proposition}

\newtheorem{example}[theorem]{Example}
\newtheorem{remark}[theorem]{Remark}
\newtheorem{proof}[theorem]{Proof}

\title{Causal Survival Forests with Negative Controls}

\author{%
	\begin{tabular}{@{}ccc@{}}
		Zijun Gao%
		\thanks{
			Department of Data Sciences and Operations, USC Marshall School of
			Business, University of Southern California.
			\texttt{zijungao@marshall.usc.edu}}
		&
		Kyoungeui Hong%
		\thanks{Department of Computer Science, University of Southern
			California. \texttt{hongkyou@usc.edu}}
		&
		Leyi Ma%
		\thanks{Department of Computer Science, University of Southern California.
			\texttt{rma99604@usc.edu}}
		\\[0.75em]
		Qianli Wu%
		\thanks{Department of Computer Science, University of Southern
			California. \texttt{ewu96023@usc.edu}}
		&
		Zachary Izzo%
		\thanks{NEC Labs America.
			\texttt{zle.izzo@gmail.com}}
		&
		Ruishan Liu%
		\thanks{Department of Computer Science, University of Southern California.
			\texttt{ruishanl@usc.edu}}
	\end{tabular}
}
\begin{document}

	\maketitle
	
	\begingroup
	\renewcommand{\thefootnote}{}
	\footnotetext{The first four authors contributed equally. The last two authors are co-corresponding authors.}
	\addtocounter{footnote}{-1}
	\endgroup

\begin{abstract}
We study heterogeneous treatment-effect (HTE) estimation in observational survival studies commonly associated with both censored outcomes and unmeasured confounding.
We integrate causal survival forests (CSF) with negative controls (NC) from proximal causal inference and introduce Negative Control Causal Survival Forests (NC-CSF), a flexible nonparametric HTE learner for survival analysis. 
Our approach uses a loss that incorporates proxy variables and Neyman orthogonalization to train the random forest, thereby mitigating bias from unobserved confounding and gaining robustness to nuisance estimation.
Through extensive simulations spanning varying levels of confounding, proxy relevance, and censoring mechanisms, we demonstrate that NC-CSF substantially reduces bias and estimation error relative to existing baselines. 
We further demonstrate the practical utility of our method on various clinical datasets, where it confirms several existing findings and also reveals new interpretable patterns of treatment-effect heterogeneity. 
To facilitate practical use, we provide an end-to-end Python implementation of NC-CSF that carefully handles implementation details such as nuisance estimation and clipping.
\end{abstract}


\section{Introduction}





Time-to-event outcomes are common in treatment-effect analysis, especially in clinical trials and biomedical studies \cite{kalbfleisch2002statistical, imbens2015causal}. These survival outcomes are frequently right-censored, for example because follow-up ends before the event is observed or participants leave the study early. 
In recent years, treatment-effect heterogeneity for survival outcomes has received increasing attention \citep{henderson2020individualized,tabib2020nonparametric,hu2021estimating,cui2023estimating,hu2022flexible}, where the causal effect of treatment may vary across subgroups defined by baseline covariates. 
To model such heterogeneity flexibly in survival settings, machine-learning-assisted estimators have been developed. Among them, causal survival forests \citep{cui2023estimating,wager2018estimation,athey2019generalized} provide a promising framework, as they can estimate heterogeneous treatment effects in a flexible and nonparametric manner. 

Many existing methods for heterogeneous treatment-effect analysis with survival outcomes rely on an ignorability assumption, that is, they assume there is no unobserved confounding. However, this assumption may be unrealistic in many observational survival datasets, where unobserved confounding can induce bias in treatment-effect estimates when it is not properly adjusted for \citep{imbens2015causal,huang2020sensitivity}.
To address unobserved confounding, proximal causal inference \citep{miao2018identifying,tchetgen2020proximal,cui2023semiparametric,zhang2023proximal,tchetgen2023longitudinal,liu2025regression} offers a promising framework by leveraging negative-control treatments and negative-control outcomes, that is, auxiliary variables that are informative about latent confounders while satisfying exclusion-type restrictions with respect to the primary treatment and outcome. 
Negative-control variable assumptions may be more suitable in some biomedical applications than the classical no-unmeasured-confounding assumption, for example, pre-treatment measurements may share latent risk factors while remaining unaffected by treatment.
This literature has also recently been extended to survival settings with censored data, primarily for average treatment effects and under parametric modeling assumptions \citep{li2025regression}.

Motivated by the practical need for flexible nonparametric modeling of heterogeneous treatment effects, as well as by the appeal of addressing unobserved confounding via negative controls, we connect proximal causal inference with causal survival forests and develop \emph{negative-control causal survival forests} (NC-CSF).
In particular, our contributions are as follows. 
\begin{enumerate}
    \item \textit{Method}.
    We design a loss for random forests to learn heterogeneous treatment effects with censored survival outcomes that explicitly incorporates negative controls to adjust for unobserved confounding and employs Neyman orthogonalization to achieve robustness to nuisance-function estimation.

    \item \textit{Theory}. We provide theoretical justification for the proposed approach by showing how the negative-control-based loss mitigates bias from unobserved confounding and by establishing a local doubly robust property.

    \item \textit{Applications}. We evaluate the proposed method on both simulated and real datasets. In particular, on an AIDS clinical trial dataset, our method suggests that the fitted RMST treatment effect is negative for younger individuals and positive for older individuals, consistent with existing treatment recommendations, whereas existing methods yield uniformly positive estimates across ages.

    \item \textit{Packages}. We provide a ready-to-use Python package for the proposed method, where we have carefully addressed several practical issues required for stable finite-sample performance, such as nuisance estimation and variable clipping.
\end{enumerate}

\paragraph{Organization.} Section~\ref{sec:formulation} introduces the problem setup and background. Section~\ref{sec:method} presents our proposed NC-CSF estimator and provides its theoretical justification. Section~\ref{sec:simulations} studies NC-CSF's performance in simulations.
Section~\ref{sec:real.data} illustrates the method on real clinical data. 
We conclude with a discussion in Section~\ref{sec:discussion}. Proofs and additional numerical results are deferred to the appendix.

\section{Background}\label{sec:formulation}

We first formulate the problem of heterogeneous treatment-effect estimation in the presence of right censoring and unmeasured confounding. We then review the two key ingredients underlying our proposal: causal survival forests and negative controls.

\subsection{Problem Formulation}

We follow the potential-outcome framework \citep{rubin1974estimating,imbens2015causal} and consider an observational survival study with $n$ units. Each unit is associated with a binary treatment $A_i \in \{0,1\}$, observed baseline covariates $X_i$, and unobserved confounders $U_i$. 
Let $T_i(a)$ and $C_i(a)$, for $a \in \{0,1\}$, denote the potential event time and potential censoring time, respectively. The realized event and censoring time are given by $T_i = T_i(A_i)$ and $C_i = C_i(A_i)$, and the observed survival time and censoring indicator are $Y_i = \min(T_i,C_i)$ and $\Delta_i = I\{T_i \le C_i\}$. 
Let $h > 0$ be a fixed horizon, and we define the horizon-adjusted follow-up time and indicator by $Y_i^{(h)}=\min(Y_i,h)$ and $\Delta_i^{(h)}=\Delta_i\,\mathbf{1}\{Y_i\le h\}+\mathbf{1}\{Y_i>h\}$.

We consider two sets of causal estimands derived from the potential event times. The first is the restricted mean survival time (RMST), defined by $\min\{T_i(a),h\}$, and the second is the survival probability at horizon $h$, defined by $I\{T_i(a)>h\}$. Let $G_i(a;h)$ denote either of these two functionals, and use $G_i(h)$ to denote $G_i(A_i;h)$. The associated heterogeneous treatment-effect estimands are
\begin{align*}
&\tau
:=
\mathbb{E}\left[ G_i(1;h)-G_i(0;h) \right],\\
&\tau(x)
:=
\mathbb{E}\left[ G_i(1;h)-G_i(0;h) \mid X_i=x \right], \\
&\tau^w(x)
:=
\mathbb{E}\left[ w(X_i,U_i)(G_i(1;h)-G_i(0;h)) \mid X_i=x \right], \\
&\tau(x,u)
:=
\mathbb{E}\left[G_i(1;h)-G_i(0;h) \mid X_i=x,U_i = u \right],
\end{align*}
where $w(X_i,U_i) \ge 0$ denotes some weight function such that $\mathbb{E}[w(X_i,U_i) \mid X_i] = 1$ (we define a specific weight function in Section~\ref{sec:theory}).
Here, $\tau$ is the average treatment effect, $\tau(x)$ is the treatment effect conditional on the observed covariates, $\tau(x,u)$ is the treatment effect conditional on the full confounding state, and \(\tau^w(x)\)
denotes a weighted average of \(\tau(x,U_i)\) over the conditional
distribution of \(U_i\) given \(X_i=x\).

We state the assumptions needed for our proposal.
We assume the stable unit treatment value assumption (SUTVA), which says there is no interference between units and no hidden versions of treatment. We also assume overlap given both $X_i$ and $U_i$, namely
\begin{align*}
0 < P(A_i=1 \mid X_i,U_i) < 1, \quad \text{a.s.}
\end{align*}
Importantly, we do \emph{not} assume ignorability given the observed covariates $X_i$, and we allow treatment assignment to depend on the latent confounder $U_i$.





\subsection{Causal Survival Forest Preliminaries}\label{sec:CSF}
To review the causal survival forests (CSF) in \citep{cui2023estimating} under no unmeasured confounding, we first introduce some new notation.
 Let $e(X_i)=\mathbb{P}(A_i=1\mid X_i)$ denote the propensity score, and let $m(X_i)=E[G_i(h)\mid X_i]$ denote the marginal mean function. Let $S_C(t\mid X_i,A_i)=\mathbb{P}(C_i>t\mid X_i,A_i)$ and $\Lambda_C(t\mid X_i,A_i)=\int_0^t \lambda_C(u\mid X_i,A_i)\,du$ denote the conditional censoring survival function and cumulative hazard, respectively, where $\lambda_C(t\mid X_i,A_i)$ is the conditional censoring hazard function defined by
$\lambda_C(t\mid X_i,A_i)
=
\lim_{\Delta t\to 0}
{
\mathbb{P}(t \le C_i < t+\Delta t \mid C_i \ge t,\; X_i,A_i)
}/{
\Delta t
}$.

In CSF, the following loss is used for node splitting and leaf fitting within each tree,
\begin{align*}
\ell^{\mathrm{CSF}}
&=
\sum_{i=1}^n
\Bigl(
\Gamma_i^{\mathrm{CSF}}(h)-m(X_i)-\tau(X_i)\bigl(A_i-e(X_i)\bigr)
\Bigr)^2, \\
\Gamma_i^{\mathrm{CSF}}(h)
&=
\frac{
\Delta_i^{(h)} G_i(h)
+
\bigl(1-\Delta_i^{(h)}\bigr)\rho_{A_i}\!\bigl(Y_i^{(h)},X_i;h\bigr)
}{
S_C\!\bigl(Y_i^{(h)}\mid X_i,A_i\bigr)
}
-
\int_0^{Y_i^{(h)}}
\frac{
\rho_{A_i}(t, X_i;h)
}{
S_C\!\bigl(t\mid X_i,A_i\bigr)
}
\,d\Lambda_C\!\bigl(t\mid X_i,A_i\bigr),
\end{align*}
where $\Gamma_i^{\mathrm{CSF}}(h)$ is the censoring-corrected outcome and $\rho_a(t, X_i;h)=E[G_i(a;h)\mid T_i(a)>t,X_i]$ denotes the expected outcome conditional on the unit surviving to time $t$.
Results for uncensored data can be recovered as a special case of the censored setting by assuming $C_i(0)=C_i(1)=\infty$, a.s.

A key property of this construction is its robustness to nuisance-function estimation (here the nuisance functions include $\rho_a$, $S_C$, $\Lambda_C$, $m$, and $e$). 
In particular, the associated score function is Neyman orthogonal, meaning that its pathwise derivative regarding the nuisance functions is zero at the truth, and only second-order nuisance estimation errors remain locally.
In Section~\ref{sec:method}, we extend the loss of CSF to incorporate negative controls to address unobserved confounders.

\subsection{Negative Controls and Bridge Functions Preliminaries}

\begin{figure}
    \centering
\begin{tikzpicture}[
  >=stealth,
  every node/.style={font=\small},
  obs/.style={draw, circle, minimum size=9mm},
  latent/.style={draw, circle, minimum size=10mm, fill=gray!15},
  note/.style={font=\scriptsize}
]
\node[latent] (U) at (0,2.2) {$U$};
\node[obs] (Z) at (-2,0.8) {$Z$};
\node[obs] (X) at (0,0.8) {$X$};
\node[obs] (W) at (2,0.8) {$W$};
\node[obs] (A) at (-1,-1.1) {$A$};
\node[obs] (Y) at (1,-1.1) {$(Y,\Delta)$};

\draw[->] (U) -- (Z);
\draw[->] (U) -- (X);
\draw[->] (U) -- (W);
\draw[->] (U) -- (A);
\draw[->] (U) -- (Y);

\draw[->] (X) -- (Z);
\draw[->] (X) -- (W);
\draw[->] (X) -- (A);
\draw[->] (X) -- (Y);

\draw[->] (Z) -- (A);
\draw[->] (W) -- (Y);
\draw[->] (A) -- (Y);

\draw[->] (X) -- (W);

\node[note, right=3mm of U] {Unobserved};
\end{tikzpicture}
\caption{Directed acyclic graph illustrating the assumptions of the negative-control treatment $Z$ and negative-control outcome $W$.}
\label{fig:diag}
\end{figure}

To review negative controls and bridge functions for latent confounders, we first introduce some additional notation. The standard proximal causal inference framework \citep{miao2018identifying,tchetgen2020proximal,cui2023semiparametric} assumes access to a negative-control outcome proxy $W_i$ and a negative-control treatment proxy $Z_i$. Intuitively, $W_i$ is associated with the latent confounder but is not affected by the treatment, while $Z_i$ is associated with treatment assignment but does not directly affect the outcome. 
The conditional independence structure is shown in Figure~\ref{fig:diag}; the formalized assumption adapted to our setting is in \eqref{eq:nc_censored}.

To remove bias induced by unobserved confounding, a common way of using negative controls is through bridge functions. In the standard setting of estimating the average treatment effect, one may consider bridge functions satisfying $\mathbb{E}\left[f_a(W_i,X_i)\mid X_i,U_i\right] = \mathbb{E}\left[Y_i(a)\mid X_i,U_i\right]$. When such functions exist, a key benefit of these bridge relations is that, although $U_i$ is not observed, we have $\mathbb{E}[f_a(W_i,X_i)] = \mathbb{E}\!\left[\mathbb{E}\!\left[f_a(W_i,X_i)\mid X_i,U_i\right]\right] = \mathbb{E}\!\left[\mathbb{E}\!\left[Y_i(a)\mid X_i,U_i\right]\right] = \mathbb{E}[Y_i(a)]$, so that unbiased estimation can still be achieved using observable variables alone. In existing work, estimators are typically formed by averaging bridge-function-based functionals over units, whereas we instead use the bridge functions to construct a loss to train random forests that allows more flexible modeling of $\tau(x)$ and facilitates the use of well-developed machine learning tools.




\section{Method}\label{sec:method}


\subsection{Negative Control and Bridge Functions for Survival Analysis}

In this paper, we assume access to both a negative-control outcome and a negative-control treatment, satisfying the conditional independence structure
\begin{align}\label{eq:nc_censored}
(A_i,Z_i)\perp (T_i(a),C_i(a),W_i)\mid U_i,X_i.
\end{align}
This means that, conditional on the full confounding state $(X_i,U_i)$, the treatment $A_i$ and negative-control treatment $Z_i$ are independent of the potential event time, the potential censoring time, and the negative-control outcome $W_i$.
Although the assumption~\eqref{eq:nc_censored} is not testable from the observed data alone, it may be plausible in biomedical studies by temporal ordering; for example, pre-treatment measurements can serve as negative-control outcomes as they may share latent risk factors \citep{tchetgen2020proximal,li2025regression}.

Given $W_i$ and $Z_i$, we consider bridge functions $f_a(w,x)$, $a\in\{0,1\}$, and $q(z,x)$ satisfying
\begin{align}\label{eq:bridge.function}
\begin{split}
\mathbb{E}\left[f_a(W_i,X_i)\mid X_i,U_i\right] &= \mathbb{E}\left[G_i(a;h)\mid X_i,U_i\right], \\
\mathbb{E}\left[q(Z_i,X_i)\mid X_i,U_i\right] &= \mathbb{P}(A_i=1\mid X_i,U_i).
\end{split}
\end{align}
These bridge relations imply that, although $U_i$ is unobserved, the latent conditional outcome and treatment mechanisms can be represented through observable functions of $(W_i,X_i)$ and $(Z_i,X_i)$, respectively. 
We note that our specification of $q$ differs mildly from that in \cite{miao2018identifying}, where the conditional expectation of the bridge function is specified to match $1 / \mathbb{P}(A_i = a \mid X_i, U_i)$.


Finally, for censoring, we assume the standard conditional independent censoring assumption \citep{kalbfleisch2002statistical,cui2023estimating} given the observed variables including the negative controls:
\begin{align}\label{eq:censoring}
C_i(a)\perp T_i(a)\mid X_i,W_i,Z_i,A_i=a, \quad a \in \{0,1\}.
\end{align}
We next introduce some additional notation for censored survival outcomes in the presence of negative-control treatment and outcome variables. Let $m(x,w,z)=E[G_i(h)\mid X_i=x,W_i=w,Z_i=z]$ denote the marginal mean function. Define $S_C(t\mid X_i,W_i,Z_i,A_i)$, 
$\rho_a(t\mid X_i,W_i,Z_i)$, $\Lambda_C(t\mid X_i,W_i,Z_i,A_i)$, $\lambda_C(t\mid X_i,W_i,Z_i,A_i)$
analogously to Section~\ref{sec:CSF}.

\subsection{Oracle NC-CSF}

We construct the loss for NC-CSF using both bridge functions $q$ and $f_a$, and then fit a random forest by minimizing it. 
We refer to this as the oracle NC-CSF, since we assume that all true nuisance functions, including the bridge functions, are provided.
Below we focus on the loss construction.

Analogous to the residualization in \citep{robinson1988root,nie2021quasi}, we introduce residualized treatment and outcome,
\begin{align}\label{eq:residual}
\begin{split}    
A_i^{\mathrm{res}} &= A_i-q_i, \\
Y_i^{\mathrm{res}}
&=
\frac{
\Delta_i^{(h)} \bigl(Y_i^{(h)}-m_i\bigr)
+
\bigl(1-\Delta_i^{(h)}\bigr)
\bigl(\rho_{A_i}(Y_i^{(h)},X_i,W_i,Z_i;h)-m_i\bigr)
}{
S_C(Y_i^{(h)} \mid X_i,W_i,Z_i,A_i)
}
\\
&\quad
-
\int_0^{Y_i^{(h)}}
\frac{d\Lambda_C(t \mid X_i,W_i,Z_i,A_i)}
{S_C(t \mid X_i,W_i,Z_i,A_i)}
\bigl(\rho_{A_i}(t,X_i,W_i,Z_i;h)-m_i\bigr)\,dt.
\end{split}
\end{align}
Here, $A_i^{\mathrm{res}}$ is the bridge-adjusted treatment residual satisfying $E[A_i^{\mathrm{res}}\mid X_i,U_i]=0$, and $Y_i^{\mathrm{res}}$ denotes the censoring-corrected residualized outcome. In the absence of censoring, $Y_i^{\mathrm{res}}$ simplifies to $Y_i^{(h)}-m_i$, as illustrated in Example~\ref{exam:uncensored_rmst}.
Finally, we define the NC-CSF loss based on the residuals again motivated by \citep{robinson1988root,nie2021quasi},
\begin{align}\label{eq:loss}
\ell^{\mathrm{NC\text{-}CSF}}
=
\sum_{i=1}^n
\left(
Y_i^{\mathrm{res}}
-
\tau(X_i)\,A_i^{\mathrm{res}}
\right)^2.
\end{align}

\begin{example}[Uncensored case]\label{exam:uncensored_rmst}
If there is no censoring, then $C_i=\infty$ almost surely and hence $\Delta_i=1$ for all $i$, $\Delta_i^{(h)}=1$, $S_C(t\mid X_i,W_i,Z_i,A_i)=1$, $\Lambda_C(t\mid X_i,W_i,Z_i,A_i)=0$, and thus $Y_i^{\mathrm{res}}=Y_i^{(h)}-m_i$.
\end{example}

\subsection{Empirical NC-CSF}

In this section, we discuss how to estimate the nuisance quantities and plug them into the oracle loss for fitting, yielding the (empirical) NC-CSF. 
Because \(U_i\) is unobserved,
the latent bridge equations in \eqref{eq:bridge.function} cannot be
evaluated directly from the observed data. We therefore use the observable conditional moment equations stated in Proposition~\ref{prop:bridge_observed_moment} below, which do not involve \(U_i\) and are implied by Eq.~\eqref{eq:nc_censored} and Eq.~\eqref{eq:bridge.function}.

\begin{proposition}
\label{prop:bridge_observed_moment}
Suppose $f$ and $q$ satisfy \eqref{eq:bridge.function}, then, for each $a$,
\begin{align}\label{eq:bridge_observed_prop_combined}
\begin{split}    
\mathbb{E}\!\left[A_i-q(Z_i,X_i)\mid X_i,W_i\right] = 0,\quad 
\mathbb{E}\!\left[G_i(a;h)-f_a(W_i,X_i)\mid X_i,Z_i,A_i=a\right] = 0.
\end{split}
\end{align}
Conversely, if the following completeness conditions hold: $\mathbb{E}[r(X_i,U_i)\mid X_i,W_i]=0$ a.s. implies $r(X_i,U_i)=0$ a.s. and $\mathbb{E}[s(X_i,U_i)\mid X_i,Z_i,A_i=a]=0$ a.s. implies $s(X_i,U_i)=0$ a.s. for any square-integrable measurable function $r$, $s$, then \eqref{eq:bridge_observed_prop_combined} implies \eqref{eq:bridge.function}.
\end{proposition}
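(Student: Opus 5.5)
The proof proceeds in two directions. For the forward direction, I would condition on $(X_i,U_i)$ and use the negative-control independence \eqref{eq:nc_censored} to move between conditioning sets. The two conditional independences I need are
\[
A_i \perp W_i \mid (X_i,U_i),
\qquad
Z_i \perp (T_i(a),W_i) \mid (X_i,U_i,A_i=a).
\]
The first follows directly from \eqref{eq:nc_censored}. The second follows from the joint statement $(A_i,Z_i)\perp(T_i(a),C_i(a),W_i)\mid U_i,X_i$ by the weak-union/contraction rules, conditioning additionally on $A_i$.

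\textbf{Treatment moment.} By the tower property,
\[
\mathbb{E}[A_i-q(Z_i,X_i)\mid X_i,W_i]
=\mathbb{E}\bigl[\mathbb{E}[A_i-q(Z_i,X_i)\mid X_i,U_i,W_i]\mid X_i,W_i\bigr].
\]
The inner expectation equals $P(A_i=1\mid X_i,U_i)-\mathbb{E}[q(Z_i,X_i)\mid X_i,U_i]$, because $(A_i,Z_i)\perp W_i\mid X_i,U_i$. By the second line of \eqref{eq:bridge.function} this inner expectation is $0$.

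\textbf{Outcome moment.} I would similarly write $\mathbb{E}[G_i(a;h)-f_a(W_i,X_i)\mid X_i,Z_i,A_i=a]$ as the conditional expectation, given $(X_i,Z_i,A_i=a)$, of
\[
\mathbb{E}[G_i(a;h)\mid X_i,U_i,Z_i,A_i=a]-\mathbb{E}[f_a(W_i,X_i)\mid X_i,U_i,Z_i,A_i=a].
\]
Because $G_i(a;h)$ is a function of $T_i(a)$, and $(T_i(a),W_i)\perp(A_i,Z_i)\mid X_i,U_i$, the conditioning on $(Z_i,A_i=a)$ can be dropped in both terms. The expression then becomes $\mathbb{E}[G_i(a;h)\mid X_i,U_i]-\mathbb{E}[f_a\mid X_i,U_i]=0$ by the first line of \eqref{eq:bridge.function}.

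\textbf{Converse.} Set
\[
r(X_i,U_i):=P(A_i=1\mid X_i,U_i)-\mathbb{E}[q(Z_i,X_i)\mid X_i,U_i],
\qquad
s(X_i,U_i):=\mathbb{E}[G_i(a;h)\mid X_i,U_i]-\mathbb{E}[f_a(W_i,X_i)\mid X_i,U_i].
\]
Running the same tower computations backwards shows that $\mathbb{E}[r\mid X_i,W_i]$ equals the first moment in \eqref{eq:bridge_observed_prop_combined}, and that $\mathbb{E}[s\mid X_i,Z_i,A_i=a]$ equals the second. Both therefore vanish almost surely, so completeness gives $r=s=0$ almost surely, which is exactly \eqref{eq:bridge.function}.

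\textbf{Main obstacle.} The step needing care is justifying the exchange of conditioning sets. In particular, I must ensure that $\mathbb{E}[s(X_i,U_i)\mid X_i,Z_i,A_i=a]$ genuinely reproduces the observed moment; this uses the conditional-independence manipulations above and overlap, so that conditioning on $A_i=a$ is well defined. I also need square-integrability of $r$ and $s$ in order to invoke completeness. This holds because $G_i$ is bounded by $h$ and $A_i$ is binary, provided $q$ and $f_a$ are assumed square-integrable.
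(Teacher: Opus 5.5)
Your proposal is correct and follows essentially the same route as the paper. Both condition on $(X_i,U_i,\cdot)$, use $(A_i,Z_i)\perp(T_i(a),W_i)\mid X_i,U_i$ to drop the extra conditioning, apply the latent bridge equations, and then integrate back up by the tower property. For the converse, both define the same $r$ and $s$ and invoke completeness; your added remarks on overlap and on the square-integrability of $r,s$ are details the paper leaves implicit.
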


When censoring is present, \(G_i(a;h)\) may be unobserved. 
Under \eqref{eq:nc_censored} and \eqref{eq:censoring}, it can be replaced by an inverse censoring probability weighted pseudo-outcome that satisfies the same conditional moment equation (details in the Appendix).
To learn survival-related nuisance functions, under \eqref{eq:censoring}, those nuisance functions are not required to satisfy conditions involving the unobserved confounder $U_i$, and thus can be estimated using standard survival-analysis procedures, such as Cox models. 
Finally, to avoid overfitting in the nuisance stage, we employ cross-fitting \cite{chernozhukov2018double}.
See Algorithm~\ref{alg:censored_final_training} for a brief summary. Additional implementation details and a more detailed version of Algorithm~\ref{alg:censored_final_training} are deferred to the Appendix.

\begin{algorithm}[H]
\caption{Censored NC-CSF (Brief)}
\label{alg:censored_final_training}
\begin{algorithmic}[1]
\Require Data $\{(X_i,W_i,Z_i,A_i,Y_i,\Delta_i)\}_{i=1}^n$, causal estimand, horizon $h$, number of folds $K$
\State Split the sample into $K$ folds for cross-fitting.
\For{$k=1,\ldots,K$}
    \State Fit the censoring model, survival models, treatment and outcome bridges on the training folds.
    \State Construct $A_i^{\mathrm{res}}$, $Y_i^{\mathrm{res}}$ in \eqref{eq:residual} on the held-out fold.
\EndFor
\State Fit the final causal forest on $\{(X_i,A_i^{\mathrm{res}},Y_i^{\mathrm{res}})\}_{i=1}^n$ with the loss \eqref{eq:loss}.
\State \Return $\widehat\tau(\cdot)$
\end{algorithmic}
\end{algorithm}

\subsection{Theories}\label{sec:theory}

\begin{proposition}[Validity of the NC-CSF score]\label{prop:score}
Assume \eqref{eq:nc_censored} and \eqref{eq:bridge.function}. Then the score has conditional mean zero given $X_i=x$ at
\begin{align*}
\tau^w(x)
=
\frac{
\mathbb{E}\!\left[(A_i-q(Z_i,X_i))^2 \tau(X_i,U_i)\mid X_i=x\right]
}{
\mathbb{E}\!\left[(A_i-q(Z_i,X_i))^2\mid X_i=x\right]
},
\end{align*}
where $\tau(x,u)=\mathbb{E}[Y(1)-Y(0)\mid X=x,U=u]$. 

In particular, when the latent variable $U$ does not modify the treatment effect beyond $X$, that is, $\tau(x,u)=\tau(x)$, we have $\tau^w(x)=\tau(x)$.
\end{proposition}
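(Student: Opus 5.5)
The plan is to show that the score $\psi_i(\tau)=A_i^{\mathrm{res}}\bigl(Y_i^{\mathrm{res}}-\tau A_i^{\mathrm{res}}\bigr)$, i.e.\ the derivative of a summand of \eqref{eq:loss}, satisfies $\mathbb{E}[\psi_i(\tau)\mid X_i=x]=0$ exactly at $\tau=\tau^w(x)$. The approach is to condition on the full confounding state $(X_i,U_i)$, evaluate the inner expectation using \eqref{eq:nc_censored} and \eqref{eq:bridge.function}, and then average over $U_i$ given $X_i=x$. Because $\psi_i$ is affine in $\tau$, the root is
\[
\mathbb{E}[A_i^{\mathrm{res}}Y_i^{\mathrm{res}}\mid X_i=x]\big/\mathbb{E}[(A_i^{\mathrm{res}})^2\mid X_i=x].
\]
The whole proof therefore reduces to showing $\mathbb{E}[A_i^{\mathrm{res}}Y_i^{\mathrm{res}}\mid X_i,U_i]=\mathbb{E}[(A_i-q_i)^2\mid X_i,U_i]\,\tau(X_i,U_i)$.

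\emph{Step 1 (remove censoring).} First reduce $Y_i^{\mathrm{res}}$ to its uncensored counterpart. Under \eqref{eq:censoring}, the standard CSF argument gives
\[
\mathbb{E}[Y_i^{\mathrm{res}}\mid X_i,W_i,Z_i,A_i,T_i]=G_i(h)-m_i .
\]
This rests on the martingale representation of the censoring counting process with respect to $\Lambda_C(\cdot\mid X,W,Z,A)$, and I would cite it rather than redo it. Since $A_i^{\mathrm{res}}$ is a function of $(A_i,Z_i,X_i)$, the tower property then gives $\mathbb{E}[A_i^{\mathrm{res}}Y_i^{\mathrm{res}}\mid X_i,U_i]=\mathbb{E}[(A_i-q_i)(G_i(h)-m_i)\mid X_i,U_i]$. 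This is exactly the quantity one would have in the uncensored case of Example~\ref{exam:uncensored_rmst}.

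\emph{Step 2 (decompose by treatment).} Write $G_i(h)=G_i(0;h)+A_i\{G_i(1;h)-G_i(0;h)\}$. By \eqref{eq:nc_censored}, $(A_i,Z_i)$ is conditionally independent of $(G_i(a;h),W_i)$ given $(X_i,U_i)$. Hence
\[
\mathbb{E}[(A_i-q_i)G_i(0;h)\mid X_i,U_i]=\mathbb{E}[A_i-q_i\mid X_i,U_i]\,\mathbb{E}[G_i(0;h)\mid X_i,U_i]=0,
\]
where the first factor vanishes by the treatment bridge in \eqref{eq:bridge.function}. Similarly,
\[
\mathbb{E}[(A_i-q_i)A_i\{G_i(1;h)-G_i(0;h)\}\mid X_i,U_i]=\mathbb{E}[(A_i-q_i)A_i\mid X_i,U_i]\,\tau(X_i,U_i).
\]
Averaging over $U_i\mid X_i=x$ then produces a weighted average of $\tau(x,U_i)$. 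This also gives the final claim immediately: if $\tau(x,u)=\tau(x)$, the factor $\tau(x)$ pulls out and the ratio equals $\tau(x)$ for any weights.

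\emph{Step 3 (main obstacle: the weights must equal $(A_i-q_i)^2$).} To match the stated $\tau^w$, two residual terms must vanish.

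The first is the mean term $\mathbb{E}[(A_i-q_i)m_i\mid X_i,U_i]$. It is zero when $m_i$ depends only on $(X_i,W_i)$: then conditional independence factorizes it into $\mathbb{E}[A_i-q_i\mid X_i,U_i]\,\mathbb{E}[m_i\mid X_i,U_i]=0$. For $m$ depending also on $Z$, I would argue via the observed-moment form in Proposition~\ref{prop:bridge_observed_moment}, or replace $m$ by its projection onto $(X,W)$, noting that any centering function of $(X,W)$ is admissible.

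The second is the cross term $\mathbb{E}[(A_i-q_i)q_i\mid X_i,U_i]$. Under \eqref{eq:nc_censored} it equals $-\mathrm{Var}(q_i\mid X_i,U_i)$. It vanishes, turning $\mathbb{E}[(A_i-q_i)A_i\mid X_i,U_i]$ into $\mathbb{E}[(A_i-q_i)^2\mid X_i,U_i]$, only when $q$ is degenerate given $(X_i,U_i)$.

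I expect this step to be the real difficulty. I would first try to establish it under that degeneracy. Failing that, I would state the result with the weight $(A_i-q_i)A_i$, which is still nonnegative in conditional mean since it equals $p(1-p)+\mathrm{Var}(q\mid X,U)$ with $p=\mathbb{P}(A_i=1\mid X_i,U_i)$. That version still delivers the conclusion $\tau^w(x)=\tau(x)$ whenever $\tau(x,u)=\tau(x)$.
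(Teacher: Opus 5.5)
There is a genuine gap. Your Steps 1 and 2 are sound in substance, but Step 3 misidentifies the obstacle, and neither of your proposed resolutions proves the statement.

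\textbf{The missing idea.} The paper's proof, and its algorithm (which forms $\widehat m_i=\widehat q_i\widehat f_{1,i}+(1-\widehat q_i)\widehat f_{0,i}$), uses a specific centering term. It is the bridge mixture $m_i=q_if_{1,i}+(1-q_i)f_{0,i}$, with $q_i=q(Z_i,X_i)$ and $f_{a,i}=f_a(W_i,X_i)$, rather than a generic function of $(X_i,W_i,Z_i)$. Its dependence on $Z_i$ through $q_i$ should not be projected away. It is exactly what converts your weight $(A_i-q_i)A_i$ into $(A_i-q_i)^2$. Write $\mu_a=\mathbb{E}[f_{a,i}\mid X_i,U_i]=\mathbb{E}[G_i(a;h)\mid X_i,U_i]$. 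Then $(A_i,Z_i)\perp W_i\mid X_i,U_i$ and $\mathbb{E}[A_i-q_i\mid X_i,U_i]=0$ give
\begin{align*}
\mathbb{E}[(A_i-q_i)m_i\mid X_i,U_i]
&=\mathbb{E}[(A_i-q_i)q_i\mid X_i,U_i]\,\mu_1+\mathbb{E}[(A_i-q_i)(1-q_i)\mid X_i,U_i]\,\mu_0\\
&=\mathbb{E}[(A_i-q_i)q_i\mid X_i,U_i]\,\tau(X_i,U_i).
\end{align*}
So the mean term does not vanish; instead it cancels your cross term:
\begin{align*}
\mathbb{E}[(A_i-q_i)(G_i-m_i)\mid X_i,U_i]
&=\bigl\{\mathbb{E}[(A_i-q_i)A_i\mid X_i,U_i]-\mathbb{E}[(A_i-q_i)q_i\mid X_i,U_i]\bigr\}\tau(X_i,U_i)\\
&=\mathbb{E}[(A_i-q_i)^2\mid X_i,U_i]\,\tau(X_i,U_i).
\end{align*}
The paper does this in one line via the identity
\begin{align*}
(A_i-q_i)(G_i-m_i)
&=A_i(A_i-q_i)(G_i(1;h)-f_{1,i})+(1-A_i)(A_i-q_i)(G_i(0;h)-f_{0,i})\\
&\quad+(A_i-q_i)^2(f_{1,i}-f_{0,i}).
\end{align*}
The first two terms vanish given $(X_i,U_i)$ by \eqref{eq:nc_censored} and the outcome bridge, and the last term factorizes. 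No degeneracy of $q$ given $(X_i,U_i)$ is needed. This matters, because that degeneracy fails for any genuinely noisy proxy $Z_i$.

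\textbf{Your fallback is also incorrect.} Suppose $m$ depends on $(X,W)$ only. Your own computation then gives the score root
\begin{align*}
\frac{\mathbb{E}[(A_i-q_i)A_i\,\tau(X_i,U_i)\mid X_i=x]}{\mathbb{E}[(A_i-q_i)^2\mid X_i=x]}.
\end{align*}
The denominator comes from the $\tau(A_i^{\mathrm{res}})^2$ part of the score, so you are not free to reweight it. Because numerator and denominator weights differ, this is not a weighted average of $\tau(x,U_i)$. When $\tau(x,u)=\tau(x)$ it equals $\tau(x)\,\mathbb{E}[(A_i-q_i)A_i\mid X_i=x]/\mathbb{E}[(A_i-q_i)^2\mid X_i=x]$. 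That differs from $\tau(x)$ unless $\mathbb{E}[(A_i-q_i)q_i\mid X_i=x]=0$. So ``the ratio equals $\tau(x)$ for any weights'' and ``any centering function of $(X,W)$ is admissible'' are both false for this score.

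\textbf{Smaller errors.}
\begin{itemize}
\item Your identity $\mathbb{E}[(A_i-q_i)q_i\mid X_i,U_i]=-\operatorname{Var}(q_i\mid X_i,U_i)$ requires $A_i\perp Z_i\mid X_i,U_i$. Assumption \eqref{eq:nc_censored} does not impose this, and Figure~\ref{fig:diag} has $Z\to A$.
\item Even under that extra independence, you have swapped the two quantities. It is $\mathbb{E}[(A_i-q_i)^2\mid X_i,U_i]$ that equals $p(1-p)+\operatorname{Var}(q_i\mid X_i,U_i)$, whereas $\mathbb{E}[(A_i-q_i)A_i\mid X_i,U_i]=p(1-p)$.
\item In Step 1 you cannot pass to conditioning on $(X_i,U_i)$. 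Assumption \eqref{eq:censoring} allows $C_i$ to depend on $U_i$ beyond $V_i=(X_i,W_i,Z_i,A_i)$. Instead, condition on $V_i$ inside the $X$-conditional expectation:
\begin{align*}
\mathbb{E}[A_i^{\mathrm{res}}Y_i^{\mathrm{res}}\mid X_i]
&=\mathbb{E}\bigl[A_i^{\mathrm{res}}\,\mathbb{E}[Y_i^{\mathrm{res}}\mid V_i]\bigm| X_i\bigr]\\
&=\mathbb{E}[A_i^{\mathrm{res}}(G_i-m_i)\mid X_i].
\end{align*}
Only after this step should you bring in $U_i$.
\end{itemize}
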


\begin{remark}[Interpretation of the weighted target]\label{rmk:tau.w}

We treat $\tau^w(x)$ as the general NC-CSF estimand. It remains a heterogeneous treatment-effect target over the observed covariates even when the latent confounder also modifies treatment effects. By
\begin{align}\label{eq:weighted_cate_difference}
\tau^w(x)-\tau(x)
=
\frac{\operatorname{Cov}(\omega(x,U_i),\tau(x,U_i)\mid X_i=x)}
{\mathbb{E}[\omega(x,U_i)\mid X_i=x]},
\end{align}
$\tau^w(x)$ equals the standard CATE $\tau(x)$ when $U$ does not modify the effect beyond $X$, or when $\omega(x,U)$ and $\tau(x,U)$ are conditionally uncorrelated given $X=x$.

The weighting is analogous to the overlap weighting induced by an R-learner with omitted effect modifiers \citep{nie2021quasi}.
In addition, among all positive latent weighting functions \(w'\), the choice \(w'(x,u)\propto\omega(x,u)\) minimizes the oracle conditional variance, which implies \(\tau^w(x)\) is variance-optimal within this class of weighted estimands.
\end{remark}

\begin{proposition}[Neyman orthogonality of the NC-CSF score]\label{prop:Neyman.orthogonality}
Let $\eta=(q,m,S_C,\rho_a,\Lambda_C)$ denote the set of nuisance functions. 
At the true nuisance value $\eta_0$, the pathwise derivative of the score of the NC-CSF loss regarding each nuisance component has conditional expectation zero given $X$.
\end{proposition}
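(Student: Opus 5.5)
We will work with the score $\psi_i(\tau;\eta)=A_i^{\mathrm{res}}\bigl(Y_i^{\mathrm{res}}-\tau(X_i)A_i^{\mathrm{res}}\bigr)$, which is the derivative of the NC-CSF loss \eqref{eq:loss} with respect to $\tau(X_i)$. For each nuisance component in turn we take a path $\eta_\epsilon=\eta_0+\epsilon\,\delta$ with a perturbation direction $\delta$. We then compute $\partial_\epsilon \mathbb{E}[\psi_i(\tau^w;\eta_\epsilon)\mid X_i]$ at $\epsilon=0$ and show that it vanishes.

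\textbf{Censoring and survival components.} Write $Y_i^{\mathrm{res}}$ as an augmented IPCW transform of the horizon-truncated residual $G_i(h)-m_i$. The orthogonality argument of \citep{cui2023estimating} for $\Gamma^{\mathrm{CSF}}$ then carries over with the conditioning set enlarged to $(X_i,W_i,Z_i,A_i)$. This works because of the independent-censoring assumption \eqref{eq:censoring}.

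\begin{itemize}
\item \emph{$S_C$ and $\Lambda_C$.} The derivative is a martingale integral $\int \delta(t)\,\bigl(\rho_{A_i}(t)-\rho^{\text{true}}\bigr)\,dM_C(t)$-type term. Under the true $\rho$ the integrand has conditional mean zero given $(X_i,W_i,Z_i,A_i)$.
\item \emph{$\rho_a$.} The derivative is $\int \delta_\rho(t)\,dM_C(t)/S_C$, where $M_C$ is the censoring martingale. It has conditional mean zero given $(X_i,W_i,Z_i,A_i)$.
\end{itemize}

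Since $A_i^{\mathrm{res}}$ is a function of $(X_i,Z_i,A_i)$, multiplying by it and iterating expectations preserves the zero.

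\textbf{The mean component $m$.} In both the censored and uncensored pieces, $m_i$ enters the numerator with coefficient $\Delta^{(h)}/S_C-\int d\Lambda_C/S_C$. This coefficient has conditional mean one given $(X_i,W_i,Z_i,A_i)$ by the standard IPCW identity. The derivative in a direction $\delta_m(X_i,W_i,Z_i)$ is therefore $-\mathbb{E}[A_i^{\mathrm{res}}\,\delta_m(X_i,W_i,Z_i)\mid X_i]$.

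This is where the argument is delicate, and I expect it to be the main obstacle. The first moment equation of Proposition~\ref{prop:bridge_observed_moment} gives $\mathbb{E}[A_i-q\mid X_i,W_i]=0$, which kills directions depending only on $(X_i,W_i)$. Directions that also depend on $Z_i$ are not killed. I therefore plan to restrict $m$ (and the outcome-side nuisance) to the class of functions of $(X_i,W_i)$, namely the outcome-bridge/$f$ class. I would then argue via \eqref{eq:nc_censored} that $W_i\perp(A_i,Z_i)\mid X_i,U_i$. With this,
\[
\mathbb{E}[(A_i-q)\,\delta(X_i,W_i)\mid X_i]=\mathbb{E}\bigl[\mathbb{E}[A_i-q\mid X_i,U_i]\,\mathbb{E}[\delta\mid X_i,U_i]\bigm| X_i\bigr]=0,
\]
by the bridge relation \eqref{eq:bridge.function}. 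This is exactly the interpretation under which the stated orthogonality holds, and I will make it explicit.

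\textbf{The treatment bridge $q$.} Differentiating in a direction $\delta_q(Z_i,X_i)$ gives
\[
-\mathbb{E}\bigl[\delta_q\,\{Y_i^{\mathrm{res}}-2\tau^w A_i^{\mathrm{res}}\}\mid X_i\bigr].
\]
Replace $Y_i^{\mathrm{res}}$ by its conditional mean $\mathbb{E}[G_i(h)-m_i\mid X_i,W_i,Z_i,A_i]$. Then decompose
\[
G_i(h)=G_i(0;h)+A_i\bigl(G_i(1;h)-G_i(0;h)\bigr).
\]
Conditioning on $(X_i,U_i)$ and using \eqref{eq:nc_censored}, the $G(0)-m$ part gives $\mathbb{E}[\delta_q\,(\mu_0(X,U)-m)\mid X]$. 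Meanwhile $\mathbb{E}[\delta_q\, A\,\tau(X,U)]$ is paired against $2\tau^w\,\mathbb{E}[\delta_q(A-q)\mid X]$.

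Showing that these cancel needs the bridge $q$ to satisfy $\mathbb{E}[A-q\mid X,U]=0$ jointly with $Z$. It also needs the definition of $\tau^w$ from Proposition~\ref{prop:score}. I expect this step to hold only after subtracting the population projection, so the cancellation here is the delicate point. If exact cancellation fails, I will state the result as holding for directions $\delta_q$ that are functions of $(X,U)$-measurable conditional means. Alternatively, I will use the outcome bridge $f_a$ to rewrite $m$ as a function of $(X,W)$, so that the $q$-derivative reduces to $\mathbb{E}[\delta_q\,\mathbb{E}[\text{residual}\mid X,U]]=0$.
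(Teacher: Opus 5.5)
Your survival-nuisance step follows the paper's martingale argument. Your $m$-step is correct and sharper than the paper's: the paper only says the $m$-derivative is proportional to $-A^{\mathrm{res}}$ and invokes $\mathbb{E}[A^{\mathrm{res}}\mid X]=0$, which tacitly assumes $X$-measurable directions. Your use of $(A,Z)\perp W\mid X,U$ covers all $(X,W)$-directions. Note that what you restrict is the perturbation direction, not the true $m$, which depends on $Z$.

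\textbf{The gap is the $q$-step.} The cancellation you hope for is not just delicate; it is false for general directions $\delta_q(Z,X)$. Take $m=qf_1+(1-q)f_0$ as in the paper's proof and write $\pi=\mathbb{P}(A=1\mid X,U,Z)$. Conditioning on $(X,U,Z)$, \eqref{eq:nc_censored} and the outcome bridges give $\mathbb{E}[G-m\mid X,U,Z]=(\pi-q)\tau(X,U)$ and $\mathbb{E}[A^{\mathrm{res}}\mid X,U,Z]=\pi-q$. Your derivative therefore equals $-\mathbb{E}[\delta_q(\pi-q)\{\tau(X,U)-2\tau^w(X)\}\mid X]$. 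With a constant effect this is $\tau\,\mathbb{E}[\delta_q(Z,X)(A-q)\mid X]$, which vanishes for every $\delta_q$ only if $q=\mathbb{E}[A\mid Z,X]$. The treatment bridge satisfies only $\mathbb{E}[A-q\mid X,U]=0$, i.e.\ it is orthogonal to functions of $(X,W)$, not of $(X,Z)$. In general $q\neq\mathbb{E}[A\mid Z,X]$; already in a linear-Gaussian proxy model the two have different slopes in $Z$. Your fallback of reducing to $\mathbb{E}[\delta_q\,\mathbb{E}[\text{residual}\mid X,U]]$ fails for the same reason: $\delta_q(Z,X)$ is not $(X,U)$-measurable, so the expectation does not factor.

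\textbf{How to close it.} What the paper proves, and what you can prove, is the statement with the $q$-perturbation taken in the value $q_i$, i.e.\ in an $X$-measurable direction $\delta(X)$. The conditional mean then factors as $-\delta(X)\{\mathbb{E}[Y^{\mathrm{res}}\mid X]-2\tau(X)\,\mathbb{E}[A^{\mathrm{res}}\mid X]\}$. It is killed by two centering identities that you have not established:
\begin{itemize}
\item $\mathbb{E}[A^{\mathrm{res}}\mid X]=\mathbb{E}[\mathbb{E}[A-q\mid X,U]\mid X]=0$;
\item $\mathbb{E}[Y^{\mathrm{res}}\mid X]=0$. This follows from the IPCW identity $\mathbb{E}[Y^{\mathrm{res}}\mid V]=\mathbb{E}[G-m\mid V]$ with $V=(X,W,Z,A)$, together with $\mathbb{E}[AG(1;h)\mid X,U]=\mathbb{E}[q\mid X,U]\,\mathbb{E}[f_1\mid X,U]=\mathbb{E}[qf_1\mid X,U]$ and its control-arm analogue. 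These use both bridges plus $(A,Z)\perp(G(a;h),W)\mid X,U$.
\end{itemize}
State this direction class explicitly rather than attempting $Z$-dependent directions.

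\textbf{Two smaller corrections.}
\begin{itemize}
\item The coefficient of $m$ in $Y^{\mathrm{res}}$ is $1/S_C(Y^{(h)})-\int_0^{Y^{(h)}}d\Lambda_C/S_C$. You dropped the $(1-\Delta^{(h)})$ term. This coefficient equals one identically when $S_C=e^{-\Lambda_C}$.
\item The $S_C$/$\Lambda_C$ derivative is not $\int\delta\,(\rho-\rho^{\mathrm{true}})\,dM_C$, which would vanish trivially at the truth. It is an integral of $(G-\rho_A(t))$-type terms. Their conditional mean is zero because $\rho_a(t)=\mathbb{E}[G(a;h)\mid T(a)>t,\,X,W,Z]$ is correct and censoring is conditionally independent. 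The mechanism is correctness of $\rho$, not the martingale property of $M_C$.
\end{itemize}
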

Proposition~\ref{prop:Neyman.orthogonality} implies the loss is only second-order sensitive to the nuisance estimation error.

\begin{proposition}
\label{prop:leaf_oracle}
Let $\mathcal S$ be a fixed subgroup, or a subgroup learned on data independent of the sample used for within-subgroup estimation, and let $n_{\mathcal S}\to\infty$ denote its estimation-sample size. Define
\begin{align*}
\tau_{\mathcal S}^w
=
\frac{
\mathbb{E}[\{A_i-q(Z_i,X_i)\}^2\tau(X_i,U_i)\mid X_i\in\mathcal S]
}{
\mathbb{E}[\{A_i-q(Z_i,X_i)\}^2\mid X_i\in\mathcal S]
}.
\end{align*}
Assume the following:
\begin{enumerate}
    \item \textit{Boundedness:} the outcomes, covariates, and nuisance functions $q$, $m$, $S_C$, $\rho_a$, and $\Lambda_C$ are uniformly bounded, and $S_C$ is bounded away from zero on the time range of interest;
    \item \textit{Residual overlap:} $\mathbb{E}[\{A_i-q(Z_i,X_i)\}^2\mid X_i\in\mathcal S]$ is bounded away from zero;
    \item \textit{Nuisance estimation:} the cross-fitted estimators $\widehat q$, $\widehat m$, $\widehat S_C$, $\widehat\rho_a$, and $\widehat\Lambda_C$ are consistent and each has $L_2$ error $o_p(n_{\mathcal S}^{-1/4})$ on the subgroup.
\end{enumerate}
Then the cross-fitted NC-CSF estimator is asymptotically equivalent to its oracle-nuisance counterpart and
\begin{align*}
\sqrt{n_{\mathcal S}}\,(\widehat\tau_{\mathcal S}-\tau_{\mathcal S}^w)
\overset{d}{\longrightarrow}
N(0,\sigma_{\mathcal S}^2),
\end{align*}
where
\begin{align*}
\sigma_{\mathcal S}^2
=
\frac{
\operatorname{Var}\!\left(
A_i^{\mathrm{res}}
(Y_i^{\mathrm{res}}-\tau_{\mathcal S}^w A_i^{\mathrm{res}})
\mid X_i\in\mathcal S
\right)
}{
\mathbb{E}[(A_i^{\mathrm{res}})^2\mid X_i\in\mathcal S]^2
}.
\end{align*}
\end{proposition}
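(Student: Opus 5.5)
The estimator within $\mathcal S$ is the minimizer of the loss \eqref{eq:loss} with a constant effect. It therefore has the closed form
\begin{align*}
\widehat\tau_{\mathcal S}=\frac{\sum_{i\in\mathcal S}\widehat A_i^{\mathrm{res}}\widehat Y_i^{\mathrm{res}}}{\sum_{i\in\mathcal S}(\widehat A_i^{\mathrm{res}})^2},
\end{align*}
where hats denote residuals built from the cross-fitted nuisances. The oracle counterpart $\tilde\tau_{\mathcal S}$ uses the true $\eta_0$. The plan has two parts. First, establish the asymptotic normality of $\tilde\tau_{\mathcal S}$ by a standard ratio/M-estimation argument. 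Second, show that $\sqrt{n_{\mathcal S}}(\widehat\tau_{\mathcal S}-\tilde\tau_{\mathcal S})=o_p(1)$ using Proposition~\ref{prop:Neyman.orthogonality} together with cross-fitting.

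\textbf{Oracle step.} Write $\tilde\tau_{\mathcal S}-\tau^w_{\mathcal S}$ as the ratio of
\begin{align*}
n_{\mathcal S}^{-1}\sum_{i\in\mathcal S} A_i^{\mathrm{res}}\bigl(Y_i^{\mathrm{res}}-\tau^w_{\mathcal S}A_i^{\mathrm{res}}\bigr)
\quad\text{to}\quad
n_{\mathcal S}^{-1}\sum_{i\in\mathcal S}(A_i^{\mathrm{res}})^2 .
\end{align*}
The numerator summands are i.i.d.\ given $X_i\in\mathcal S$. They have mean zero, which follows by integrating the conditional-mean-zero identity of Proposition~\ref{prop:score} over $X_i\in\mathcal S$. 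That argument shows $\mathbb{E}[A^{\mathrm{res}}Y^{\mathrm{res}}\mid X]=\mathbb{E}[(A^{\mathrm{res}})^2\tau(X,U)\mid X]$, and aggregating over $\mathcal S$ yields exactly the ratio defining $\tau^w_{\mathcal S}$. Their variance is finite by boundedness, since $S_C$ is bounded away from zero and hence $Y^{\mathrm{res}}$ is bounded. The CLT then applies to the numerator. The denominator converges by the LLN to $\mathbb{E}[(A^{\mathrm{res}})^2\mid\mathcal S]$, which is bounded away from zero by residual overlap. Slutsky's lemma then gives $N(0,\sigma^2_{\mathcal S})$ with the stated variance. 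Because $\mathcal S$ is fixed or learned on independent data, we may condition on it throughout.

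\textbf{Nuisance step.} Expand the numerator difference $n_{\mathcal S}^{-1}\sum(\widehat A^{\mathrm{res}}_i\widehat Y^{\mathrm{res}}_i-A^{\mathrm{res}}_iY^{\mathrm{res}}_i)$, with the analogous term $-\tau^w_{\mathcal S}(\cdots)$ for the $A^2$ part. Split it into three pieces:
\begin{enumerate}
\item a centered empirical-process term;
\item a first-order (pathwise derivative) drift term;
\item a second-order remainder.
\end{enumerate}
Cross-fitting makes $\widehat\eta$ independent of the fold on which it is evaluated. Conditional on the training folds, the centered term therefore has variance bounded by $O(\|\widehat\eta-\eta_0\|_2^2)/n_{\mathcal S}$, hence it is $o_p(n_{\mathcal S}^{-1/2})$ by Chebyshev and consistency. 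The first-order drift vanishes by Proposition~\ref{prop:Neyman.orthogonality}, because the derivative has conditional mean zero given $X$ and hence also over $\mathcal S$. The remainder is a sum of products of errors, such as $(\widehat q-q)(\widehat m-m)$, $(\widehat q-q)^2$, $(\widehat S_C^{-1}-S_C^{-1})(\widehat\rho-\rho)$, and $(\widehat\Lambda_C-\Lambda_C)(\widehat\rho-\rho)$. Cauchy--Schwarz bounds each by a product of $L_2$ errors, giving $o_p(n_{\mathcal S}^{-1/2})$. The denominator difference is $o_p(1)$ by the same bounds. Together these give asymptotic equivalence of $\widehat\tau_{\mathcal S}$ and $\tilde\tau_{\mathcal S}$.

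\textbf{Main obstacle.} The hardest part is the second-order remainder for the censoring-corrected $Y^{\mathrm{res}}$. The integral term with $d\widehat\Lambda_C/\widehat S_C$ is not a simple product of errors. I would follow the approach of the CSF analysis \citep{cui2023estimating}. Write $\widehat S_C^{-1}-S_C^{-1}$ via the Duhamel identity, so that the remainder becomes an integral of $(\widehat\rho-\rho)$ against $d(\widehat\Lambda_C-\Lambda_C)$ over $[0,h]$. Then bound this integral using boundedness of the survival functions and the positivity of $S_C$. This step also needs the bridge error $\widehat q-q$ to enter only multiplied by the outcome-side error $(\widehat m-m)$ or $(\widehat Y^{\mathrm{res}}-Y^{\mathrm{res}})$. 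I would verify this by checking that $\mathbb{E}[Y^{\mathrm{res}}\mid X,Z,W]$-type cross terms vanish under \eqref{eq:nc_censored} and \eqref{eq:bridge.function}, in the same way as in the proof of Proposition~\ref{prop:score}.
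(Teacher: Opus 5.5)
Up to your last paragraph, this is the paper's proof. The oracle estimator is a ratio of empirical moments whose numerator is centered at $\tau^w_{\mathcal S}$ by the identity of Proposition~\ref{prop:score} averaged over $\mathcal S$, and it is handled by CLT, LLN with residual overlap, and Slutsky. The feasible estimator is tied to the oracle by cross-fitting, Proposition~\ref{prop:Neyman.orthogonality} for the first-order drift, and product-of-errors remainders at the $o_p(n_{\mathcal S}^{-1/4})$ rates. Your explicit empirical-process term and your Duhamel treatment of the censoring integral are detail the paper leaves implicit.

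\textbf{The check in your last paragraph would fail.} You propose to show that $\widehat q-q$ enters only through products by checking that $\mathbb{E}[Y^{\mathrm{res}}\mid X,Z,W]$-type terms vanish. The assumptions do not give this.
\begin{itemize}
\item They center $A^{\mathrm{res}}$ only given $(X,W)$ or $(X,U)$. So in general $\mathbb{E}[A^{\mathrm{res}}\mid X,Z]=\mathbb{P}(A=1\mid X,Z)-q(Z,X)\neq0$.
\item With $m=qf_1+(1-q)f_0$, which is the choice the proof of Proposition~\ref{prop:score} needs, $\mathbb{E}[Y^{\mathrm{res}}\mid X,Z,W]\neq 0$ as well.
\item The regression $m=\mathbb{E}[G\mid X,W,Z]$ of Section~\ref{sec:method} would kill the $Y^{\mathrm{res}}$ term. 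But then the oracle numerator is no longer centered at $\tau^w_{\mathcal S}$, and the $A^{\mathrm{res}}$ term remains.
\end{itemize}

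Redo the Proposition~\ref{prop:score} computation with a perturbation direction $\delta(Z,X)$ for $q$, holding $m$ fixed as in the paper's $\eta$. The first-order term is
\[
-\mathbb{E}\bigl[\mathbb{E}\{\delta(Z,X)(A-q)\mid X,U\}\,\{\tau(X,U)-2\tau^w_{\mathcal S}\}\mid X\in\mathcal S\bigr].
\]
This vanishes for $X$-measurable $\delta$ but not in general. Suppose $A\perp Z\mid X,U$, as in the paper's simulation design, and take $\delta=q$. The inner factor is then $-\operatorname{Var}(q\mid X,U)$, and the term equals $-\tau^w_{\mathcal S}\,\mathbb{E}[\operatorname{Var}(q\mid X,U)\mid X\in\mathcal S]\neq0$, even when $\tau(X,U)\equiv\tau^w_{\mathcal S}\neq 0$. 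With only $\widehat q-q=o_p(n_{\mathcal S}^{-1/4})$, such a drift need not be $o_p(n_{\mathcal S}^{-1/2})$.

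The paper does not resolve this either. It cites Proposition~\ref{prop:Neyman.orthogonality}, whose proof takes $\mathbb{E}[\,\cdot\mid X]$ of $-(Y^{\mathrm{res}}-2\tau A^{\mathrm{res}})$ without the direction factor, so it only covers $X$-measurable perturbations. Similarly, $\mathbb{E}[\delta_m A^{\mathrm{res}}]=0$ holds only for $(X,W)$-measurable $\delta_m$. The obstacle you flagged is genuine, but closing it requires an extra condition or a modified score, not the centering check you propose.
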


In Proposition~\ref{prop:leaf_oracle}, if $\tau(x,u)$ is constant on $\mathcal S$, then $\tau_{\mathcal S}^w$ equals that shared subgroup effect. However, no homogeneity assumption is needed for the asymptotic normality around the weighted within-subgroup target.

\begin{remark}\label{rmk:splitting}
Proposition~\ref{prop:leaf_oracle} considers the estimation and inference within a fixed or honest subgroup, following the sample-splitting principle used by honest causal trees and forests \citep{athey2016recursive,wager2018estimation}. It does not prove that a recursively learned forest recovers a unique oracle partition. Exact partition recovery generally requires additional separation and structural conditions, such as minimum jumps between adjacent regions \citep{madridpadilla2021lattice}.
\end{remark}

\section{Simulation}\label{sec:simulations}

\subsection{Methods for Comparison}

We distinguish a matched ablation from external competitors. This distinction is important because the methods differ in software, nuisance estimators, and forest engines.

\textbf{Matched ablations and oracle references.}
In the uncensored setting, \emph{Uncensored Baseline} applies \texttt{CausalForestDML} \citep{battocchi2019econml} directly to $[X,W,Z]$, while \emph{Uncensored NC-CSF} adds the bridge-adjusted residualization. In the censored setting, \emph{Censored Baseline} is NC-CSF with the bridge layer removed: it retains the same preprocessing, censoring-robust residual construction, event and censoring nuisances, final features, train/test splits, and forest family. It is therefore the comparison that isolates the contribution of the negative-control bridge. The oracle variants have access to $U$ and serve only as synthetic upper-bound references.

\textbf{External survival and HTE learners.}
We additionally evaluate the R implementation of causal survival forests in \texttt{grf} \citep{cui2023estimating}, random survival forests used as T- and S-learners \citep{ishwaran2008random}, and a discrete-time survival BART S-learner \citep{sparapani2016bart}. Every external learner receives $[X,W,Z]$ as ordinary pre-treatment covariates; NC-CSF differs by assigning $W$ and $Z$ their negative-control roles. Thus no external method is disadvantaged by lack of feature access. The no-negative-control arm in our simulation is the matched Censored Baseline, not the \texttt{grf} implementation. We also report population-level AIPW and TMLE comparisons in Appendix~\ref{sec:marginal_baselines}, while noting that they target a marginal ATE rather than an individual CATE.

\textbf{Implementation and metrics.}
NC-CSF uses logistic regression for the treatment bridge, ExtraTrees \citep{geurts2006extratrees} for outcome-bridge regressions, Cox models \citep{cox1972regression} for censored nuisances, and EconML's \texttt{CausalForestDML} for the final forest. All benchmark methods use identical training/test splits, horizons, and target values. We summarize RMSE and MAE over the common $12\times22=264$ censored configurations; pseudocode for the matched estimators appears in Appendix~\ref{alg:appendix_uncensored_baseline}--\ref{alg:appendix_oracle_comparator}.

\begin{table}[H]
\captionsetup{labelfont={color=black},textfont={color=black}}
\centering
\small
\setlength{\tabcolsep}{4.5pt}
\renewcommand{\arraystretch}{1.0}
\caption{Average error over the $12\times22=264$ censored benchmark configurations. External learners receive $[X,W,Z]$ as ordinary covariates; lower is better.}
\label{tab:censored-summary}
\begin{tabular}{lcccc}
\toprule
\textbf{Model}
& \multicolumn{2}{c}{\textbf{RMST}}
& \multicolumn{2}{c}{\textbf{Survival Probability}} \\
\cmidrule(lr){2-3}\cmidrule(lr){4-5}
& \textbf{RMSE} & \textbf{MAE} & \textbf{RMSE} & \textbf{MAE} \\
\midrule
NC-CSF Oracle     & 0.0516 & 0.0380 & 0.0361 & 0.0275 \\
NC-CSF            & 0.0994 & 0.0768 & 0.0613 & 0.0491 \\
\texttt{grf} R-CSF & 0.1211 & 0.0971 & 0.0686 & 0.0581 \\
Censored Baseline & 0.1662 & 0.1247 & 0.0788 & 0.0642 \\
RSF T-learner     & 0.1773 & 0.1416 & 0.0796 & 0.0721 \\
RSF S-learner     & 0.1637 & 0.1274 & 0.0889 & 0.0638 \\
Survival BART     & 0.1762 & 0.1318 & 0.0699 & 0.0657 \\
\bottomrule
\end{tabular}
\end{table}

NC-CSF has the lowest error among all feasible methods. Relative to the matched Censored Baseline, it reduces RMST RMSE by $40.2\%$ and survival-probability RMSE by $22.2\%$; relative to the strongest external tree-ensemble comparator, \texttt{grf} R-CSF, the reductions are $17.9\%$ and $10.6\%$. The component ablation in Appendix~\ref{tab:appendix_final_feature_ablation} shows that the bridge adjustment is the largest contributor: adding the bridge while holding the forest and censoring pipeline fixed reduces RMST RMSE from $0.1662$ to $0.1081$ and survival-probability RMSE from $0.0788$ to $0.0675$; nuisance-summary augmentation yields the remaining improvement to $0.0994$ and $0.0613$.

\subsection{Data-Generating Mechanism}\label{sec:data.generating.short}

We generate $n$ i.i.d.\ samples with covariates $X_i$ and latent confounder $U_i$, where $X_i \sim \mathcal{N}(0,I_p)$ and $U_i \sim \mathcal{N}(0,1)$. The negative-control proxies $Z_i$ and $W_i$ are constructed as noisy linear functions of $(X_i,U_i)$, so that both carry information about the latent confounder. Treatment assignment follows a Bernoulli model with logistic mean $\mathbb{P}(A_i=1\mid X_i,U_i)=\sigma(b_0+\gamma_u U_i+\alpha^\top X_i)$. Potential event times are generated from a Weibull proportional hazards model, with $T_i(a)\mid X_i,U_i \sim \mathrm{Weibull}(k,\lambda \exp(-\eta_i(a)/k))$ and linear predictor $\eta_i(a)=\beta_t^\top X_i+\beta_u U_i+\tau a$. Because the proxy construction is linear-Gaussian, the conditional law of $U_i$ given $(X_i,Z_i,W_i)$ is available in closed form, which yields an explicit expression for the observed-data conditional treatment effect $\mathbb{E}[T_i(1)-T_i(0)\mid X_i,Z_i,W_i]$ after integrating out $U_i$. 
Appendix~\ref{sec:proxy_stress} further replaces the proxy mechanism with nonlinear, heavy-tailed, noninjective, and binary-coarsened alternatives.

We evaluate performance across regimes varying proxy quality, confounding strength, treatment effect size, and covariate dimensionality, under both linear and nonlinear settings, which results in $12\times22$ configurations (Table~\ref{tab:synthetic-configurations} in Appendix).

\subsection{Results with Uncensored Data}

We evaluate estimation accuracy using root mean squared error, 
$\mathrm{RMSE}=\sqrt{n^{-1}\sum_{i=1}^n(\widehat{\tau}(X_i)-\tau_i)^2}$, 
and mean absolute error, 
$\mathrm{MAE}=n^{-1}\sum_{i=1}^n|\widehat{\tau}(X_i)-\tau_i|$.
NC-CSF closely matches the oracle across all $n$ and covariate dimensions, while the baseline exhibits higher error, particularly in small-sample and high-dimensional regimes. Table~\ref{tab:uncensored-summary} in Appendix further shows that NC-CSF improves over the Uncensored Baseline in both RMSE and MAE and remains closer to the oracle; additional results appear in Appendix Figure~\ref{fig:appendix_mae_uncensored}, Figure~\ref{fig:appendix_rmse_uncensored}, and Table~\ref{tab:mc_uncertainty}.

\begin{figure}[tbp]
\centering

\begin{minipage}[t]{0.45\linewidth}
    \raggedright
    \includegraphics[width=\linewidth]{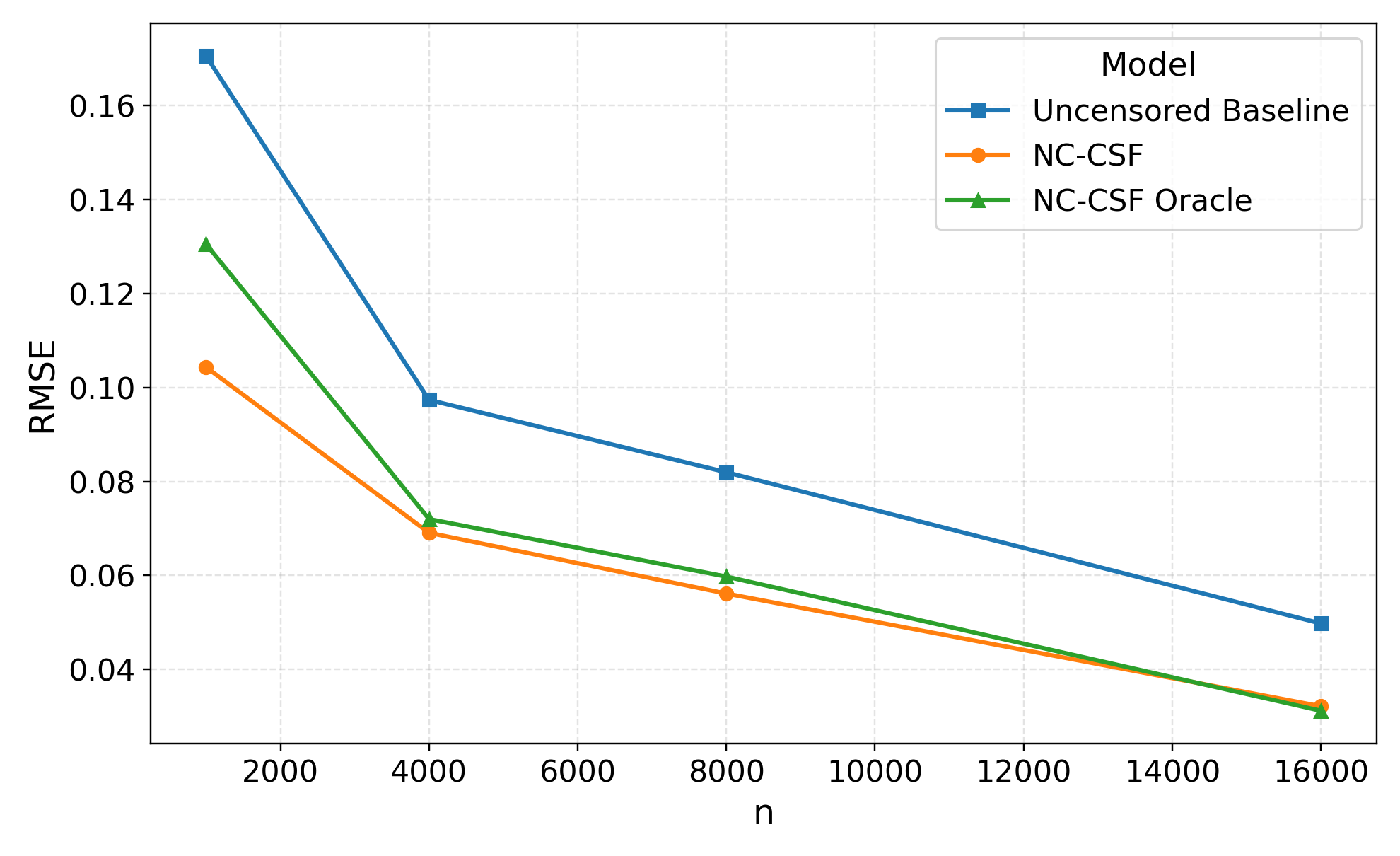}
\end{minipage}
\hfill
\begin{minipage}[t]{0.45\linewidth}
    \raggedright
    \includegraphics[width=\linewidth]{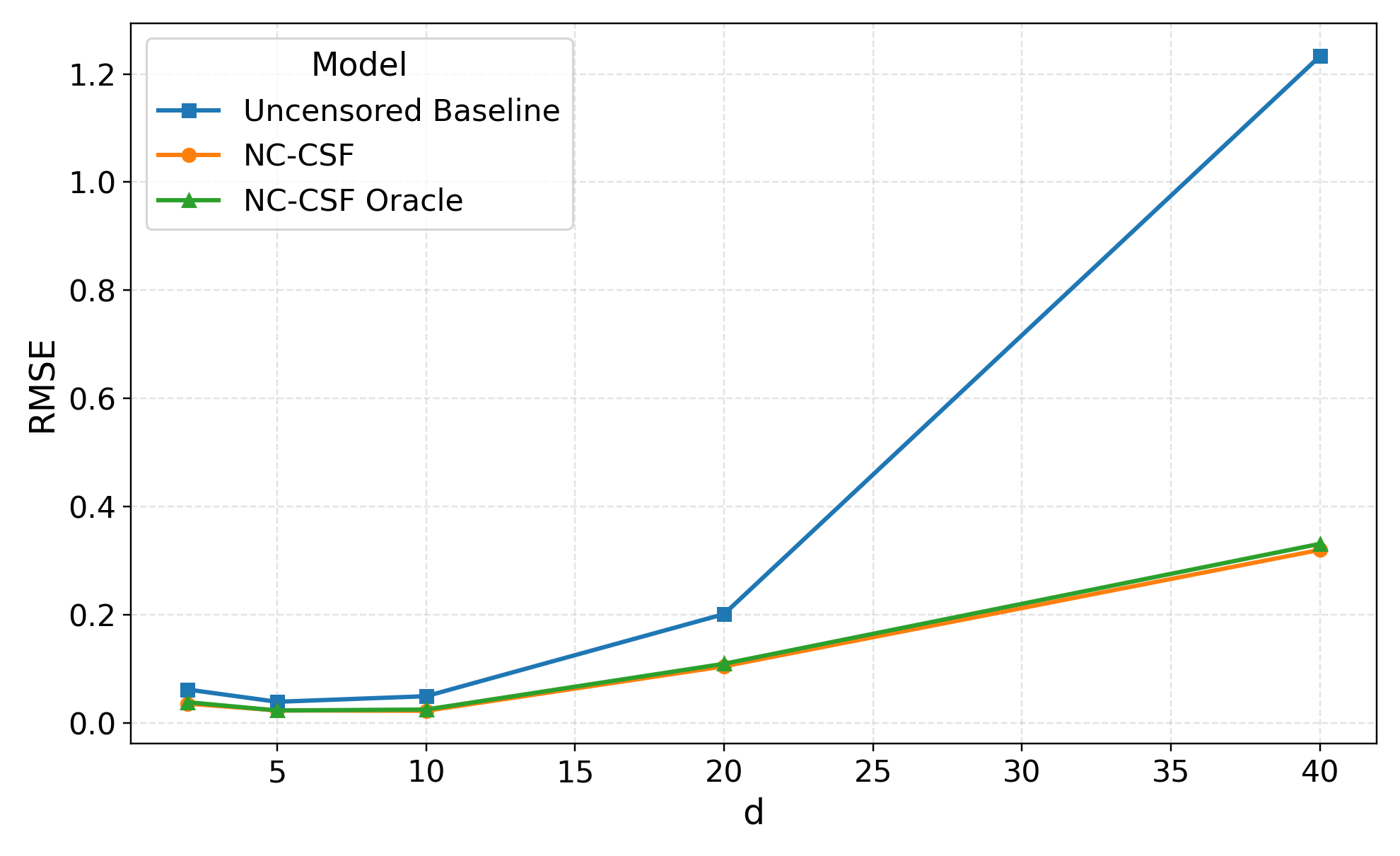}
\end{minipage}
\caption{
\textbf{Left:} RMSE vs.\ sample size $n$ under Case 10 (nonlinear treatment and outcome, informative proxies, weak confounding). 
\textbf{Right:} RMSE vs.\ covariate dimension $d$ under Case 12 (nonlinear, treatment/outcome, non-informative proxies, weak confounding). 
The associated Monte Carlo standard errors and confidence intervals are reported in Appendix Table~\ref{tab:mc_uncertainty}.
}
\label{fig:uncensored-main}
\vspace{-0.5cm}
\end{figure}

\subsection{Results with Censored Data}

Following the uncensored experiments, we evaluate performance under right censoring. We fix the censoring rate at 35\% across all configurations and calibrate the Weibull scale parameter accordingly to achieve the desired censoring level. Table~\ref{tab:censored-summary} summarizes results for both RMST (Restricted Mean Survival Time) and survival probability. These results indicate that the benefits of proxy-based adjustment become more pronounced in the presence of censoring, where standard approaches are more susceptible to bias and instability.

\begin{figure}[tbp]
\centering
\includegraphics[width=0.8\linewidth]{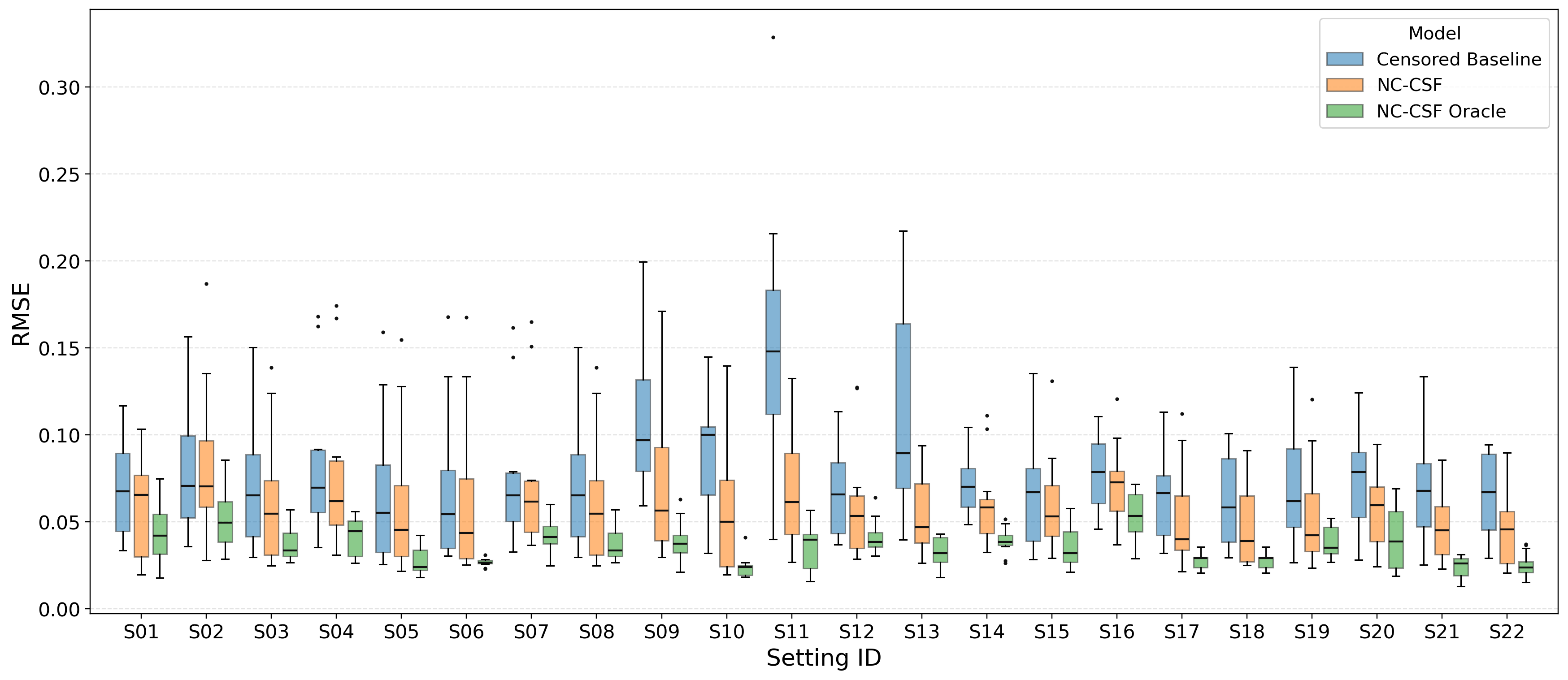}
\caption{RMSE for survival probability under censoring across settings S01--S22. Boxes show variation across the 12 base cases. NC-CSF achieves lower and more stable RMSE than the baseline and remains close to the oracle, with the largest gains in harder settings (S10--S11, S17--S22).}
\label{fig:censored-rmse-main}
\end{figure}

To assess heterogeneity, we compute subgroup bias on an independent test set of size 20{,}000 for each setting--case pair. Subgroups are predefined using quantile bins of selected covariates, yielding 20 groups per configuration: three bins for $X_0$, three for $X_1$, five for the true CATE, and nines for the $X_0 \times X_1$ interaction bins.

After training on an independent dataset, we predict CATEs on the test set and compute, for each subgroup $g$, the bias $\mathrm{Bias}(g)=|g|^{-1}\sum_{i\in g}\{\widehat{\tau}(X_i)-\tau_i\}$, excluding groups below a minimum size. These plots highlight local CATE accuracy beyond aggregate ATE performance. Figure~\ref{fig:subgroup_bias} shows one example; additional results appear in Appendix~\ref{fig:appendix_mae_rmst_censored}--\ref{fig:appendix_subbias_survprob_x0x1}.

\begin{figure}[tbp]
\centering
\includegraphics[width=0.8\linewidth]{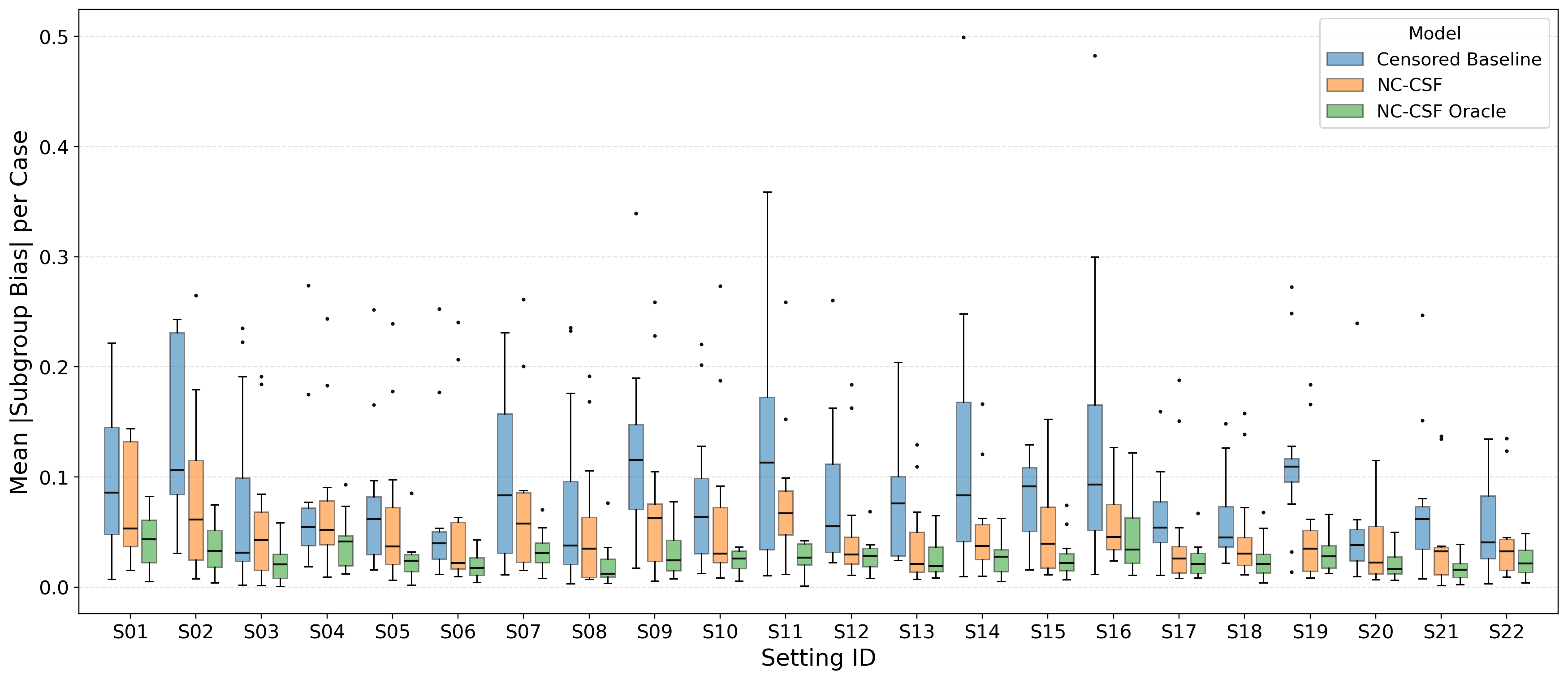}
\caption{Subgroup bias for RMST estimation across settings S01--S22 under censoring, using $X_1$-based subgroups. Each box summarizes the distribution of subgroup-level biases across the 12 base cases within a setting. NC-CSF consistently reduces subgroup bias relative to the baseline and remains closer to the oracle, indicating improved heterogeneous treatment effect estimation.}
\label{fig:subgroup_bias}
\end{figure}

\section{Real Data}\label{sec:real.data}

\subsection{ACTG175 Trial}
\label{sec:actg175}

We apply NC-CSF to the ACTG175 HIV trial \cite{hammer1996trial}, using the same ddI versus ZDV+ddI subset and RMST target as \cite{cui2023estimating}. The analysis contains 1{,}083 subjects, with \(A=0\) denoting ddI monotherapy (561 controls) and \(A=1\) denoting ZDV+ddI (522 treated). Following \cite{cui2023estimating}, we analyze this two-arm subset as observational and use RMST at \(h=1{,}000\) days, which avoids the late follow-up region dominated by censoring. For the negative-control specification, we set \(W=(\texttt{cd420},\texttt{cd820})\), \(Z=(\texttt{race},\texttt{preanti})\), and use 12 baseline variables as \(X\). Since the negative-control restrictions are not directly testable, these proxy choices should be interpreted as exploratory modeling choices. For comparison, we re-ran the causal survival forest implementation in the R \texttt{grf} package (R-CSF) on the same subset, covariates, and horizon, rather than reusing the original published figure.

\begin{figure}[!htbp]
\centering
\begin{minipage}[t]{0.48\linewidth}
    \centering
    \includegraphics[width=\linewidth]{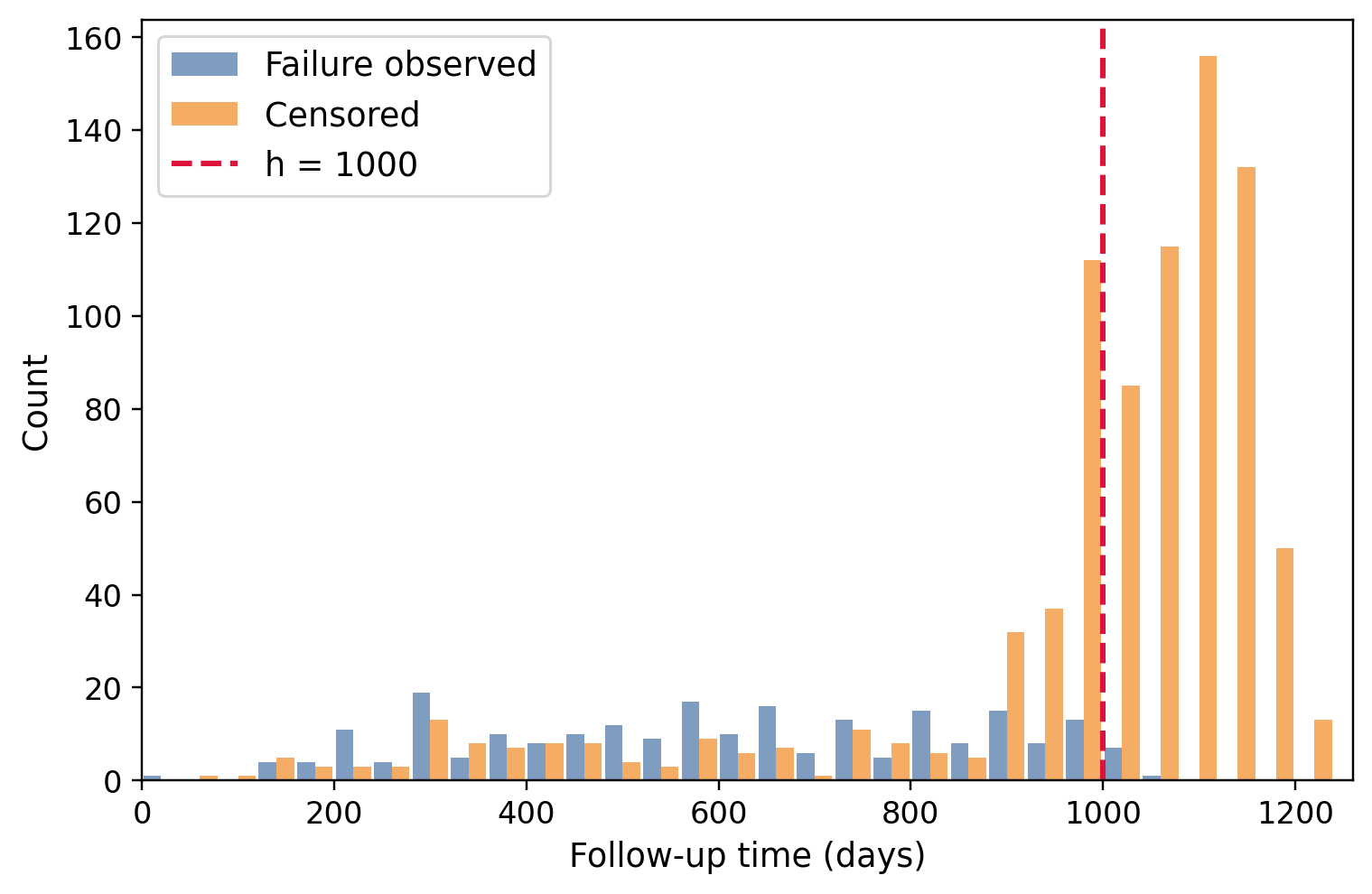}
\end{minipage}
\hfill
\begin{minipage}[t]{0.48\linewidth}
    \centering
        \includegraphics[width=\linewidth]{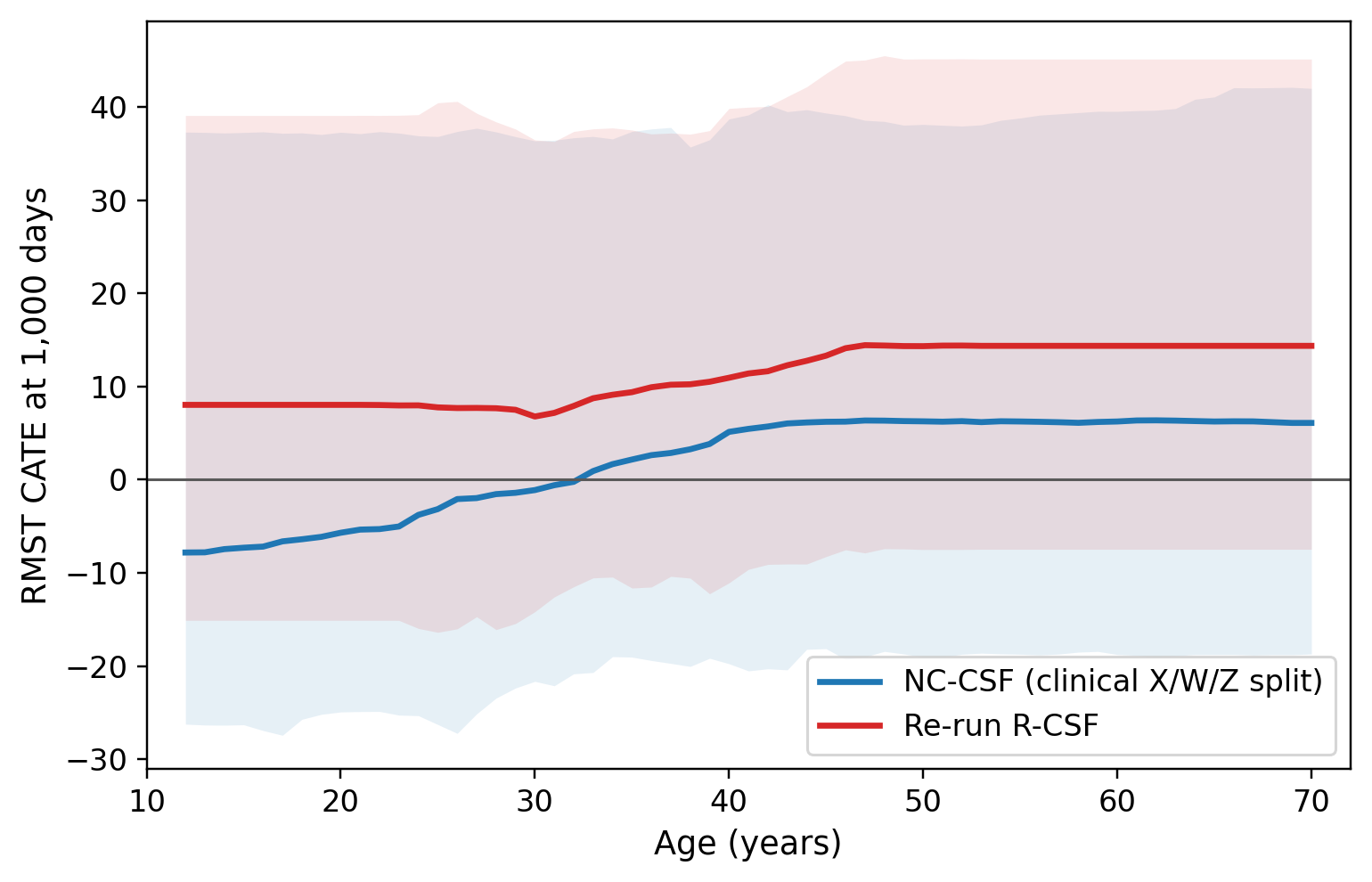}
\end{minipage}
\caption{\textbf{Left:} ACTG175 follow-up times with RMST horizon \(h=1{,}000\). \textbf{Right:} age-specific RMST effect estimates from NC-CSF and our re-run R-CSF baseline, holding other covariates fixed. Shaded bands are pointwise percentile-bootstrap intervals.}
\label{fig:actg175_followup_age}
\vspace{-0.25cm}
\end{figure}

The right panel of Figure~\ref{fig:actg175_followup_age} shows the main fitted pattern. NC-CSF estimates a weaker, slightly negative fitted RMST effect of ZDV+ddI for younger reference profiles: \(-5.69\) days at age 20 and \(-1.11\) days at age 30. The estimate becomes positive by middle age, reaching \(5.13\) days at age 40 and peaking at \(6.37\) days around age 62. By contrast, the re-run R-CSF curve is positive across the displayed age range and peaks at \(14.43\) days around age 47. Thus NC-CSF suggests a more age-dependent and more conservative incremental benefit from adding ZDV to ddI.
We interpret the negative estimates for younger profiles as unfavorable incremental RMST benefit from ZDV+ddI relative to ddI alone, not as direct evidence of ZDV toxicity. This direction is consistent with the ACTG175 individualized-treatment-rule analysis of \cite{lu2013variable}, where age entered positively in a rule favoring ZDV+ddI over ddI, although that study used CD4 count at $20\pm5$ weeks rather than RMST. The pointwise intervals are wide because 1{,}083 subjects from a trial powered for a marginal contrast provide limited information for an age-resolved 1{,}000-day RMST effect; the re-run R-CSF intervals are similarly wide. We therefore use the paired difference between curves fitted to the same subset and seeds as an exploratory comparison, not as a treatment rule. All three proxy variants in Appendix~\ref{sec:appendix_actg175_splits} preserve the pattern of smaller fitted benefit at younger ages.

\subsection{Right Heart Catheterization}
\label{sec:rhc}

As a second real-data check, we use the right heart catheterization (RHC) data from the SUPPORT study of seriously ill hospitalized adults \cite{connors1996rhc}. The dataset records whether a patient received RHC within 24 hours of admission, 30-day survival, and 72 baseline variables covering demographics, comorbidities, admission diagnoses, and physiological and laboratory measurements. The treatment indicates RHC within 24 hours of admission and the outcome is binary 30-day survival. Following the negative-control analysis of \cite{kallus2021negative}, we set \(Z=(\texttt{pafi1},\texttt{paco21})\), \(W=(\texttt{ph1},\texttt{hema1})\), and use the remaining 68 encoded baseline variables as \(X\). The sample contains 5{,}735 patients, with 2{,}184 treated and 3{,}551 controls.

\begin{table}[H]
\vspace{-0.25cm}
\centering
\caption{RHC average treatment-effect estimates for binary 30-day survival. Negative values indicate lower 30-day survival under RHC. The literature estimates are reported in \cite{kallus2021negative}. Our estimate is the average of the uncensored final model's subject-level effect predictions.}
\label{tab:rhc_results}
\setlength{\tabcolsep}{5pt}
\vspace{0.25cm}
\renewcommand{\arraystretch}{1.1}
\begin{tabular}{lccc}
\toprule
\textbf{Estimator} & \textbf{ATE estimate} & \textbf{SE} & \textbf{95\% CI} \\
\midrule
Final Non-Censored Model & -0.0433 & 0.0110 & [-0.0649, -0.0217] \\
\midrule
Kallus et al.\ DR & -0.0219 & 0.0051 & [-0.0319, -0.0119] \\
Kallus et al.\ DR(sta) & -0.0271 & 0.0043 & [-0.0355, -0.0188] \\
CPSMT20 linear bridge & -0.0607 & 0.0055 & [-0.0714, -0.0500] \\
VV15 no-unmeasured-confounding & -0.0612 & 0.0141 & [-0.0889, -0.0335] \\
\bottomrule
\end{tabular}
\end{table}
\vspace{-0.5cm}

All estimates in Table~\ref{tab:rhc_results} imply lower survival under RHC. Our estimate, \(-0.0433\), lies between the flexible minimax bridge estimates of \cite{kallus2021negative} and the more negative linear/no-hidden-confounding estimates. Since RHC is observational and lacks ground truth, we interpret this result as a consistency check rather than a definitive accuracy comparison.

\subsection{MIMIC-IV pharmacologic positive and inert controls}
\label{sec:mimic}

We analyze MIMIC-IV \citep{johnson2023mimiciv} to study whether combining tacrolimus with a CYP3A inhibitor increases the risk of acute kidney injury (AKI) within seven days. This interaction is well documented \citep{fda2023prograf,birdwell2015cpic,naesens2009calcineurin}. We define survival as remaining free of AKI, so negative estimates indicate harm. Pretreatment laboratory values are used as proxies for underlying illness severity, while administrative variables capture differences in prescribing patterns. The analysis includes \(2{,}257\) records in the positive-control group with CYP3A inhibition and \(1{,}398\) records in the inert-control group, where the added drug has no CYP3A activity.

\begin{table}[H]
\captionsetup{labelfont={color=black},textfont={color=black}}
\centering
\setlength{\tabcolsep}{4pt}
\renewcommand{\arraystretch}{1.0}
\caption{MIMIC-IV average treatment-effect estimates for 7-day AKI-free survival and restricted mean AKI-free time in the positive-control group comparing tacrolimus with versus without a concomitant CYP3A inhibitor. Negative values indicate lower AKI-free survival or fewer AKI-free days under concomitant CYP3A inhibition. The reported estimates are averages of the corresponding methods' subject-level effect predictions.
}
\label{tab:mimic_primary}
\begin{tabular}{lccccc}
\toprule
& \multicolumn{2}{c}{\textbf{AKI-free survival at $7$ days}} & & \multicolumn{2}{c}{\textbf{AKI-free RMST through $7$ days}} \\
\cmidrule(lr){2-3}\cmidrule(lr){5-6}
\textbf{Method} & \textbf{Estimate} & \textbf{95\% CI} & \textbf{E-value} & \textbf{Estimate} & \textbf{95\% CI} \\
\midrule
RSF-S  & $-0.0033$ & $[-0.0128,+0.0018]$ & 1.15 & $-0.0205$ & $[-0.0773,+0.0017]$ \\
BART   & $-0.0462$ & $[-0.1529,+0.0311]$ & --   & $-0.2365$ & $[-0.7806,+0.1800]$ \\
RSF-T  & $-0.0758$ & $[-0.1245,-0.0401]$ & 2.17 & $-0.4279$ & $[-0.6396,-0.2223]$ \\
CSF    & $-0.1176$ & $[-0.1939,-0.0758]$ & 2.69 & $-0.6232$ & $[-0.9129,-0.4440]$ \\
\textbf{NC-CSF} & $\mathbf{-0.2410}$ & $\mathbf{[-0.2857,-0.1816]}$ & $\mathbf{4.40}$ & $\mathbf{-1.6248}$ & $\mathbf{[-2.1263,-1.2023]}$ \\
\bottomrule
\end{tabular}
\end{table}

\begin{table}[H]
\captionsetup{labelfont={color=black},textfont={color=black}}
\centering
\setlength{\tabcolsep}{4pt}
\renewcommand{\arraystretch}{1.0}
\caption{Results for the inert group: the added drug has no CYP3A activity. See the caption of Table~\ref{tab:mimic_primary} for details.}
\label{tab:mimic_inert}
\begin{tabular}{lcccc}
\toprule
& \multicolumn{2}{c}{\textbf{AKI-free survival at $7$ days}} & \multicolumn{2}{c}{\textbf{AKI-free RMST through $7$ days}} \\
\cmidrule(lr){2-3}\cmidrule(lr){4-5}
\textbf{Method} & \textbf{Estimate} & \textbf{95\% CI} & \textbf{Estimate} & \textbf{95\% CI} \\
\midrule
RSF-S  & $+0.0053$ & $[+0.0001,+0.0189]$ & $+0.0224$ & $[-0.0051,+0.0986]$ \\
BART   & $+0.0092$ & $[-0.0694,+0.0965]$ & $+0.0473$ & $[-0.3791,+0.5054]$ \\
RSF-T  & $+0.0147$ & $[-0.0187,+0.0492]$ & $-0.0221$ & $[-0.1486,+0.0990]$ \\
CSF    & $+0.0013$ & $[-0.0414,+0.0466]$ & $+0.0056$ & $[-0.1574,+0.2161]$ \\
\textbf{NC-CSF} & $\mathbf{-0.0323}$ & $\mathbf{[-0.1403,+0.0171]}$ & $\mathbf{+0.0265}$ & $\mathbf{[-0.2453,+0.4034]}$ \\
\bottomrule
\end{tabular}
\end{table}

In the positive-control group, NC-CSF gives the strongest evidence of harm: a \(0.2410\) reduction in seven-day AKI-free survival and \(1.6248\) fewer AKI-free days, more than twice the corresponding CSF estimates. RSF-S and BART do not clearly detect harm, while RSF-T and CSF give smaller effects. In the inert group, NC-CSF finds no clear effect on either outcome, supporting its ability to detect the known interaction without producing a signal under the inert substitution.

Among the four prespecified comorbidity subgroups, only prior liver disease has a direct biological link through hepatic CYP3A metabolism. NC-CSF estimates a larger harm for patients with prior liver disease than for those without it (\(-0.3532\) versus \(-0.2131\)), whereas the other three subgroup comparisons show no clear differences. Appendix~\ref{sec:mimic_subgroups} reports the liver-disease results.

\section{Discussion}\label{sec:discussion}

In this paper, we provide a practical approach to survival analysis with flexible heterogeneous treatment-effect modeling in the presence of unobserved confounding. Empirically, the proposed method performs favorably in extensive experiments and yields interesting findings in real data applications. We also provide an off-the-shelf Python package to improve practical usability.

We discuss limitations. First, $\tau^w(x)$ is the general estimand and need not equal the conventional CATE when latent treatment-effect heterogeneity covaries with bridge-residualized treatment variation. Second, negative-control validity is application-specific and cannot be verified from the observed law alone. Third, completeness does not resolve finite-sample ill-posedness: orthogonality limits downstream sensitivity to sufficiently accurate nuisance estimates but does not create accurate bridge estimates when the moment operator is poorly conditioned. Regularized bridge learning and diagnostics for proxy strength therefore remain practically relevant.

Several directions remain for future work. 
From the causal perspective, since the resulting target is weighted in general, it would be useful to further study the interpretation and behavior of different estimands.
On the theoretical side, one direction is to analyze the recursive splitting behavior of both the oracle forest and its feasible counterpart with estimated nuisance functions.

\newcommand{\markrevisionbib}[1]{\expandafter\def\csname revisionbib@#1\endcsname{}}
\markrevisionbib{tchetgen2020proximal}
\markrevisionbib{lipsitch2010negative}
\markrevisionbib{shi2020selective}
\markrevisionbib{ishwaran2008random}
\markrevisionbib{sparapani2016bart}
\markrevisionbib{athey2016recursive}
\markrevisionbib{madridpadilla2021lattice}
\markrevisionbib{johnson2023mimiciv}
\markrevisionbib{fda2023prograf}
\markrevisionbib{birdwell2015cpic}
\markrevisionbib{naesens2009calcineurin}
\let\originalbibitem\bibitem
\renewcommand{\bibitem}[2][]{%
  \ifcsname revisionbib@#2\endcsname\color{black}\else\color{black}\fi
  \if\relax\detokenize{#1}\relax
    \originalbibitem{#2}%
  \else
    \originalbibitem[#1]{#2}%
  \fi
}

\bibliographystyle{plain}
\bibliography{ref}

\appendix

\section{Additional Algorithms}

This section provides pseudocode for the estimators used in the simulation comparisons. Algorithm~\ref{alg:uncensored_final_training} describes the uncensored NC-CSF pipeline, while Algorithms~\ref{alg:appendix_uncensored_baseline} and~\ref{alg:appendix_censored_baseline} give the corresponding observed-data baselines, which use $(X,W,Z)$ directly and remove the negative-control bridge layer. For the censored setting, the matched baseline keeps the same preprocessing, survival nuisance estimation, censoring correction, residual construction, and final causal-forest step as Censored NC-CSF, isolating the effect of proxy-based bridge adjustment. Algorithm~\ref{alg:censored_final_training_detailed} extends Algorithm~\ref{alg:censored_final_training} with implementation details, and Algorithm~\ref{alg:appendix_oracle_comparator} gives the synthetic-only oracle comparator, which has access to $U$ and DGP-specific nuisance functions and serves only as an upper-bound reference. 
We additionally compare against \texttt{grf::causal\_survival\_forest}, RSF T-/S-learners, and survival BART in the common simulation benchmark; the \texttt{grf} implementation is also used in the ACTG175 analysis.

\begin{algorithm}
\caption{Censored NC-CSF Training}
\label{alg:censored_final_training_detailed}
\begin{algorithmic}[1]
\Require Data $\{(X_i,W_i,Z_i,A_i,Y_i,\Delta_i)\}_{i=1}^n$; target type; horizon $h$; number of folds $K$; clipping levels $\epsilon_q,\alpha_{\rm br},\alpha_{\rm res}$
\Ensure Fitted effect predictor $\widehat\tau(\cdot)$

\State Partition $\{1,\ldots,n\}$ into cross-fitting folds $\mathcal I_1,\ldots,\mathcal I_K$.

\For{$k=1,\ldots,K$}
    \State Let $\mathcal I_{-k}=\{1,\ldots,n\}\setminus\mathcal I_k$ and set $(Y_i^\star,\Delta_i^\star)=(Y_i^{(h)},\Delta_i^{(h)})$ for RMST, and $(Y_i^\star,\Delta_i^\star)=(Y_i,\Delta_i)$ for survival probability.

    \State Fit a censoring Cox model on $\mathcal I_{-k}$ with covariates $(X,W,Z,A)$,
    times $T^\star$, and censoring indicators $1-\Delta^\star$, yielding the censoring survival
    $\widehat S_C^{(-k)}$ and cumulative censoring hazard increments
    $d\widehat\Lambda_C^{(-k)}$.

    \State Construct target-specific bridge pseudo-outcomes $\widetilde Y_i^{\rm bridge}$ on $\mathcal I_{-k}$,
    and clip them to empirical $(\alpha_{\rm br},1-\alpha_{\rm br})$ quantiles.

    \State Fit the treatment bridge $\widehat q^{(-k)}$ using logistic regression on $\mathcal I_{-k}$,
    with labels $A$ and bridge covariates $(X,Z)$.

    \For{$a\in\{0,1\}$}
        \State Let $\mathcal I_{-k,a}=\{i\in\mathcal I_{-k}:A_i=a\}$.

        \State Fit the outcome bridge $\widehat f_a^{(-k)}$ using a random forest regressor on $\mathcal I_{-k,a}$,
        with labels $\widetilde Y^{\rm bridge}$ and bridge covariates $(X,W)$.

        \State Fit event-survival model $\widehat S_a^{(-k)}$ using a Cox model on $\mathcal I_{-k,a}$,
        with covariates $(X,W,Z)$, times $T^\star$, and event indicators $\Delta^\star$.
    \EndFor

    \State Construct continuation functions $\widehat\rho_a^{(-k)}(t;h)$ from $\widehat S_a^{(-k)}$, $a\in\{0,1\}$:
    \Statex \hspace{\algorithmicindent}
    $\widehat\rho_a^{(-k)}(t;h)=
    t+\int_t^h
    \frac{\widehat S_a^{(-k)}(u\mid X,W,Z)}
         {\widehat S_a^{(-k)}(t\mid X,W,Z)}\,du$
    for RMST, and
    $\widehat\rho_a^{(-k)}(t;h)=
    \frac{\widehat S_a^{(-k)}(h\mid X,W,Z)}
         {\widehat S_a^{(-k)}(t\mid X,W,Z)}$
    for survival probability.

    \For{$i\in\mathcal I_k$}
        \State Predict
        $\widehat q_i=\textsc{Clip}\{\widehat q^{(-k)}(X_i,Z_i),\epsilon_q,1-\epsilon_q\}$,
        $\widehat f_{a,i}=\widehat f_a^{(-k)}(X_i,W_i)$ for $a\in\{0,1\}$.

        \State Form
        $\widehat m_i=\widehat q_i\widehat f_{1,i}+(1-\widehat q_i)\widehat f_{0,i}$,
        and $A_i^{\rm res}=A_i-\widehat q_i$.

        \State Compute $\widetilde Y_i^{\rm res}$ from Eq.~\eqref{eq:survival_residual}
        using $\widehat m_i$, $\widehat\rho_{A_i}^{(-k)}$,
        $\widehat S_C^{(-k)}$, and $d\widehat\Lambda_C^{(-k)}$.

        \State Form
        $X_i^{\rm final}
        =(X_i,W_i,Z_i,\widehat q_i,\widehat f_{1,i},\widehat f_{0,i},
        \widehat m_i,\widehat r_{1,i},\widehat r_{0,i},\widehat r_{1,i}-\widehat r_{0,i})$,
        where $\widehat r_{a,i}$ is the target-specific event-survival summary.
    \EndFor

    \State Clip $\{\widetilde Y_i^{\rm res}:i\in\mathcal I_k\}$ to empirical
    $(\alpha_{\rm res},1-\alpha_{\rm res})$ quantiles, yielding $Y_i^{\rm res}$.
\EndFor

\State Fit the final causal forest $\widehat{\mathcal F}$ on the cross-fitted training triples
$\{(X_i^{\rm final},A_i^{\rm res},Y_i^{\rm res})\}_{i=1}^n$.

\State Refit only the nuisance models on the full training sample. For a new point $(x,w,z)$,
construct $X_{\rm new}^{\rm final}$ using the same formula above and return
$\widehat\tau(x)=\widehat{\mathcal F}(X_{\rm new}^{\rm final})$.

\State \Return $\widehat\tau(\cdot)$.
\end{algorithmic}
\end{algorithm}

\begin{algorithm}
\caption{Uncensored NC-CSF Training}
\label{alg:uncensored_final_training}
\begin{algorithmic}[1]
\Require Data $\{(X_i,W_i,Z_i,A_i,Y_i^{\rm unc})\}_{i=1}^n$; number of folds $K$; clipping levels $\epsilon_q,\alpha_y,\alpha_{\rm res}$
\Ensure Fitted effect predictor $\widehat\tau(\cdot)$

\State Partition $\{1,\ldots,n\}$ into cross-fitting folds $\mathcal I_1,\ldots,\mathcal I_K$.

\For{$k=1,\ldots,K$}
    \State Let $\mathcal I_{-k}=\{1,\ldots,n\}\setminus\mathcal I_k$.

    \State Clip $\{Y_i^{\rm unc}:i\in\mathcal I_{-k}\}$ to empirical
    $(\alpha_y,1-\alpha_y)$ quantiles, yielding $Y_i^{\rm unc,clip}$.

    \State Fit the treatment bridge $\widehat q^{(-k)}$ using logistic regression on $\mathcal I_{-k}$,
    with labels $A$ and bridge covariates $(X,Z)$.

    \For{$a\in\{0,1\}$}
        \State Let $\mathcal I_{-k,a}=\{i\in\mathcal I_{-k}:A_i=a\}$.

        \State Fit the outcome bridge $\widehat f_a^{(-k)}$ using a random forest regressor on $\mathcal I_{-k,a}$,
        with labels $Y^{\rm unc,clip}$ and bridge covariates $(X,W)$.
    \EndFor

    \For{$i\in\mathcal I_k$}
        \State Predict
        $\widehat q_i=\textsc{Clip}\{\widehat q^{(-k)}(X_i,Z_i),\epsilon_q,1-\epsilon_q\}$,
        $\widehat f_{a,i}=\widehat f_a^{(-k)}(X_i,W_i)$ for $a\in\{0,1\}$.

        \State Form
        $\widehat m_i=\widehat q_i\widehat f_{1,i}+(1-\widehat q_i)\widehat f_{0,i}$,
        $A_i^{\rm res}=A_i-\widehat q_i$, and
        $\widetilde Y_i^{\rm res}=Y_i^{\rm unc}-\widehat m_i$.

        \State Form
        $X_i^{\rm final}
        =(X_i,W_i,Z_i,\widehat q_i,\widehat f_{1,i},\widehat f_{0,i},\widehat m_i)$.
    \EndFor

    \State Clip $\{\widetilde Y_i^{\rm res}:i\in\mathcal I_k\}$ to empirical
    $(\alpha_{\rm res},1-\alpha_{\rm res})$ quantiles, yielding $Y_i^{\rm res}$.
\EndFor

\State Fit the final causal forest $\widehat{\mathcal F}$ on the cross-fitted training triples
$\{(X_i^{\rm final},A_i^{\rm res},Y_i^{\rm res})\}_{i=1}^n$.

\State Refit only the nuisance models on the full training sample. For a new point $(x,w,z)$,
construct $X_{\rm new}^{\rm final}$ using the same formula above and return
$\widehat\tau(x)=\widehat{\mathcal F}(X_{\rm new}^{\rm final})$.

\State \Return $\widehat\tau(\cdot)$.
\end{algorithmic}
\end{algorithm}

\begin{algorithm}
\small
\caption{Uncensored Baseline}
\label{alg:appendix_uncensored_baseline}
\begin{algorithmic}[1]
\Require Training data $\{(X_i,W_i,Z_i,A_i,Y_i^{\rm unc})\}_{i=1}^n$; test point $(x,w,z)$
\Ensure Baseline CATE estimate $\widehat\tau_{\rm base}^{\rm unc}(x,w,z)$

\State Form raw covariates $X_i^{\rm raw}=(X_i,W_i,Z_i)$ for all training samples.

\State Fit a causal forest using outcomes $Y_i^{\rm unc}$, treatments $A_i$, and covariates $X_i^{\rm raw}$.

\State Form the test covariate vector $x^{\rm raw}=(x,w,z)$.

\State \Return $\widehat\tau_{\rm base}^{\rm unc}(x,w,z)$ by evaluating the fitted causal forest at $x^{\rm raw}$.
\end{algorithmic}
\end{algorithm}

\begin{algorithm}
\small
\caption{Censored Baseline}
\label{alg:appendix_censored_baseline}
\begin{algorithmic}[1]
\Require Training data $\{(X_i,W_i,Z_i,A_i,Y_i,\Delta_i)\}_{i=1}^n$; target type; horizon \(h\) if needed; test point $(x,w,z)$
\Ensure $\widehat\tau_{\mathrm{base}}^{\mathrm{cens}}(x,w,z)$
\State Form the raw observed feature stack
\[
X_i^{\mathrm{raw}}=[X_i,W_i,Z_i].
\]
\State Initialize the censored baseline with the same target, horizon, and final forest family as the proposed censored model.
\For{each nuisance fold}
    \State Fit an observed-data propensity model on \([X,W,Z]\).
    \State Fit treated and control Cox event-survival models using the target-preprocessed time/event outcome.
    \State Fit a Cox censoring model on \((Y_i,1-\Delta_i)\) with features \([X_i,W_i,Z_i,A_i]\).
    \State Construct the residual pair \((Y_i^{\mathrm{res}},A_i^{\mathrm{res}})\) from these observed-data nuisances.
\EndFor
\State Fit the final causal forest on $\{(X_i^{\mathrm{raw}},A_i^{\mathrm{res}},Y_i^{\mathrm{res}})\}_{i=1}^n$.
\State Form the test feature vector \(x^{\mathrm{raw}}=[x,w,z]\).
\State \Return $\widehat\tau_{\mathrm{base}}^{\mathrm{cens}}(x,w,z)=\texttt{effect}(x^{\mathrm{raw}})$.
\end{algorithmic}
\end{algorithm}

\begin{algorithm}
\small
\caption{NC-CSF Oracle Comparator}
\label{alg:appendix_oracle_comparator}
\begin{algorithmic}[1]
\Require Synthetic training data with oracle access to latent confounders $U$ and DGP nuisances; setting type $\in\{\mathrm{uncensored},\mathrm{censored}\}$; test point $(x,u)$
\Ensure Oracle CATE estimate $\widehat\tau_{\rm oracle}(x,u)$

\State Use $(X,U)$ in place of proxy-based bridge inputs and obtain oracle nuisances from the DGP.

\If{uncensored setting}
    \State For each training unit, obtain
    $\widehat q_i^{\rm oracle}=q_0(X_i,U_i)$ and
    $\widehat f_{a,i}^{\rm oracle}=f_a(X_i,U_i)$ for $a\in\{0,1\}$.

    \State Form
    $\widehat m_i^{\rm oracle}
    =\widehat q_i^{\rm oracle}\widehat f_{1,i}^{\rm oracle}
    +(1-\widehat q_i^{\rm oracle})\widehat f_{0,i}^{\rm oracle}$,
    $A_i^{\rm res}=A_i-\widehat q_i^{\rm oracle}$, and
    $Y_i^{\rm res}=Y_i^{\rm unc}-\widehat m_i^{\rm oracle}$,
    with the same clipping rules as Uncensored NC-CSF.

    \State Form $X_i^{\rm final}$ by augmenting $(X_i,U_i)$ with oracle nuisance summaries.

\Else
    \State Set $(Y_i^\star,\Delta_i^\star)=(Y_i^{(h)},\Delta_i^{(h)})$ for RMST, and
    $(Y_i^\star,\Delta_i^\star)=(Y_i,\Delta_i)$ for survival probability.

    \State For each training unit, obtain oracle bridge nuisances
    $\widehat q_i^{\rm oracle}=q_0(X_i,U_i)$ and
    $\widehat f_{a,i}^{\rm oracle}=f_a(X_i,U_i;h)$, and oracle survival nuisances
    $\widehat S_a^{\rm oracle}(t\mid X_i,U_i)$,
    $S_C^{\rm oracle}(t\mid X_i,U_i,A_i)$, and
    $d\Lambda_C^{\rm oracle}(t\mid X_i,U_i,A_i)$.

    \State Form
    $\widehat m_i^{\rm oracle}
    =\widehat q_i^{\rm oracle}\widehat f_{1,i}^{\rm oracle}
    +(1-\widehat q_i^{\rm oracle})\widehat f_{0,i}^{\rm oracle}$
    and $A_i^{\rm res}=A_i-\widehat q_i^{\rm oracle}$.

    \State Compute $Y_i^{\rm res}$ using the oracle version of Eq.~\eqref{eq:survival_residual}.

    \State Form $X_i^{\rm final}$ by augmenting $(X_i,U_i)$ with oracle bridge and survival nuisance summaries.
\EndIf

\State Fit a causal forest to
$\{(X_i^{\rm final},A_i^{\rm res},Y_i^{\rm res})\}_{i=1}^n$.

\State Construct the corresponding oracle final feature vector $X_{\rm new}^{\rm final}$ for test point $(x,u)$.

\State \Return the fitted causal forest prediction at $X_{\rm new}^{\rm final}$.
\end{algorithmic}
\end{algorithm}

\section{Bridge Estimation and Operator Stability}
\label{sec:bridge_stability}

Define the observable conditional-expectation operators
\begin{align*}
(\mathcal T_q q)(x,w)
&=\mathbb{E}[q(Z,X)\mid X=x,W=w],\\
(\mathcal T_{f,a}f_a)(x,z)
&=\mathbb{E}[f_a(W,X)\mid X=x,Z=z,A=a].
\end{align*}
The bridge moment equations set these operators equal to $\mathbb{E}[A\mid X=x,W=w]$ and $\mathbb{E}[G(a;h)\mid X=x,Z=z,A=a]$, respectively. Estimating their inverse solutions is the conditional-moment problem studied in proximal causal learning \citep{kallus2021negative,cui2023semiparametric}. Completeness makes the relevant operators injective on the chosen function classes, which supports identification, but injectivity alone does not bound the inverse. A sequence of bridge candidates can have small observable moment error while remaining far from the target bridge when an operator is nearly singular.

For a finite-dimensional bridge $q_\theta$, let $M(\theta)$ denote the vector of observable moment restrictions. If the model is correctly specified, $M(\theta_0)=0$, and the smallest singular value of $\nabla_\theta M(\theta_0)$ is bounded below, standard M-estimation arguments yield consistency and a regular rate. The logistic treatment-bridge fit used in the implementation should be understood in this working-model sense. Analogously, a nonparametric moment learner requires a stability inequality of the form
\begin{align*}
\|h-h_0\|_{L_2}
\leq
\kappa^{-1}\|\mathcal T(h-h_0)\|_{\mathcal H},
\end{align*}
for a suitable moment norm $\|\cdot\|_{\mathcal H}$. A small $\kappa$ corresponds to a poorly conditioned inverse and amplifies conditional-moment estimation error.

Ridge regularization replaces inversion by a stabilized problem. Schematically, if $\widehat h_\lambda$ minimizes an empirical moment loss plus $\lambda\|h\|^2$, its error separates into a stabilized estimation term and a regularization-bias term. Under an appropriate source condition, moment consistency, and a sequence $\lambda\downarrow0$ chosen slowly enough to control stochastic amplification, both terms can vanish. Proposition~\ref{prop:Neyman.orthogonality} then controls propagation of the remaining bridge error into the effect estimate, provided the rates in Proposition~\ref{prop:leaf_oracle} hold. It does not replace these bridge-estimation conditions.

In practice, informative diagnostics include cross-fitted conditional-moment residuals, sensitivity to regularization and clipping, and stability across scientifically defensible proxy routings. The weak-proxy and proxy-map experiments in Appendix~\ref{sec:proxy_stress} quantify the effect of poor conditioning and information loss on the final estimator.

\section{Proofs}

\begin{proof}[Proof of Proposition~\ref{prop:bridge_observed_moment}]

We first prove \eqref{eq:bridge_observed_prop_combined}. 
For the treatment bridge,
\begin{align*}
\mathbb{E}\!\left[A_i-q(Z_i,X_i)\mid X_i,U_i,W_i\right]
&=
\mathbb{E}\!\left[A_i\mid X_i,U_i,W_i\right]
-
\mathbb{E}\!\left[q(Z_i,X_i)\mid X_i,U_i,W_i\right] \\
&=
\mathbb{P}(A_i=1\mid X_i,U_i)
-
\mathbb{E}\!\left[q(Z_i,X_i)\mid X_i,U_i\right] \\
&= 0,
\end{align*}
where the second equality uses $(A_i,Z_i)\perp W_i\mid X_i,U_i$, and the third follows from \eqref{eq:bridge.function}. Taking conditional expectation with respect to $(X_i,W_i)$ yields
\begin{align*}
\mathbb{E}\!\left[A_i-q(Z_i,X_i)\mid X_i,W_i\right]=0.
\end{align*}

Similarly, for the outcome bridge,
\begin{align*}
&\mathbb{E}\!\left[G_i(a;h)-f_a(W_i,X_i)\mid X_i,U_i,Z_i,A_i=a\right] \\
&\qquad=
\mathbb{E}\!\left[G_i(a;h)\mid X_i,U_i,Z_i,A_i=a\right]
-
\mathbb{E}\!\left[f_a(W_i,X_i)\mid X_i,U_i,Z_i,A_i=a\right] \\
&\qquad=
\mathbb{E}\!\left[G_i(a;h)\mid X_i,U_i\right]
-
\mathbb{E}\!\left[f_a(W_i,X_i)\mid X_i,U_i\right] \\
&\qquad=0,
\end{align*}
where the third equality uses $(A_i,Z_i)\perp (G_i(a;h),W_i)\mid X_i,U_i$, and the last follows from \eqref{eq:bridge.function}. Taking conditional expectation with respect to $(X_i,Z_i,A_i=a)$ yields
\begin{align*}
\mathbb{E}\!\left[G_i(a;h)-f_a(W_i,X_i)\mid X_i,Z_i,A_i=a\right]=0.
\end{align*}

We now prove the converse. Suppose \eqref{eq:bridge_observed_prop_combined} holds. Define
\begin{align*}
r(X_i,U_i)
&=
\mathbb{P}(A_i=1\mid X_i,U_i)-\mathbb{E}\!\left[q(Z_i,X_i)\mid X_i,U_i\right], \\
s(X_i,U_i)
&=
\mathbb{E}\!\left[G_i(a;h)\mid X_i,U_i\right]-\mathbb{E}\!\left[f_a(W_i,X_i)\mid X_i,U_i\right].
\end{align*}
By iterated expectation and $(A_i,Z_i)\perp W_i\mid X_i,U_i$,
\begin{align*}
\mathbb{E}\!\left[r(X_i,U_i)\mid X_i,W_i\right]
&=
\mathbb{E}\!\left[
\mathbb{E}\!\left[A_i-q(Z_i,X_i)\mid X_i,U_i,W_i\right]
\middle| X_i,W_i
\right] \\
&=
\mathbb{E}\!\left[A_i-q(Z_i,X_i)\mid X_i,W_i\right] \\
&=0.
\end{align*}
By the first completeness condition, this implies $r(X_i,U_i)=0$ a.s., that is,
\begin{align*}
\mathbb{E}\!\left[q(Z_i,X_i)\mid X_i,U_i\right]
=
\mathbb{P}(A_i=1\mid X_i,U_i).
\end{align*}

Similarly, using $(A_i,Z_i)\perp (G_i(a;h),W_i)\mid X_i,U_i$,
\begin{align*}
\mathbb{E}\!\left[s(X_i,U_i)\mid X_i,Z_i,A_i=a\right]
&=
\mathbb{E}\!\left[
\mathbb{E}\!\left[G_i(a;h)-f_a(W_i,X_i)\mid X_i,U_i,Z_i,A_i=a\right]
\middle| X_i,Z_i,A_i=a
\right] \\
&=
\mathbb{E}\!\left[G_i(a;h)-f_a(W_i,X_i)\mid X_i,Z_i,A_i=a\right] \\
&=0.
\end{align*}
Similarly for the second completeness condition.
\end{proof}

\begin{proof}[Proof of Proposition~\ref{prop:score}]\label{proof:prop:tau.w}
We first prove the uncensored case, and then note that the censored case follows by the same argument using the censoring-corrected pseudo-outcome as in the proof of Proposition~\ref{prop:Neyman.orthogonality}.

For the uncensored case, we aim to show
\begin{align*}
\mathbb{E}\!\left[
(A_i-q_i)(G_i-m_i-\tau^w(X_i)(A_i-q_i))
\middle| X_i=x
\right]=0.
\end{align*}
In fact, we can decompose $(A_i-q_i)(G_i-m_i)$ as
\begin{align*}
A_i(A_i-q_i)(Y_i(1)-f_{1,i})
+
(1-A_i)(A_i-q_i)(Y_i(0)-f_{0,i})
+
(A_i-q_i)^2(f_{1,i}-f_{0,i}).
\end{align*}
Taking conditional expectation given $(X_i,U_i)$, the first two terms are zero by \eqref{eq:nc_censored} and \eqref{eq:bridge.function} as
\begin{align*}
\mathbb{E}\!\left[A_i(A_i-q_i)(Y_i(1)-f_{1,i})\mid X_i,U_i\right]
&=
\mathbb{E}\!\left[A_i(A_i-q_i)\mid X_i,U_i\right]
\mathbb{E}\!\left[Y_i(1)-f_{1,i}\mid X_i,U_i\right]
=0,
\end{align*}
and similarly for the second term. Therefore,
\begin{align*}
\mathbb{E}\!\left[(A_i-q_i)(G_i-m_i)\mid X_i,U_i\right]
&=
\mathbb{E}\!\left[(A_i-q_i)^2\mid X_i,U_i\right]
\mathbb{E}\!\left[f_{1,i}-f_{0,i}\mid X_i,U_i\right] \\
&=
\mathbb{E}\!\left[(A_i-q_i)^2 \tau(X_i,U_i)\mid X_i,U_i\right].
\end{align*}
Taking conditional expectation given $X_i=x$ yields
\begin{align*}
\mathbb{E}\!\left[(A_i-q_i)(G_i-m_i)\mid X_i=x\right]
=
\mathbb{E}\!\left[(A_i-q_i)^2\tau(X_i,U_i)\mid X_i=x\right].
\end{align*}
Therefore, the NC-CSF loss satisfies
\begin{align*}
&\mathbb{E}\!\left[
(A_i-q_i)(G_i-m_i-\tau(X_i)(A_i-q_i))
\middle| X_i=x
\right] \\
&\qquad=
\mathbb{E}\!\left[(A_i-q_i)^2\tau(X_i,U_i)\mid X_i=x\right]
-
\tau(x)\mathbb{E}\!\left[(A_i-q_i)^2\mid X_i=x\right],
\end{align*}
which is zero at
\begin{align*}
\tau^w(x)
=
\frac{
\mathbb{E}\!\left[(A_i-q_i)^2\tau(X_i,U_i)\mid X_i=x\right]
}{
\mathbb{E}\!\left[(A_i-q_i)^2\mid X_i=x\right]
}.
\end{align*}
Finally, if $\tau(x,u)=\tau(x)$, then the numerator of $\tau^w(x)$ simplifies to
$\tau(x)\mathbb{E}\!\left[(A_i-q_i)^2\mid X_i=x\right]$,
and as a result $\tau^w(x)=\tau(x)$.
\end{proof}

\begin{proof}[Proof of Proposition~\ref{prop:Neyman.orthogonality}]
Let $V_i=(X_i,W_i,Z_i,A_i)$. Under \eqref{eq:censoring}, $\mathbb{E}[\Gamma_i\mid V_i]=\mathbb{E}[G_i\mid V_i]$ according to \cite{cui2023estimating}. Therefore,
\begin{align*}
\mathbb{E}\!\left[Y^{\mathrm{res}}\mid X\right]
=
\mathbb{E}\!\left[
\mathbb{E}\!\left[Y^{\mathrm{res}}\mid V\right]
\middle| X
\right] 
=
\mathbb{E}\!\left[
\mathbb{E}\!\left[G-m\mid V\right]
\middle| X
\right] 
=
\mathbb{E}\!\left[G-m\mid X\right].
\end{align*}
Next,
\begin{align*}
&\mathbb{E}\!\left[Y^{\mathrm{res}}\mid X\right]
=
\mathbb{E}\!\left[G-m\mid X\right] \\
=&
\mathbb{E}\!\left[
\mathbb{E}\!\left[
A G(1;h) + (1-A)G(0;h) - q(Z,X)f_1(W,X) - \bigl(1-q(Z,X)\bigr)f_0(W,X)
\mid X,U
\right]
\middle| X
\right].
\end{align*}
By \eqref{eq:nc_censored} and \eqref{eq:bridge.function},
\begin{align*}
\mathbb{E}[A G(1;h)\mid X,U]
&=
\mathbb{E}[A\mid X,U]\cdot \mathbb{E}[G(1;h)\mid X,U] \\
&=
\mathbb{E}[q(Z,X)\mid X,U]\cdot \mathbb{E}[f_1(W,X)\mid X,U] \\
&=
\mathbb{E}[q(Z,X)f_1(W,X)\mid X,U],
\end{align*}
and similarly
\begin{align*}
\mathbb{E}[(1-A)G(0;h)\mid X,U]
=
\mathbb{E}[(1-q(Z,X))f_0(W,X)\mid X,U].
\end{align*}
Therefore, $\mathbb{E}[Y^{\mathrm{res}}_{\eta_0}\mid X]=0$.
Again by \eqref{eq:bridge.function} and the tower property,
\begin{align*}
\mathbb{E}\!\left[A^{\mathrm{res}}_{\eta_0}\mid X\right]
&=
\mathbb{E}\!\left[
\mathbb{E}\!\left[A-q(Z,X)\mid X,U\right]
\middle| X
\right]
=0.
\end{align*}

For the bridge $q$, since $A^{\mathrm{res}}=A-q$, the derivative of the score with respect to $q$ is proportional to
\begin{align*}
-\bigl(Y^{\mathrm{res}}_{\eta_0}-2\tau_0(X)A^{\mathrm{res}}_{\eta_0}\bigr).
\end{align*}
Taking conditional expectation given $X$ and using the two centering identities above yields zero. For $m$, the derivative of the score is proportional to
$-A^{\mathrm{res}}_{\eta_0}$,
whose conditional expectation given $X$ is also zero by the above argument.

For the survival-related nuisance functions $S_C$, $\rho$, define the censoring counting process and at-risk process by $N_{C,i}(t)=I(Y_i\le t,\Delta_i=0)$ and $R_i(t)=I(Y_i\ge t)$. Under \eqref{eq:censoring},
\begin{align*}
M_{C,i}(t\mid V_i)
=
N_{C,i}(t)-\int_0^t R_i(u)\,d\Lambda_C(u\mid V_i)
\end{align*}
is a martingale. 
For any perturbation $\delta(t;h)$ of $\rho(t;h)$, the first-order derivative of the score with respect to $\rho$ is proportional to
\begin{align*}
A_i^{\mathrm{res}}
\int_0^h \frac{\delta_{A_i}(t;h)}{S_C(t\mid V_i)}\,dM_{C,i}(t\mid V_i),
\end{align*}
which has mean zero conditional on $X_i$. The same argument applies to perturbations of $S_C$ and $\Lambda_C$.

\end{proof}

\begin{proof}[Proof of Proposition~\ref{prop:leaf_oracle}]
Let $\widetilde\tau_{\mathcal S}$ be the oracle estimator using the true nuisance functions. The argument in the proof of Proposition~\ref{prop:score}, now averaging over $X\in\mathcal S$, gives
\begin{align*}
&\mathbb{E}\!\left[
A_i^{\mathrm{res}}
(Y_i^{\mathrm{res}}-tA_i^{\mathrm{res}})
\mid X_i\in\mathcal S
\right] 
=
\mathbb{E}[(A_i^{\mathrm{res}})^2\tau(X_i,U_i)\mid X_i\in\mathcal S]
-t\,\mathbb{E}[(A_i^{\mathrm{res}})^2\mid X_i\in\mathcal S],
\end{align*}
whose unique zero is $t=\tau_{\mathcal S}^w$. Thus the oracle estimator is a ratio of empirical moments centered at $\tau_{\mathcal S}^w$. Boundedness, residual overlap, and $n_{\mathcal S}\to\infty$ yield the displayed central limit theorem and variance by the multivariate CLT and delta method.

For the feasible estimator, cross-fitting makes each observation independent of the nuisance estimates used to construct its score, conditional on the training folds. Proposition~\ref{prop:Neyman.orthogonality} removes the first-order Gateaux derivative at the truth. The remaining score difference consists of products of nuisance errors and is $o_p(n_{\mathcal S}^{-1/2})$ under the stated $o_p(n_{\mathcal S}^{-1/4})$ rates and boundedness conditions. Hence
\begin{align*}
\widehat\tau_{\mathcal S}-\widetilde\tau_{\mathcal S}
=o_p(n_{\mathcal S}^{-1/2}),
\end{align*}
and Slutsky's theorem gives the result. If $\tau(x,u)$ is constant on $\mathcal S$, the weighted target reduces to that common value.
\end{proof}

\section{Additional Numerical Details and Results}

\subsection{Numerical Experiment Details}\label{sec:simulation}
We describe more implementation details of NC-CSF using our proposed loss with estimated nuisance functions.

\subsubsection{Uncensored NC-CSF}

In the uncensored setting, the nuisance layer consists of
\[
q_i=q(Z_i,X_i), 
\qquad 
f_{1,i}=f(W_i,1,X_i), 
\qquad 
f_{0,i}=f(W_i,0,X_i),
\]
together with the bridge mixture
\[
m_i
=
q(Z_i,X_i)f(W_i,1,X_i)
+
\bigl(1-q(Z_i,X_i)\bigr)f(W_i,0,X_i).
\]
We then form the orthogonal residuals
\begin{equation}
A_i^{\mathrm{res}} = A_i-q_i,
\qquad
Y_i^{\mathrm{res}} = Y_i^{\mathrm{unc}}-m_i.
\label{eq:uncensored_residuals}
\end{equation}
The final-stage feature vector augments the raw observed covariates with nuisance summaries:
\begin{equation}
X_i^{\mathrm{final}}
=
[X_i,\;W_i,\;Z_i,\;q_i,\;f_{1,i},\;f_{0,i},\;m_i].
\label{eq:uncensored_final_features}
\end{equation}
The final learner is a causal forest fit on the bridge-adjusted residualized data. In our implementation, we use the \texttt{CausalForestDML} interface in EconML with externally constructed residuals and final-stage features.

\subsubsection{Censored NC-CSF}

We now extend the estimator to right-censored outcomes. As in the uncensored case, we estimate the treatment bridge \(q\), the treatment-specific outcome bridges \(f(W,1,X)\) and \(f(W,0,X)\), and the bridge mixture \(m\), and define
\begin{equation}
A_i^{\mathrm{res}} = A_i-\widehat q_i.
\label{eq:censored_treatment_residual}
\end{equation}

For censored outcomes, the outcome bridges are trained using a target-specific IPCW bridge pseudo-outcome \(\widetilde Y_i^{\mathrm{bridge}}\), used only for nuisance estimation. For the RMST target, using the truncated time/event pair \((\widetilde Y_i,\widetilde\Delta_i)\), we use
\begin{equation}
\widetilde Y_i^{\mathrm{bridge}}
=
\frac{\widetilde Y_i\widetilde\Delta_i}
{\widehat S_C(\widetilde Y_i\mid X_i,W_i,Z_i,A_i)}.
\label{eq:rmst_bridge_pseudo_outcome}
\end{equation}
For the survival-probability target, let
\begin{equation}
\widetilde Y_i^{(h)}=\min(Y_i,h), 
\qquad
\widetilde\Delta_i^{(h)}
=
\Delta_i\mathbf{1}\{Y_i\le h\}
+
\mathbf{1}\{Y_i>h\},
\label{eq:horizon_adjusted_pair}
\end{equation}
and define
\begin{equation}
\widetilde Y_i^{\mathrm{bridge}}
=
\frac{\mathbf{1}\{Y_i>h\}}
{\widehat S_C(\widetilde Y_i^{(h)}\mid X_i,W_i,Z_i,A_i)}.
\label{eq:surv_bridge_pseudo_outcome}
\end{equation}

We also fit arm-specific event-survival curves
\[
\widehat S_a(t\mid X_i,W_i,Z_i)
\approx
\widehat{\mathbb{P}}(T_i>t\mid X_i,W_i,Z_i,A_i=a),
\qquad a\in\{0,1\},
\]
using separate Cox models within treatment arms. For RMST, we use
\begin{equation}
T_{i,\mathrm{nuis}}=\min(Y_i,h), 
\qquad
\Delta_{i,\mathrm{nuis}}
=
\Delta_i\mathbf{1}\{Y_i<h\}
+
\mathbf{1}\{Y_i\ge h\}.
\label{eq:cox_pair_rmst}
\end{equation}
For survival probability, the Cox duration/event pair is \((Y_i,\Delta_i)\).

These event-survival curves define target-specific continuation terms. For RMST,
\begin{equation}
\rho_a^{\mathrm{RMST}}(t,X_i;h)
=
t+
\int_t^h
\frac{\widehat S_a(u\mid X_i,W_i,Z_i)}
{\widehat S_a(t\mid X_i,W_i,Z_i)}
\,du,
\qquad
0\le t\le h,\quad a\in\{0,1\}.
\label{eq:rho_rmst_continuation}
\end{equation}
For survival probability,
\begin{equation}
\rho_a(t,X_i;h)
=
\frac{\widehat S_a(h\mid X_i,W_i,Z_i)}
{\widehat S_a(t\mid X_i,W_i,Z_i)},
\qquad
0\le t\le h,\quad a\in\{0,1\}.
\label{eq:rho_continuation}
\end{equation}

Because \(U_i\) is unobserved, our main specification uses a proxy-adjusted working model for censoring,
\begin{equation}
\widehat S_C(t\mid X_i,W_i,Z_i,A_i),
\label{eq:conditional_sc}
\end{equation}
estimated by a censoring Cox model using \((Y_i,1-\Delta_i)\) and covariates \((X_i,W_i,Z_i,A_i)\).

The outcome residual plugs the bridge-adjusted regression term into the three-term censoring-robust score of \cite{cui2023estimating}. For survival probability,
\begin{align}
Y_i^{\mathrm{res}}
&=
\frac{
\widetilde \Delta_i^{(h)} \bigl(\mathbf{1}\{Y_i>h\}-\widehat m_i\bigr)
+
\bigl(1-\widetilde \Delta_i^{(h)}\bigr)
\bigl(\rho_{A_i}(\widetilde Y_i^{(h)},X_i;h)-\widehat m_i\bigr)
}{
\widehat S_C(\widetilde Y_i^{(h)} \mid X_i,W_i,Z_i,A_i)
}
\label{eq:survival_residual}
\\
&\quad
-
\int_0^{\widetilde Y_i^{(h)}}
\frac{d\widehat \Lambda_C(t \mid X_i,W_i,Z_i,A_i)}
{\widehat S_C(t \mid X_i,W_i,Z_i,A_i)}
\bigl(\rho_{A_i}(t,X_i;h)-\widehat m_i\bigr)\,dt.
\nonumber
\end{align}
For RMST, we replace \(\mathbf{1}\{Y_i>h\}\) with \(\widetilde Y_i\widetilde\Delta_i\), and replace \(\rho_a(t,x;h)\) with \(\rho_a^{\mathrm{RMST}}(t,x;h)\).

The final-stage feature vector is
\begin{equation}
X_i^{\mathrm{final}}
=
[X_i,\;W_i,\;Z_i,\;\widehat q_i,\;\widehat f_{1,i},\;\widehat f_{0,i},\;\widehat m_i,\;
\widehat S_{1,i}(h),\;\widehat S_{0,i}(h),\;\widehat S_{1,i}(h)-\widehat S_{0,i}(h)].
\label{eq:censored_final_features}
\end{equation}
Here \(\widehat S_{a,i}(h)\) denotes the target-specific scalar summary derived from the fitted event-survival curve under arm \(a\). Clipping thresholds are specified in Algorithm~\ref{alg:censored_final_training}.

\subsection{Data-Generating Mechanism}\label{sec:data.generating}

We generate $n$ i.i.d.\ samples with covariates $X_i \sim \mathcal{N}(0, I_p)$ and latent confounder $U_i \sim \mathcal{N}(0,1)$. To model negative controls, we construct proxies $Z_i$ and $W_i$ as noisy linear functions of $(X_i,U_i)$:
\[
Z_i = a_z U_i + b_z^\top X_i + \varepsilon_{z,i}, 
\quad
W_i = a_w U_i + b_w^\top X_i + \varepsilon_{w,i},
\]
with independent Gaussian noise.

Treatment assignment follows
\[
A_i \sim \mathrm{Bernoulli}\big(\sigma(b_0 + \gamma_u U_i + \alpha^\top X_i)\big),
\]
introducing unobserved confounding via $U_i$. Potential event times follow a Weibull proportional hazards model:
\[
T_i(a)\mid X_i,U_i \sim \mathrm{Weibull}\!\big(k, \lambda e^{-\eta_i(a)/k}\big), 
\quad
\eta_i(a)=\beta_t^\top X_i+\beta_u U_i+\tau a.
\]

The conditional treatment effect under full observability is
\begin{equation}
\E[T_i(1)-T_i(0)\mid X_i,U_i]
= \lambda \Gamma(1+1/k)\exp\!\left(-\tfrac{1}{k}(\beta_t^\top X_i+\beta_u U_i)\right)\big(e^{-\tau/k}-1\big).
\end{equation}

Although $U_i$ is unobserved, its conditional distribution given $(X_i, Z_i,W_i)$ is available due to the linear-Gaussian structure. Defining residualized proxies $\tilde Z_i=Z_i-b_z^\top X_i$ and $\tilde W_i=W_i-b_w^\top X_i$, we have
\[
U_i \mid X_i, Z_i,W_i \sim \mathcal{N}\!\big(\mu_{U\mid Z,W},\,\sigma^2_{U\mid Z,W}\big),
\]
with closed-form expressions for the mean and variance. Integrating over this distribution yields the observed-data treatment effect:
\begin{equation}
\begin{aligned}
&\mathbb{E}[T_i(1)-T_i(0)\mid X_i,Z_i,W_i] \\
&= \lambda\,\Gamma\!\left(1+\frac{1}{k}\right)
\exp\!\left(-\frac{1}{k}\beta_t^\top X_i
-\frac{\beta_u}{k}\mu_{U\mid Z,W}
+
\frac{1}{2}\frac{\beta_u^2}{k^2}\sigma^2_{U\mid Z,W}
\right)
\left(e^{-\tau/k}-1\right).
\end{aligned}
\end{equation}

We evaluate performance across regimes varying proxy quality $(\sigma_z,\sigma_w)$, confounding strength $(\gamma_u,\beta_u)$, treatment effect size $\tau_{\mathrm{log\mbox{-}hr}}$, and dimensionality $(p_x,p_z,p_w)$, under both linear and nonlinear models. We fix $a_z=a_w=1.5$, and it results into $12\times22$ configurations.

\subsection{Proxy-map stress tests beyond linear Gaussian proxies}
\label{sec:proxy_stress}

To isolate the proxy mechanism, we replace the construction by
\begin{align*}
Z &= a_Z\widetilde g(U)+b_Z^\top X+\varepsilon_Z,
&
W &= a_W\widetilde g(U)+b_W^\top X+\varepsilon_W,
\end{align*}
where $\widetilde g$ is standardized to have mean zero and variance one under $U\sim N(0,1)$. Treatment, event times, censoring, $X$, $U$, and the target effects are held fixed across arms. P0 is the submitted linear-Gaussian proxy model. P1 uses $g(u)=\tanh(u)$; P2 uses $g(u)=u$ with $t_3$ noise scaled to preserve variance; P3 uses the noninjective map $g(u)=u+0.5u^2$; and P4 coarsens each noisy proxy to one bit. P1 and P2 remove joint Gaussianity while preserving an essentially invertible signal. P3 and P4 remove exact invertibility or fine-grained proxy information.

\begin{table}[H]
\captionsetup{labelfont={color=black},textfont={color=black}}
\centering
\small
\setlength{\tabcolsep}{5pt}
\renewcommand{\arraystretch}{1.08}
\caption{Proxy-map stress tests over all $12\times22$ configurations. Entries are average RMSE; gain is the relative reduction from the matched Censored Baseline.}
\label{tab:proxy_stress}
\begin{tabular}{lcc}
\toprule
\textbf{Proxy arm} & \textbf{RMST: Baseline $\rightarrow$ NC-CSF} & \textbf{Survival: Baseline $\rightarrow$ NC-CSF} \\
\midrule
P0 (submitted) & $0.1662\rightarrow0.0994$ (40.2\%) & $0.0788\rightarrow0.0613$ (22.2\%) \\
P1 ($\tanh$) & $0.1641\rightarrow0.1012$ (38.3\%) & $0.0791\rightarrow0.0620$ (21.6\%) \\
P2 ($t_3$ noise) & $0.1621\rightarrow0.0993$ (38.8\%) & $0.0798\rightarrow0.0603$ (24.5\%) \\
P3 (quadratic, noninjective) & $0.1795\rightarrow0.1159$ (35.4\%) & $0.0856\rightarrow0.0693$ (19.1\%) \\
P4 (binary coarsening) & $0.1769\rightarrow0.1079$ (39.0\%) & $0.0830\rightarrow0.0661$ (20.4\%) \\
\bottomrule
\end{tabular}
\end{table}

NC-CSF improves on the matched baseline in every arm, with RMST reductions of 35.4--40.2\% and survival-probability reductions of 19.1--24.5\%. P3 and P4 nevertheless degrade the bridge-based method in the predicted direction. In an $n$-sweep from 500 to 16{,}000, excess RMST RMSE relative to P0 averages only $+0.9\%$ for P1 and $+0.8\%$ for P2 and changes sign across sample sizes, whereas it is positive at every sample size and averages $+12.8\%$ for P3 and $+9.2\%$ for P4. Thus more data repair an identified but difficult moment system more readily than proxy information lost through a noninjective or coarsened map.

P3 degrades NC-CSF RMST accuracy by 16.6\% relative to P0, compared with 8.1\% for the no-bridge ablation. This approximately twofold differential degradation localizes the damage to the bridge component rather than to the forest or censoring correction. The bridge is degraded, not disabled: $\operatorname{corr}\{g(U),U\}=0.816$, and the linear $R^2$ of $U$ on $(Z,W,X)$ falls from 0.733 under P0 to 0.489 under P3 and 0.503 under P4.

\subsection{Paired Monte Carlo uncertainty}
\label{sec:mc_uncertainty}

We reran the representative sample-size and dimension sweeps 30 times. NC-CSF and the matched baseline use the same training draw within each replication, and performance is evaluated on an independent test set of 10{,}000 observations against the observed-data target $\mathbb{E}[T(1)-T(0)\mid X,Z,W]$. Table~\ref{tab:mc_uncertainty} reports mean RMSE plus or minus Monte Carlo standard error, the paired NC-CSF-minus-baseline difference with a pointwise 95\% interval, and the number of NC-CSF wins.

\begin{table}[H]
\captionsetup{labelfont={color=black},textfont={color=black}}
\centering
\footnotesize
\setlength{\tabcolsep}{3.5pt}
\renewcommand{\arraystretch}{1.08}
\caption{Monte Carlo uncertainty over 30 paired replications.}
\label{tab:mc_uncertainty}
\begin{tabular}{lrrrrrr}
\toprule
\textbf{Setting} & \textbf{$n$/$d$} & \textbf{Baseline RMSE} & \textbf{NC-CSF RMSE} & \textbf{NC-CSF $-$ baseline [95\% CI]} & \textbf{Reduction} & \textbf{Wins} \\
\midrule
Case 10 ($n$) & 1{,}000  & $0.1404\pm0.0057$ & $0.1079\pm0.0058$ & $-0.0325\;[-0.0411,-0.0240]$ & 23.2\% & 29/30 \\
Case 10 ($n$) & 4{,}000  & $0.1282\pm0.0060$ & $0.0892\pm0.0027$ & $-0.0390\;[-0.0490,-0.0291]$ & 30.5\% & 30/30 \\
Case 10 ($n$) & 8{,}000  & $0.1158\pm0.0044$ & $0.0810\pm0.0028$ & $-0.0348\;[-0.0421,-0.0275]$ & 30.1\% & 30/30 \\
Case 10 ($n$) & 16{,}000 & $0.1031\pm0.0025$ & $0.0793\pm0.0024$ & $-0.0238\;[-0.0298,-0.0179]$ & 23.1\% & 29/30 \\
Case 12 ($d$) & 2  & $0.0588\pm0.0027$ & $0.0308\pm0.0023$ & $-0.0280\;[-0.0330,-0.0229]$ & 47.6\% & 29/30 \\
Case 12 ($d$) & 5  & $0.0321\pm0.0008$ & $0.0191\pm0.0009$ & $-0.0130\;[-0.0149,-0.0112]$ & 40.5\% & 30/30 \\
Case 12 ($d$) & 10 & $0.0508\pm0.0015$ & $0.0347\pm0.0027$ & $-0.0161\;[-0.0209,-0.0113]$ & 31.7\% & 28/30 \\
Case 12 ($d$) & 20 & $0.2207\pm0.0245$ & $0.1256\pm0.0069$ & $-0.0952\;[-0.1443,-0.0460]$ & 43.1\% & 30/30 \\
Case 12 ($d$) & 40 & $0.4359\pm0.0313$ & $0.2816\pm0.0103$ & $-0.1543\;[-0.2205,-0.0881]$ & 35.4\% & 30/30 \\
\bottomrule
\end{tabular}
\end{table}

Every paired interval lies below zero, and NC-CSF wins 28--30 of 30 replications in each setting. These intervals quantify Monte Carlo uncertainty in mean test RMSE, not uncertainty for an individual CATE.

\subsection{Marginal ATE learners}
\label{sec:marginal_baselines}

AIPW and TMLE target a marginal average effect and therefore are not directly comparable with individual-level CATE learners. For a transparent auxiliary comparison, we estimate the population-average survival-probability effect at the same horizon, assign that scalar estimate to every test individual, and evaluate it against the true individual effects from the simulation.

\begin{table}[H]
\captionsetup{labelfont={color=black},textfont={color=black}}
\centering
\small
\caption{Population-level learners evaluated as constant individual-effect predictors.}
\label{tab:marginal_baselines}
\begin{tabular}{lcc}
\toprule
\textbf{Model} & \textbf{Survival-probability RMSE} & \textbf{Survival-probability MAE} \\
\midrule
AIPW & 0.1127 & 0.0879 \\
TMLE & 0.1063 & 0.0875 \\
\bottomrule
\end{tabular}
\end{table}

The table is included to situate the benchmark, not as a head-to-head HTE comparison: a constant marginal prediction cannot represent genuine individual heterogeneity.

\subsection{ACTG175 Proxy-Split Sensitivity}
\label{sec:appendix_actg175_splits}

The main ACTG175 analysis uses the clinical \(X/W/Z\) split described in Table~\ref{tab:appendix_actg175_splits}. Because ACTG175 does not include variables that were designed as negative-control proxies, we also fit two additional clinically motivated proxy splits. These specifications should not be interpreted as definitive proximal identification designs; rather, they test how much the fitted age profile changes when immune-progression, disease-severity, and treatment-selection variables are routed through the proxy channels.

All sensitivity analyses used the same ACTG175 subset, treatment coding, RMST target, and horizon as the main analysis: \(n=1{,}083\), with 561 ddI controls, 522 ZDV+ddI treated subjects, and \(h=1{,}000\) days. For a fair comparison, the three NC-CSF proxy-split specifications and the re-run R-CSF baseline were fit with 10{,}000 trees in the final forest stage. Bootstrap uncertainty summaries used 200 nonparametric bootstrap replications, with model seed 42, bootstrap seed 2026, and the same fixed age-reference grid across all specifications.

\begin{table}[!htbp]
\centering
\footnotesize
\setlength{\tabcolsep}{4pt}
\renewcommand{\arraystretch}{1.1}
\caption{ACTG175 proxy-split sensitivity specifications. Variables not listed under \(W\) or \(Z\) are kept in \(X\).}
\begin{tabular}{p{0.18\linewidth}p{0.23\linewidth}p{0.20\linewidth}p{0.30\linewidth}}
\toprule
\textbf{Specification} & \textbf{\(W\) variables} & \textbf{\(Z\) variables} & \textbf{Rationale} \\
\midrule
Clinical split & CD4 count at \(20\pm5\) weeks, CD8 count at \(20\pm5\) weeks & Race, prior antiretroviral history & Follow-up immune markers are outcome-side progression proxies; race and antiretroviral history may proxy treatment-selection patterns in the analyzed subset. \\
Rec A & Karnofsky score, symptomatic status, baseline CD4, baseline CD8, CD4 at \(20\pm5\) weeks, CD8 at \(20\pm5\) weeks & Prior antiretroviral history, intravenous drug use, haemophilia & \(W\) emphasizes disease severity and immune progression; \(Z\) emphasizes variables that may affect regimen selection or anticipated adherence risk. \\
Rec B & Baseline CD4, baseline CD8, CD4 at \(20\pm5\) weeks, CD8 at \(20\pm5\) weeks & Karnofsky score, symptomatic status, prior antiretroviral history, haemophilia & \(W\) isolates lab-based immune measurements; \(Z\) collects decision-relevant clinical status and history variables. \\
\bottomrule
\end{tabular}
\label{tab:appendix_actg175_splits}
\end{table}

\begin{figure}[!htbp]
\centering
\includegraphics[width=0.82\linewidth]{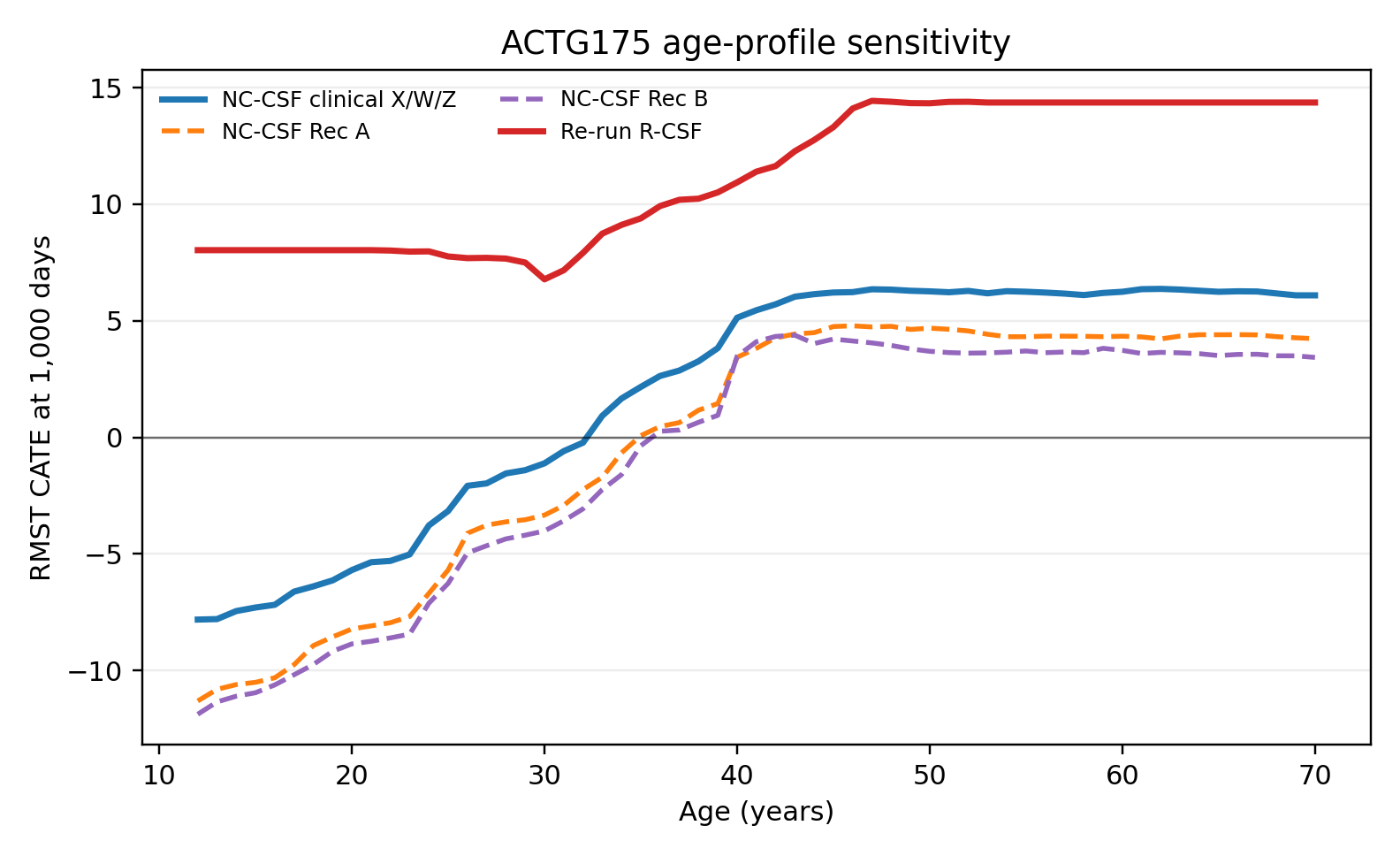}
\caption{Unified ACTG175 age-profile comparison for the main NC-CSF clinical split, two additional proxy-split sensitivity specifications, and our re-run R-CSF baseline. All curves target RMST at \(h=1{,}000\) days and use the same 10{,}000-tree final-stage setting.}
\label{fig:appendix_actg175_age_profiles}
\end{figure}

The sensitivity curves in Figure~\ref{fig:appendix_actg175_age_profiles} preserve the main qualitative pattern: estimated benefit is smaller for younger patients and larger for middle-aged or older patients. The proxy-split NC-CSF models have average fitted effects between 3.76 and 5.93 days, compared with 9.68 days for the re-run R-CSF baseline. Their peak ages range from 43 to 62 years, while their peak fitted effects range from 4.38 to 6.37 days. This indicates that the split choice changes the magnitude and smoothness of the age profile, but does not overturn the broad conclusion that age is the most visible heterogeneity dimension in this dataset.

\subsection{MIMIC-IV liver-disease subgroup analysis}
\label{sec:mimic_subgroups}

The liver-disease contrast was prespecified because hepatic CYP3A clearance is directly relevant to tacrolimus exposure. The cohort estimate $\widehat\tau^w$ for seven-day AKI-free survival is $-0.2410$. Table~\ref{tab:mimic_liver_subgroup} reports subgroup effects and each subgroup's difference from the cohort effect.

\begin{table}[H]
\captionsetup{labelfont={color=black},textfont={color=black}}
\centering
\scriptsize
\setlength{\tabcolsep}{4pt}
\renewcommand{\arraystretch}{1.0}
\caption{NC-CSF subgroup estimates by prior liver disease in the MIMIC-IV positive-control group.}
\label{tab:mimic_liver_subgroup}
\begin{tabular}{lrccccc}
\toprule
\textbf{Subgroup} & \textbf{$n$} & \textbf{$\widehat\tau^w$} & \textbf{95\% CI} & \textbf{$\Delta$ vs. cohort} & \textbf{95\% CI} & \textbf{E-value} \\
\midrule
Prior liver disease & 449 & $-0.3532$ & $[-0.3990,-0.2737]$ & $-0.1122$ & $[-0.1395,-0.0463]$ & 11.16 \\
No prior liver disease & 1{,}808 & $-0.2131$ & $[-0.2692,-0.1569]$ & $+0.0279$ & $[+0.0115,+0.0346]$ & 3.92 \\
\bottomrule
\end{tabular}
\end{table}

The harm estimate is $1.66$ times as large among patients with prior liver disease. Both the subgroup interval and the interval for its difference from the cohort exclude zero. By contrast, the three prespecified comorbidity axes without a direct metabolic link produce effect ratios of 0.85, 1.02, and 0.82 relative to the cohort, and every corresponding difference interval covers zero.

\subsection{Marginal versus Conditional Censoring}

Our main specification uses conditional censoring adjustment, modeling the censoring survival as 
$S_C(t\mid X_i,W_i,Z_i,A_i)$. We compare this with a marginal variant that replaces the subject-specific nuisance by a pooled curve $S_C(t)$. The conditional pseudo-outcome is
\begin{align}
Y_i^{\mathrm{res,cond}}
&=
\frac{
\widetilde \Delta_i (G_i-\widehat m_i)
+
(1-\widetilde \Delta_i)(\rho_{A_i}(\widetilde Y_i,X_i)-\widehat m_i)
}{
\widehat S_C(\widetilde Y_i \mid X_i,W_i,Z_i,A_i)
}
\label{eq:appendix_conditional_residual}
\\
&\quad
-
\int_0^{\widetilde Y_i}
\frac{d\widehat \Lambda_C(t \mid X_i,W_i,Z_i,A_i)}
{\widehat S_C(t \mid X_i,W_i,Z_i,A_i)}
(\rho_{A_i}(t,X_i)-\widehat m_i)\,dt.
\nonumber
\end{align}
The marginal variant replaces 
$\widehat S_C(\cdot\mid X_i,W_i,Z_i,A_i)$ and 
$d\widehat\Lambda_C(\cdot\mid X_i,W_i,Z_i,A_i)$ 
with $\widehat S_C(\cdot)$ and $d\widehat\Lambda_C(\cdot)$. Thus, the bridge layer and final forest are unchanged and only the censoring nuisance differs.

Table~\ref{tab:appendix_marginal_vs_conditional} compares both variants on the common $12\times22$ benchmark grid, using RMST at $h=3.0$ and survival probability at the observed-follow-up $q90$ horizon. Conditional and marginal NC-CSF perform nearly identically, while both variants outperform their corresponding baselines. Thus, the main gains come from the negative-control adjustment rather than the censoring model.

\begin{table}[H]
\centering
\small
\setlength{\tabcolsep}{5pt}
\renewcommand{\arraystretch}{1.1}
\caption{Marginal versus conditional censoring on the fixed-horizon benchmark. Lower RMSE and MAE indicate better performance.}
\begin{tabular}{llcc}
\toprule
\textbf{Target} & \textbf{Model} & \textbf{RMSE} & \textbf{MAE} \\
\midrule
\multirow{4}{*}{RMST ($h=3.0$)} 
& Conditional NC-CSF & 0.0994 & 0.0768 \\
& Marginal NC-CSF    & 0.0996 & 0.0763 \\
& Marginal Baseline  & 0.1214 & 0.0937 \\
& Conditional Baseline & 0.1662 & 0.1247 \\
\midrule
\multirow{4}{*}{Survival ($q90$)} 
& Conditional NC-CSF & 0.0613 & 0.0491 \\
& Marginal NC-CSF    & 0.0620 & 0.0495 \\
& Marginal Baseline  & 0.0676 & 0.0562 \\
& Conditional Baseline & 0.0788 & 0.0642 \\
\bottomrule
\end{tabular}
\label{tab:appendix_marginal_vs_conditional}
\end{table}

We use conditional censoring as the main specification because it preserves subject-level censoring heterogeneity, aligns naturally with the proxy-adjusted nuisance construction, and gives a slight RMST advantage in our benchmark. Since the gap between marginal and conditional censoring is small relative to the gain from NC-CSF over baseline estimators, we view the censoring specification as a secondary modeling choice rather than the primary driver of performance.

\subsection{Final-Stage Feature Ablation}

Table~\ref{tab:appendix_final_feature_ablation} evaluates two final-stage design choices: passing proxy-adjusted raw features to the final learner and augmenting them with nuisance summaries. The comparison uses the common $12 \times 22$ benchmark grid, with RMST at $h=3.0$ and survival probability at the $q90$ horizon. The \emph{NC-CSF (X)} and \emph{NC-CSF (X,W,Z)} variants omit nuisance summaries from the final learner.

\begin{table}[H]
\centering
\small
\setlength{\tabcolsep}{5pt}
\renewcommand{\arraystretch}{1.1}
\caption{Final-stage feature ablation on the fixed-horizon benchmark for RMST and survival probability. Lower RMSE and MAE indicate better performance.}
\begin{tabular}{llcc}
\toprule
\textbf{Target} & \textbf{Model} & \textbf{RMSE} & \textbf{MAE} \\
\midrule

\multirow{4}{*}{RMST ($h=3.0$)} 
& NC-CSF             & 0.0994 & 0.0768 \\
& NC-CSF (X,W,Z)     & 0.1081 & 0.0861 \\
& NC-CSF (X)         & 0.1147 & 0.0914 \\
& Censored Baseline  & 0.1662 & 0.1247 \\

\midrule

\multirow{4}{*}{Survival ($q90$)} 
& NC-CSF             & 0.0613 & 0.0491 \\
& NC-CSF (X,W,Z)     & 0.0675 & 0.0561 \\
& NC-CSF (X)         & 0.0717 & 0.0595 \\
& Censored Baseline  & 0.0788 & 0.0642 \\

\bottomrule
\end{tabular}
\label{tab:appendix_final_feature_ablation}
\end{table}

The bridge-adjusted feature set already gives a large gain over the baseline. Compared with the censored baseline, \emph{NC-CSF (X,W,Z)} reduces RMST RMSE from $0.1662$ to $0.1081$ ($35.0\%$) and RMST MAE from $0.1247$ to $0.0861$ ($30.9\%$). For survival probability, RMSE decreases from $0.0788$ to $0.0675$ ($14.3\%$), and MAE decreases from $0.0642$ to $0.0561$ ($12.6\%$).

Adding nuisance summaries yields a further improvement. Compared with \emph{NC-CSF (X,W,Z)}, the full \emph{NC-CSF} reduces RMST RMSE from $0.1081$ to $0.0994$ ($8.0\%$) and survival-probability RMSE from $0.0675$ to $0.0613$ ($9.2\%$). Overall, compared with the censored baseline, the full \emph{NC-CSF} reduces RMST RMSE by $40.2\%$ and survival-probability RMSE by $22.2\%$.

This isolates the bridge as the primary source of improvement. The forest family is identical across the matched comparisons; RSF-T and RSF-S omit residualization and give RMST RMSEs of 0.1773 and 0.1637, bracketing the 0.1662 of the residualized no-bridge baseline; and conditional versus marginal censoring changes NC-CSF only from 0.0994 to 0.0996 for RMST and from 0.0613 to 0.0620 for survival probability. Thus neither the forest, residualization alone, nor the precise censoring specification explains the bridge-adjusted gain.

Intuitively, raw $(X,W,Z)$ require the final learner to reconstruct the bridge-adjusted signal indirectly, whereas nuisance summaries such as $\widehat q$, $\widehat f_1$, and $\widehat f_0$ provide denoised, low-dimensional features aligned with the orthogonalized residualization step. This makes the final learner more effective at recovering treatment-effect heterogeneity.

\subsection{RMST Under the Default \texorpdfstring{$\max(Y)$}{max(Y)} Horizon}

The main text and fixed-horizon appendix tables evaluate RMST at a common benchmark horizon, such as $h=3.0$. The original synthetic benchmark also supports a default convention in which, when no explicit RMST horizon is supplied, the target is evaluated at the maximum observed follow-up within each benchmark instance:
\[
h_{\mathrm{cell}}=\max_{1\le i\le n_{\mathrm{cell}}} Y_i.
\]
This comparison serves as a robustness check, testing whether NC-CSF maintains its performance advantage when the RMST horizon is not fixed. 

The comparison includes the full $12\times22=264$ case-setting cells for each model. Under the default $\max(Y)$ convention, the realized RMST horizon varies across cells because it is determined by the maximum observed follow-up in each benchmark instance. We also report the corresponding survival-probability benchmark results from the same output grid for completeness.

\begin{table}[H]
\centering
\small
\setlength{\tabcolsep}{4pt}
\renewcommand{\arraystretch}{1.1}
\caption{Benchmark results under the default RMST horizon $h=\max_i Y_i$. Lower RMSE and MAE indicate better performance.}
\begin{tabular}{lcc}
\toprule
\textbf{Model} & \textbf{RMSE} & \textbf{MAE} \\
\midrule

\multicolumn{3}{c}{\textbf{Panel A: RMST ($h=\max_i Y_i$)}} \\
\midrule
NC-CSF Oracle & 0.0994 & 0.0559 \\
NC-CSF        & 0.1892 & 0.1155 \\
Baseline      & 0.3766 & 0.2609 \\

\midrule

\multicolumn{3}{c}{\textbf{Panel B: Survival (benchmark quantile horizon)}} \\
\midrule
NC-CSF Oracle & 0.0413 & 0.0337 \\
NC-CSF        & 0.0656 & 0.0556 \\
Baseline      & 0.0691 & 0.0587 \\

\bottomrule
\end{tabular}
\label{tab:appendix_maxy}
\end{table}

Table~\ref{tab:appendix_maxy} shows that the main qualitative ranking is preserved under the default $\max(Y)$ RMST convention. For RMST, NC-CSF reduces RMSE from $0.3766$ to $0.1892$ ($49.8\%$) and MAE from $0.2609$ to $0.1155$ ($55.7\%$) relative to the baseline. Thus, the advantage of NC-CSF is not specific to the fixed-horizon choice $h=3.0$.

For survival probability, the gain is smaller but still favorable: NC-CSF reduces RMSE from $0.0691$ to $0.0656$ ($5.0\%$) and MAE from $0.0587$ to $0.0556$ ($5.2\%$). We therefore interpret this appendix result as a robustness check: changing the RMST horizon convention from a fixed benchmark horizon to the default $\max(Y)$ rule does not overturn the main conclusion that NC-CSF improves upon the baseline.

The gains are larger for RMST than for horizon-specific survival probability, likely because RMST integrates survival over time and is therefore smoother. In contrast, survival probability depends on a single time point and is more sensitive to local errors in censoring correction and continuation-term estimation.

\subsection{Sensitivity to RMST and Survival Horizons}

Tables~\ref{tab:horizon_rmse} and~\ref{tab:horizon_bias} summarize average performance over all $12 \times 22$ case-setting combinations, separately for each horizon. For RMST, we use the four fixed horizons from the CSF synthetic benchmark, $h \in \{1.5, 2.0, 3.0, 15.0\}$. For survival probability, we sweep quantile horizons from $q60$ to $q90$. This comparison serves as a robustness check across time scales rather than a result tied to a single horizon choice.

\begin{table}[H]
\centering
\small
\setlength{\tabcolsep}{5pt}
\renewcommand{\arraystretch}{1.1}
\caption{RMSE comparison across horizon choices. Gain denotes the relative reduction from the baseline; lower RMSE is better.}
\begin{tabular}{lccc}
\toprule
\textbf{Horizon} & \textbf{NC-CSF} & \textbf{Baseline} & \textbf{Gain} \\
\midrule
\multicolumn{4}{c}{\textbf{RMST (fixed horizons)}} \\
\midrule
$h=1.5$  & 0.063 & 0.077 & 18.0\% \\
$h=2.0$  & 0.077 & 0.105 & 26.2\% \\
$h=3.0$  & 0.099 & 0.166 & 40.2\% \\
$h=15.0$ & 0.341 & 1.016 & 66.4\% \\

\midrule
\multicolumn{4}{c}{\textbf{Survival (quantile horizons)}} \\
\midrule
$q60$ & 0.066 & 0.069 & 5.0\% \\
$q70$ & 0.068 & 0.074 & 7.9\% \\
$q80$ & 0.068 & 0.080 & 15.1\% \\
$q90$ & 0.061 & 0.079 & 22.2\% \\
\bottomrule
\end{tabular}
\label{tab:horizon_rmse}
\end{table}

\begin{table}[H]
\centering
\small
\setlength{\tabcolsep}{5pt}
\renewcommand{\arraystretch}{1.1}
\caption{Absolute bias comparison across horizon choices. Gain denotes the relative reduction from the baseline; lower absolute bias is better.}
\begin{tabular}{lccc}
\toprule
\textbf{Horizon} & \textbf{NC-CSF} & \textbf{Baseline} & \textbf{Gain} \\
\midrule
\multicolumn{4}{c}{\textbf{RMST (fixed horizons)}} \\
\midrule
$h=1.5$  & 0.038 & 0.043 & 11.0\% \\
$h=2.0$  & 0.046 & 0.058 & 20.3\% \\
$h=3.0$  & 0.055 & 0.084 & 34.5\% \\
$h=15.0$ & 0.160 & 0.488 & 67.2\% \\

\midrule
\multicolumn{4}{c}{\textbf{Survival (quantile horizons)}} \\
\midrule
$q60$ & 0.036 & 0.036 & 0.9\% \\
$q70$ & 0.039 & 0.041 & 5.2\% \\
$q80$ & 0.038 & 0.043 & 12.1\% \\
$q90$ & 0.031 & 0.037 & 17.3\% \\
\bottomrule
\end{tabular}
\label{tab:horizon_bias}
\end{table}

The results show that NC-CSF achieves lower RMSE and smaller absolute bias than the baseline at every RMST horizon and every survival horizon. For RMST, the gain increases with the horizon and is largest at $h=15.0$, suggesting that bridge-adjusted nuisance learning is especially useful when long-horizon targets accumulate more censoring and survival-modeling error.

A similar pattern appears for survival probability. Gains are modest at $q60$ and $q70$ but become larger at $q80$ and $q90$, where censoring and continuation-term estimation are more influential. Overall, the advantage of NC-CSF is not confined to a single horizon choice; it persists across both the RMST horizon set and the survival quantile sweep, with the strongest improvements in more challenging long-horizon regimes.

\subsection{Compatibility with the Original CSF Synthetic Benchmark}\label{sec:appendix_csf_benchmark}

We additionally evaluate our estimator on the original synthetic simulation benchmark of Cui et al.~\cite{cui2023estimating}. This experiment is distinct from our main proximal benchmark: the original CSF simulation study does not include a latent unobserved confounder $U$ or negative-control proxy variables, and all confounding is induced through the observed covariates $X$. Thus, this appendix does not test the benefit of proximal adjustment. Instead, it checks whether the final censored learner used in NC-CSF remains stable in a standard fully observed causal survival setting.

We follow the Section~4 simulation protocol of Cui et al.~\cite{cui2023estimating}. The four data-generating settings are the original CSF synthetic settings, with independent covariates drawn from $[0,1]^p$ and $p=15$. Each method is trained on $n=2{,}000$ observations and evaluated on an independent test set of size $n_{\mathrm{test}}=2{,}000$, over $250$ simulation replications. We consider the same two estimands as the original paper: restricted mean survival time (RMST) and survival probability. The RMST horizons are $(1.5,2.0,15.0,3.0)$ for settings 1--4, respectively, and the survival probability horizons are $(0.8,1.2,10.0,2.0)$.

For a fair comparison, we re-run R-CSF locally using the direct \texttt{grf} implementation under the same simulation protocol. R-CSF is fit with $2{,}000$ trees, minimum node size $5$, subsampling fraction $0.5$, and the default split-variable rule $\min(p,\lceil\sqrt p\rceil+20)$. The nuisance survival forests use $500$ trees and minimum node size $15$, matching the original benchmark. Because this benchmark has no designated negative-control proxies, we evaluate the all-$X$ variant of NC-CSF, where all observed covariates are passed through the observed feature block $X$. Therefore, this experiment evaluates the compatibility of the NC-CSF final censored learner with the original CSF setting, rather than the benefit of negative-control adjustment.

\begin{table}[H]
\centering
\small
\setlength{\tabcolsep}{5pt}
\renewcommand{\arraystretch}{1.1}
\caption{Original CSF synthetic benchmark results. MSE is multiplied by 100. Excess MSE is computed relative to the better method in each replication.}
\begin{tabular}{llcc}
\toprule
\textbf{Setting} & \textbf{Metric} & \textbf{Re-run R-CSF} & \textbf{NC-CSF} \\
\midrule

\multicolumn{4}{l}{\textbf{Panel A: RMST}} \\
1 & MSE        & 0.27  & 0.25 \\
  & Excess MSE & 1.14  & 1.08 \\
2 & MSE        & 1.15  & 1.24 \\
  & Excess MSE & 1.03  & 1.16 \\
3 & MSE        & 13.37 & 14.67 \\
  & Excess MSE & 1.03  & 1.15 \\
4 & MSE        & 2.97  & 2.33 \\
  & Excess MSE & 1.38  & 1.02 \\

\midrule

\multicolumn{4}{l}{\textbf{Panel B: Survival probability}} \\
1 & MSE        & 0.16  & 0.15 \\
  & Excess MSE & 1.15  & 1.16 \\
2 & MSE        & 0.37  & 0.38 \\
  & Excess MSE & 1.05  & 1.11 \\
3 & MSE        & 0.15  & 0.16 \\
  & Excess MSE & 1.04  & 1.09 \\
4 & MSE        & 0.28  & 0.28 \\
  & Excess MSE & 1.14  & 1.09 \\

\bottomrule
\end{tabular}
\label{tab:appendix_csf_section4_mse}
\end{table}

\begin{table}[H]
\centering
\small
\setlength{\tabcolsep}{5pt}
\renewcommand{\arraystretch}{1.1}
\caption{Classification error on the original CSF synthetic benchmark. Classification error is defined as $1-(1/n_{\mathrm{test}})\sum_{i=1}^{n_{\mathrm{test}}}\mathbf{1}\{\mathrm{sign}(\widehat{\tau}(X_i))=\mathrm{sign}(\tau(X_i))\}$. For Setting~4, classification error is evaluated on the subgroup $X^{(1)} \geq 0.3$, following the original benchmark.}
\label{tab:appendix_csf_section4_classification}
\begin{tabular}{lcc}
\toprule
\textbf{Setting} & \textbf{Re-run R-CSF} & \textbf{NC-CSF} \\
\midrule

\multicolumn{3}{l}{\textbf{Panel A: RMST}} \\
1 & 0.22 & 0.22 \\
2 & 0.17 & 0.18 \\
3 & 0.09 & 0.09 \\
4 & 0.01 & 0.03 \\

\midrule

\multicolumn{3}{l}{\textbf{Panel B: Survival probability}} \\
1 & 0.22 & 0.25 \\
2 & 0.23 & 0.24 \\
3 & 0.10 & 0.11 \\
4 & 0.04 & 0.14 \\

\bottomrule
\end{tabular}
\end{table}

Tables~\ref{tab:appendix_csf_section4_mse} and~\ref{tab:appendix_csf_section4_classification} show that NC-CSF remains broadly competitive with the re-run R-CSF in the fully observed, no-latent-confounding benchmark. This is a demanding comparison for NC-CSF because the benchmark was designed for the original R-CSF framework and contains no negative-control structure. NC-CSF achieves similar MSE in most settings and improves RMST MSE in Settings~1 and~4, while R-CSF remains stronger in some classification-error comparisons, especially survival probability in Setting~4. We therefore interpret this experiment as a compatibility check: NC-CSF does not replace R-CSF in fully observed settings, but its final censored learner remains stable when applied outside the proximal benchmark.

\subsection{Detailed Synthetic Setup}

\begin{table}[H]
\centering
\caption{Synthetic experiment configuration grid. The left column reports the parameter ranges and the 12 base configurations; the right column reports the 22 dimensional settings. The full benchmark crosses the 12 cases with the 22 settings.}
\begin{minipage}[t]{0.62\textwidth}
\vspace{0pt}
\centering
\small
\setlength{\tabcolsep}{4pt}
\renewcommand{\arraystretch}{1.08}
{\footnotesize (a) Parameter ranges}

\begin{tabular}{ll}
\toprule
\textbf{Value} & \textbf{Range} \\
\midrule
$\mathrm{Corr}(Z,U)$ & $[0.1,\,0.9]$ \\
$\sigma_z,\sigma_w$ & $[0.73,\,14.93]$ \\
$(\gamma_u,\beta_u)$ & $[0,\,2]$ \\
$\tau_{\mathrm{log\mbox{-}hr}}$ & $[-0.7,\,0.7]$ \\
model & Linear / Nonlinear \\
$n$ & $[500,\,16000]$ \\
$d_x$ & $[2,\,40]$ \\
$d_z,\,d_w$ & $[1,\,20]$ \\
\bottomrule
\end{tabular}

\vspace{3pt}

\vspace{10pt}

\small
\setlength{\tabcolsep}{3pt}
\renewcommand{\arraystretch}{1.08}
{\footnotesize (b) Base configurations}
\begin{tabular}{cccccccc}
\toprule
\textbf{Case} & $\beta_u$ & $\gamma_u$ & $\sigma_z$ & $\sigma_w$ & $\tau_{\text{log-hr}}$ & \textbf{Linear T} & \textbf{Linear Y} \\
\midrule
C01 & 0.8  & 0.8  & 1.125 & 1.53 & -0.6  & Yes & Yes \\
C02 & 0.3  & 0.2  & 1.125 & 1.53 & -0.25 & Yes & Yes \\
C03 & 0.9  & 0.9  & 7.35  & 4.77 & -0.5  & Yes & Yes \\
C04 & 0.2  & 0.3  & 7.35  & 4.77 & -0.12 & Yes & Yes \\
C05 & 0.8  & 0.9  & 1.125 & 1.53 & -0.7  & Yes & No  \\
C06 & 0.2  & 0.2  & 1.125 & 1.53 & -0.3  & Yes & No  \\
C07 & 1.1  & 0.9  & 7.35  & 4.77 & -0.5  & Yes & No  \\
C08 & 0.15 & 0.25 & 7.35  & 4.77 & -0.12 & Yes & No  \\
C09 & 1.0  & 1.0  & 1.125 & 1.53 & -0.7  & No  & No  \\
C10 & 0.2  & 0.2  & 1.125 & 1.53 & -0.3  & No  & No  \\
C11 & 1.1  & 0.9  & 7.35  & 4.77 & -0.5  & No  & No  \\
C12 & 0.2  & 0.3  & 7.35  & 4.77 & -0.12 & No  & No  \\
\bottomrule
\end{tabular}

\vspace{3pt}
\end{minipage}
\hfill
\begin{minipage}[t]{0.32\textwidth}
\vspace{0pt}
\centering
\small
\setlength{\tabcolsep}{4pt}
\renewcommand{\arraystretch}{1.08}
{\footnotesize (c) Dimensional settings}
\begin{tabular}{ccccc}
\toprule
\textbf{Setting} & $n$ & $d_x$ & $d_w$ & $d_z$ \\
\midrule
S01 & 500   & 5  & 1  & 1  \\
S02 & 1000  & 5  & 1  & 1  \\
S03 & 2000  & 5  & 1  & 1  \\
S04 & 4000  & 5  & 1  & 1  \\
S05 & 8000  & 5  & 1  & 1  \\
S06 & 16000 & 5  & 1  & 1  \\
S07 & 2000  & 2  & 1  & 1  \\
S08 & 2000  & 5  & 1  & 1  \\
S09 & 2000  & 10 & 1  & 1  \\
S10 & 2000  & 20 & 1  & 1  \\
S11 & 2000  & 40 & 1  & 1  \\
S12 & 2000  & 5  & 2  & 2  \\
S13 & 2000  & 5  & 3  & 3  \\
S14 & 2000  & 5  & 5  & 5  \\
S15 & 2000  & 5  & 10 & 10 \\
S16 & 2000  & 5  & 20 & 20 \\
S17 & 2000  & 5  & 1  & 10 \\
S18 & 2000  & 5  & 10 & 1  \\
S19 & 2000  & 10 & 5  & 5  \\
S20 & 2000  & 20 & 5  & 5  \\
S21 & 2000  & 40 & 10 & 10 \\
S22 & 4000  & 20 & 10 & 10 \\
\bottomrule
\end{tabular}

\vspace{3pt}
\end{minipage}
\label{tab:synthetic-configurations}

\end{table}

The 12 cases isolate key problem dimensions under fixed proxy loading $a_z=a_w=1.5$. Cases C01–C04 consider linear treatment and outcome models, varying proxy quality via $(\sigma_z,\sigma_w)$ (informative vs.\ noisy proxies) and confounding strength via $(\beta_u,\gamma_u)$ (weak to strong). Cases C05–C08 introduce nonlinear outcome models, which capture misspecification in the survival process. Cases C09–C12 further incorporate nonlinear treatment assignment, representing the most challenging regimes. We focus on beneficial treatment effects (negative log-hazard ratios) to reflect common clinical settings where interventions are expected to improve survival. This restriction is without loss of generality: reversing the sign of $\tau_{\mathrm{log\mbox{-}hr}}$ yields the harmful-treatment counterpart, and all error metrics considered (e.g., RMSE, MAE, subgroup bias) are invariant to such sign reversal. Moreover, including both positive and negative effects within the same evaluation can lead to sign cancellation when averaging treatment effects, obscuring systematic estimation errors. Restricting to a single direction avoids this cancellation and provides a clearer assessment of model performance.

The experimental settings are designed to isolate the effect of different problem dimensions. Settings S01--S06 focus on sample size, varying $n$ from 500 to 16{,}000 while keeping covariate and proxy dimensions fixed. Settings S07--S11 vary the covariate dimensionality $d_x$ from 2 to 40 with fixed sample size. Settings S12--S16 jointly increase proxy dimensions $d_w$ and $d_z$, which capture regimes with richer proxy information. Settings S17--S22 are the stress test cases, which explore asymmetric and high-dimensional configurations.

\subsection{Additional Results and Graphs for Uncensored Cases}

\begin{table}[H]
\centering
\footnotesize
\setlength{\tabcolsep}{5pt}
\renewcommand{\arraystretch}{1.1}
\caption{Summary performance in the uncensored setting.}
\begin{tabular}{lcc}
\toprule
\textbf{Model} & \textbf{RMSE} & \textbf{MAE} \\
\midrule
NC-CSF Oracle       & 0.3347 & 0.1203 \\
NC-CSF              & 0.3518 & 0.1309 \\
Uncensored Baseline & 0.3988 & 0.1861 \\
\bottomrule
\end{tabular}
\label{tab:uncensored-summary}
\end{table}

\begin{figure}[H]
\centering

\begin{minipage}[t]{0.48\linewidth}
    \raggedright
    \includegraphics[width=\linewidth]{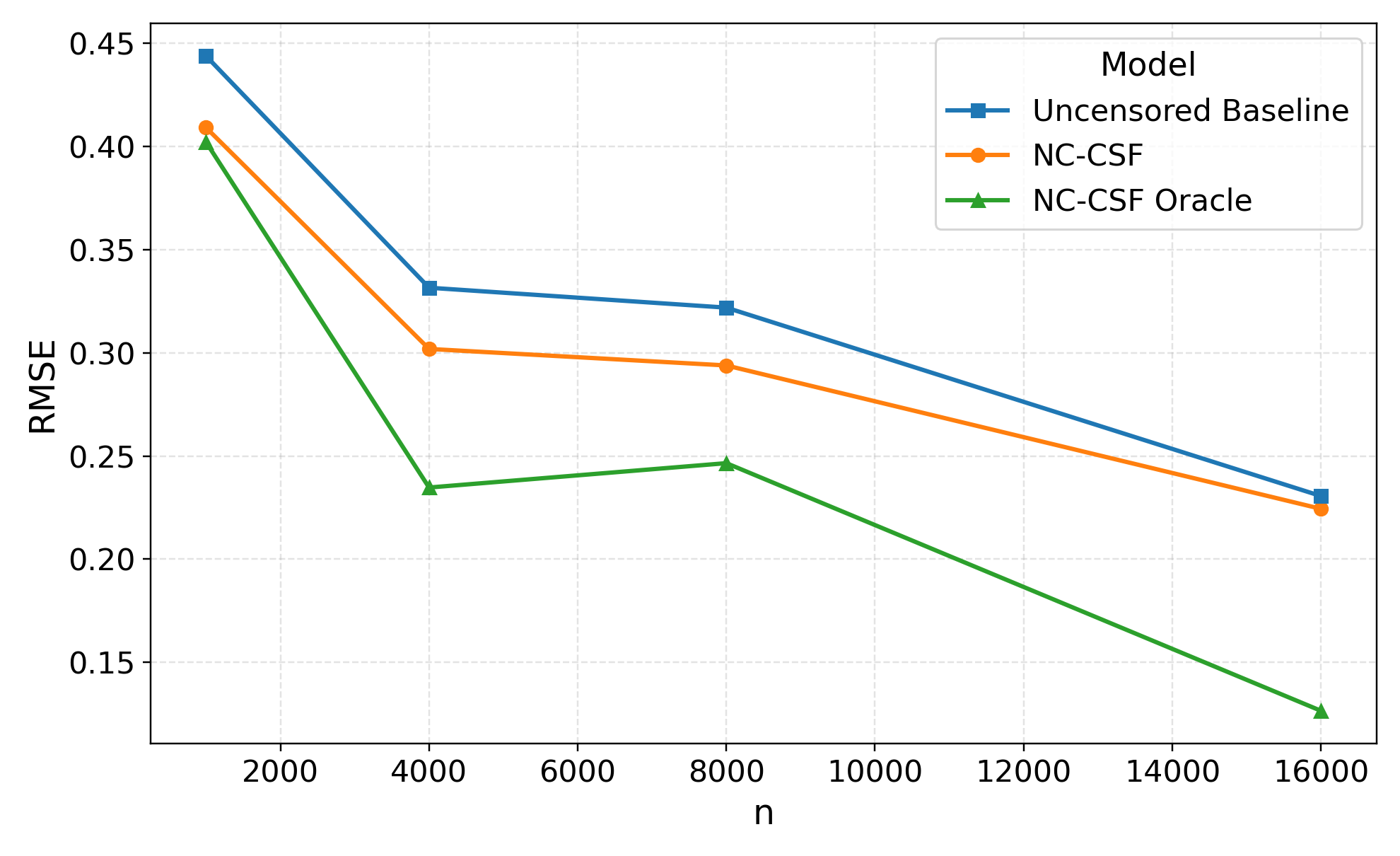}
\end{minipage}
\hfill
\begin{minipage}[t]{0.48\linewidth}
    \raggedright
    \includegraphics[width=\linewidth]{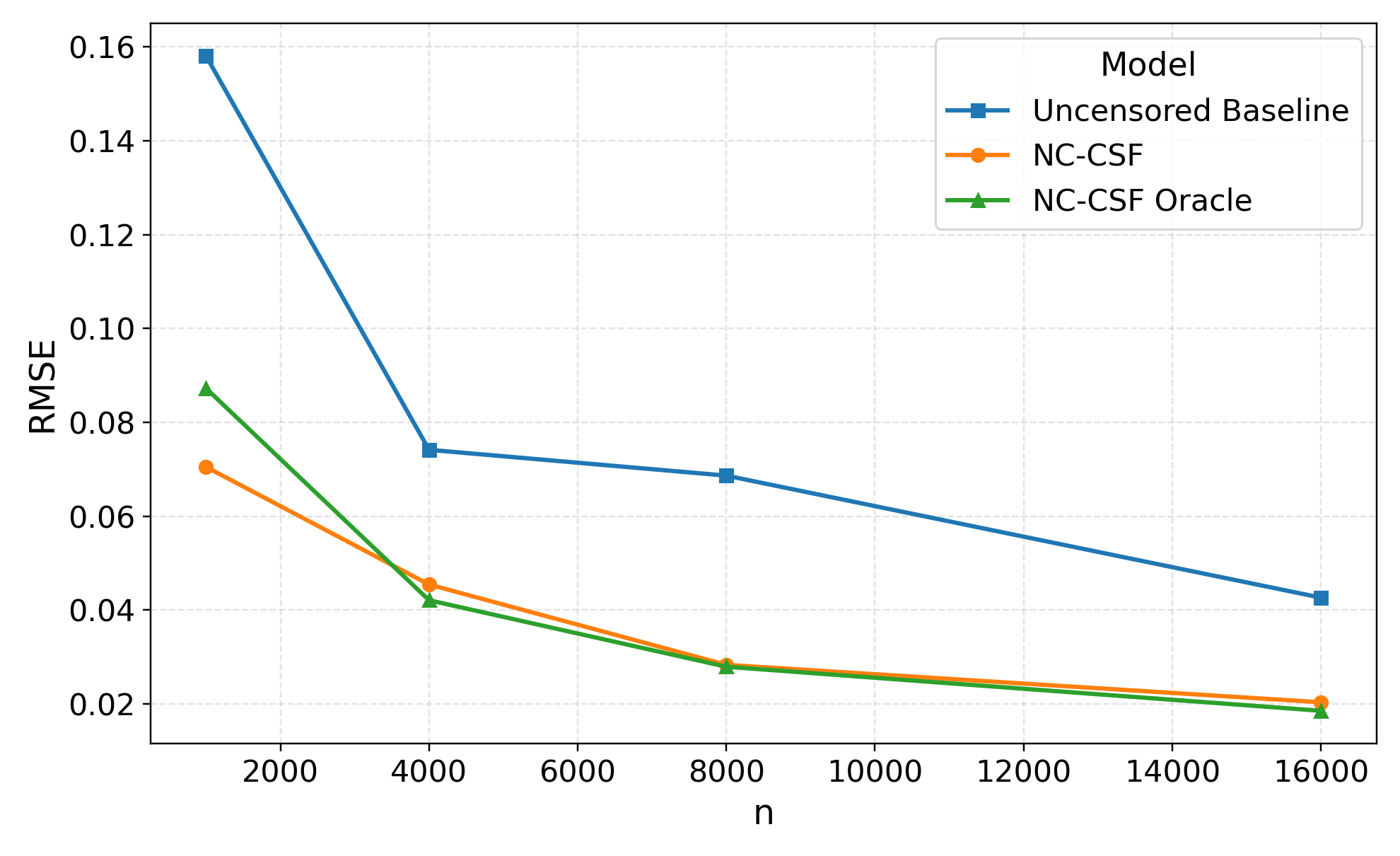}
\end{minipage}

\caption{
\textbf{Left:} RMSE as a function of sample size $n$ under Case 11.
\textbf{Right:} RMSE as a function of sample size $n$ under Case 12.
In both cases, NC-CSF consistently achieves lower RMSE.
}
\label{fig:appendix-uncensored-n-sweep}
\end{figure}

\begin{figure}[H]
\centering

\begin{minipage}[t]{0.48\linewidth}
    \raggedright
    \includegraphics[width=\linewidth]{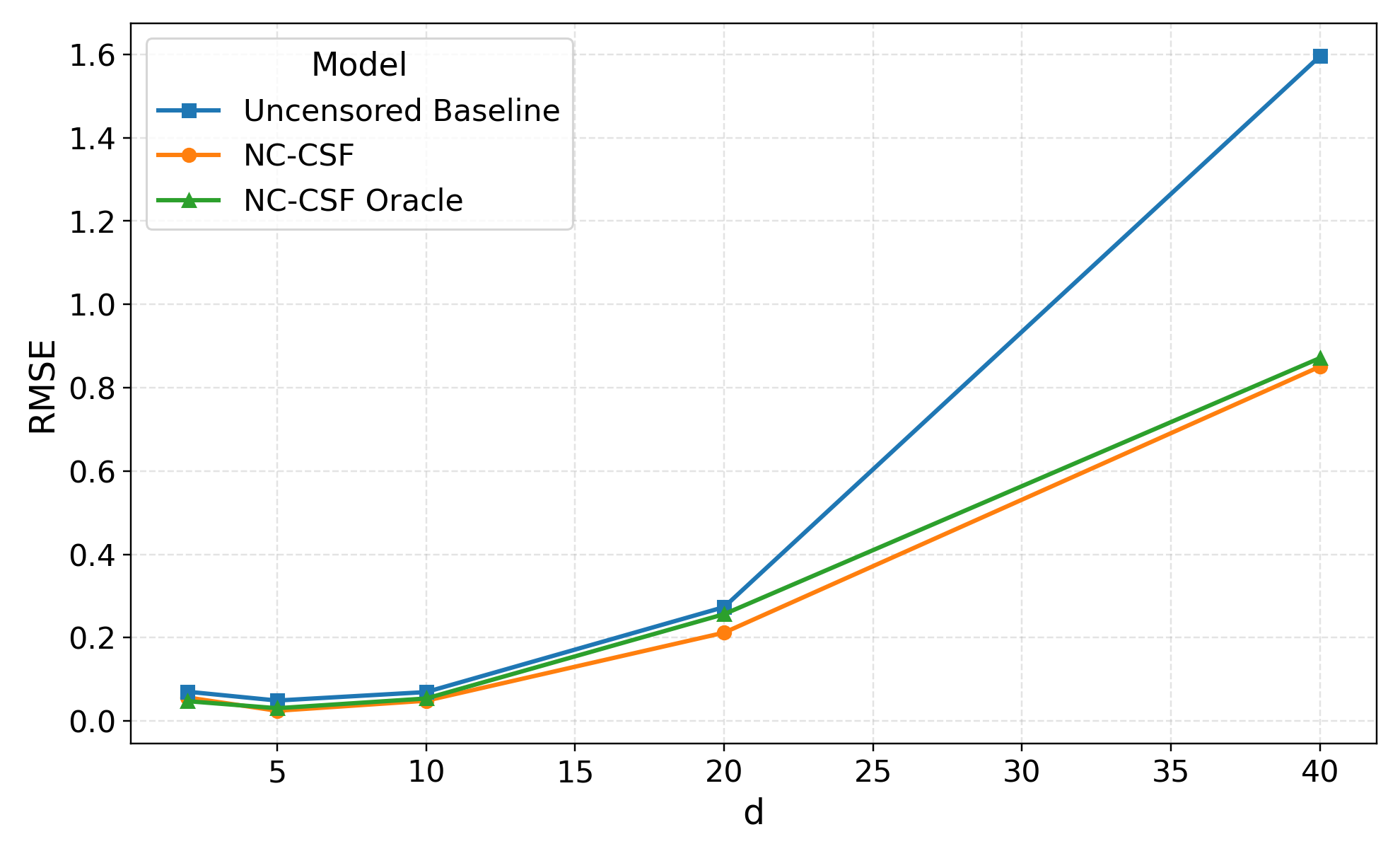}
\end{minipage}
\hfill
\begin{minipage}[t]{0.48\linewidth}
    \raggedright
    \includegraphics[width=\linewidth]{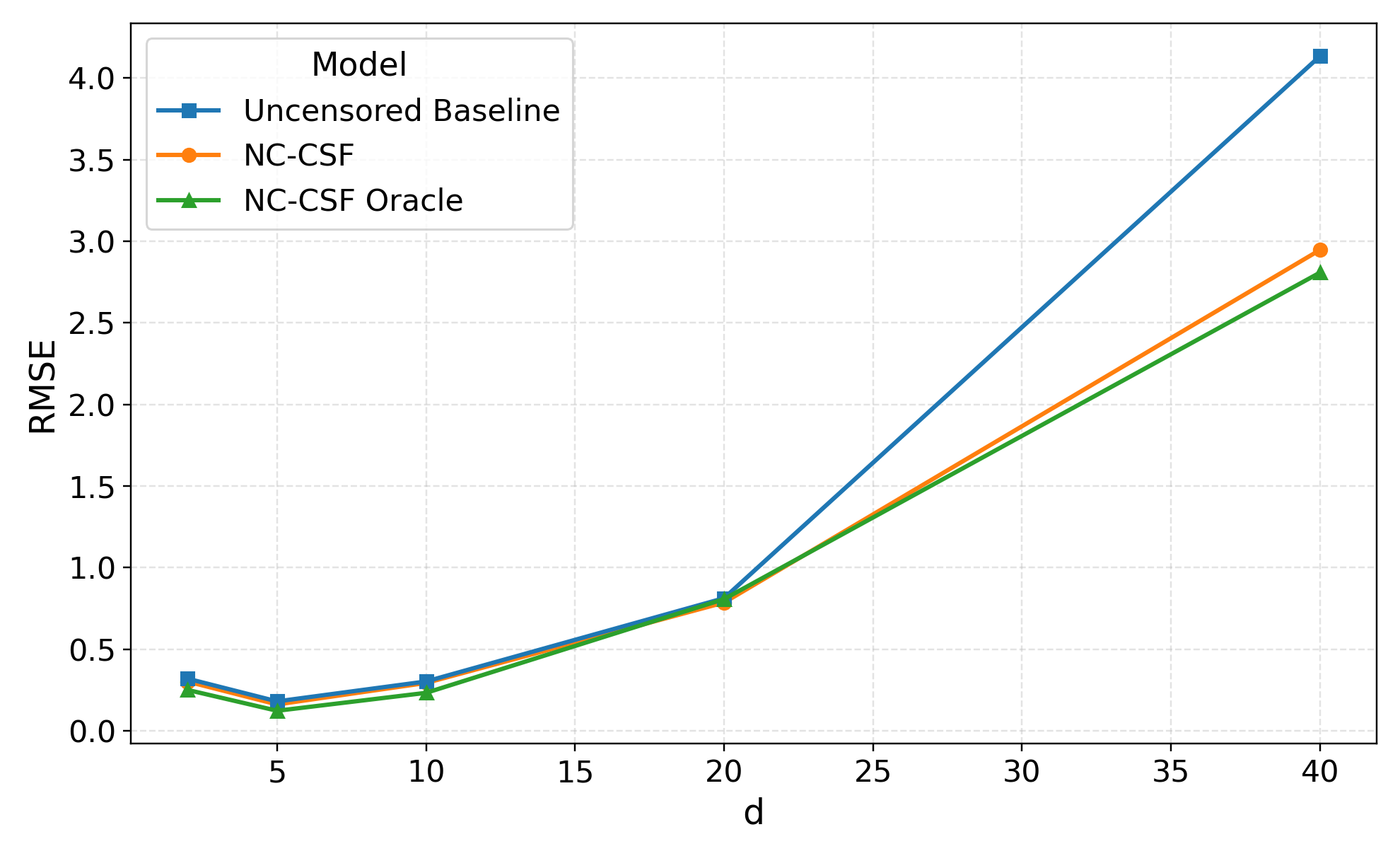}
\end{minipage}

\caption{
\textbf{Left:} RMSE as a function of covariate dimensionality $d$ under Case 10.
\textbf{Right:} RMSE as a function of $d$ under Case 11.
In both cases, NC-CSF scales more robustly with increasing dimensionality $d$, while the baseline degrades substantially in high-dimensional regimes.
}
\label{fig:appendix-uncensored-d-sweep}
\end{figure}

\begin{figure}[H]
\centering
\includegraphics[width=0.9\linewidth]{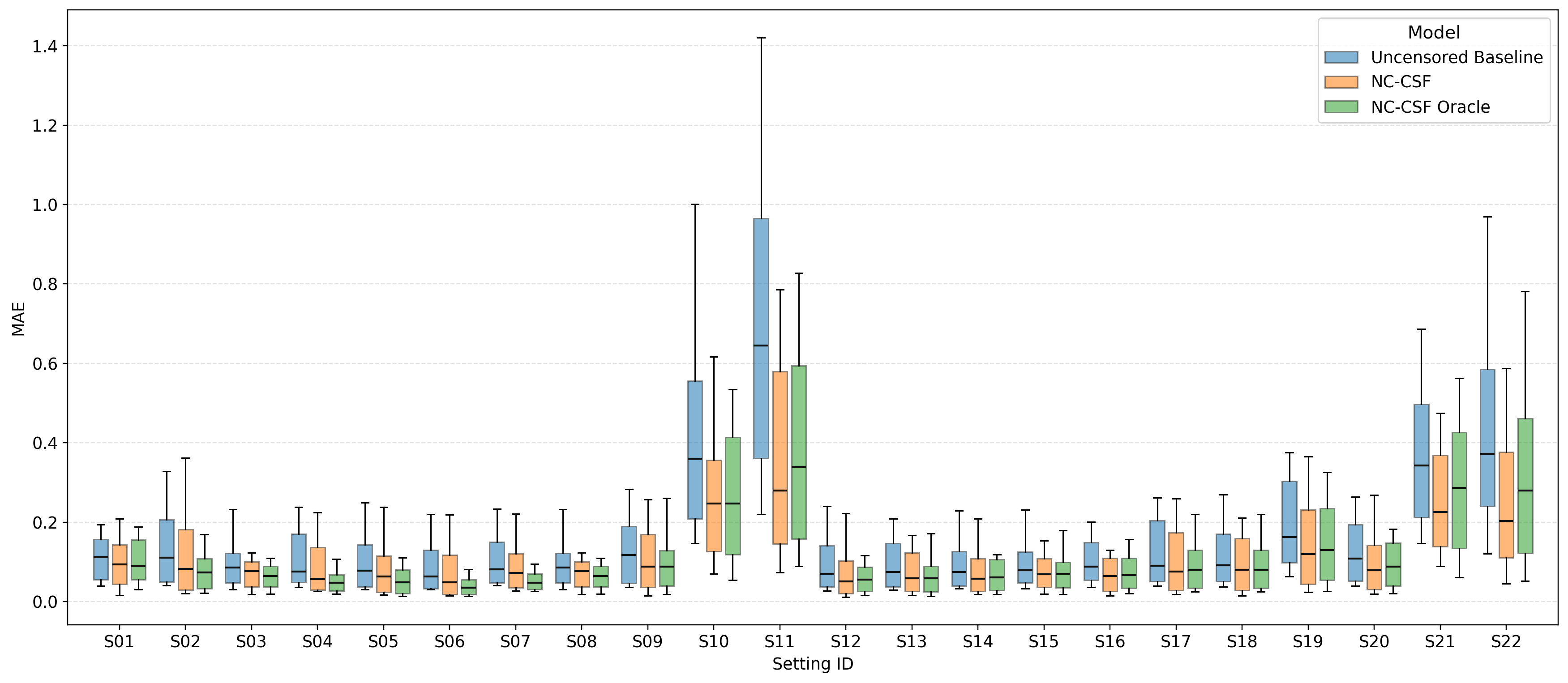}
\caption{MAE of CATE estimation across settings S01--S22 without censoring.
While NC-CSF generally improves over the baseline, the gains are modest.}
\label{fig:appendix_mae_uncensored}
\end{figure}

\begin{figure}[H]
\centering
\includegraphics[width=0.9\linewidth]{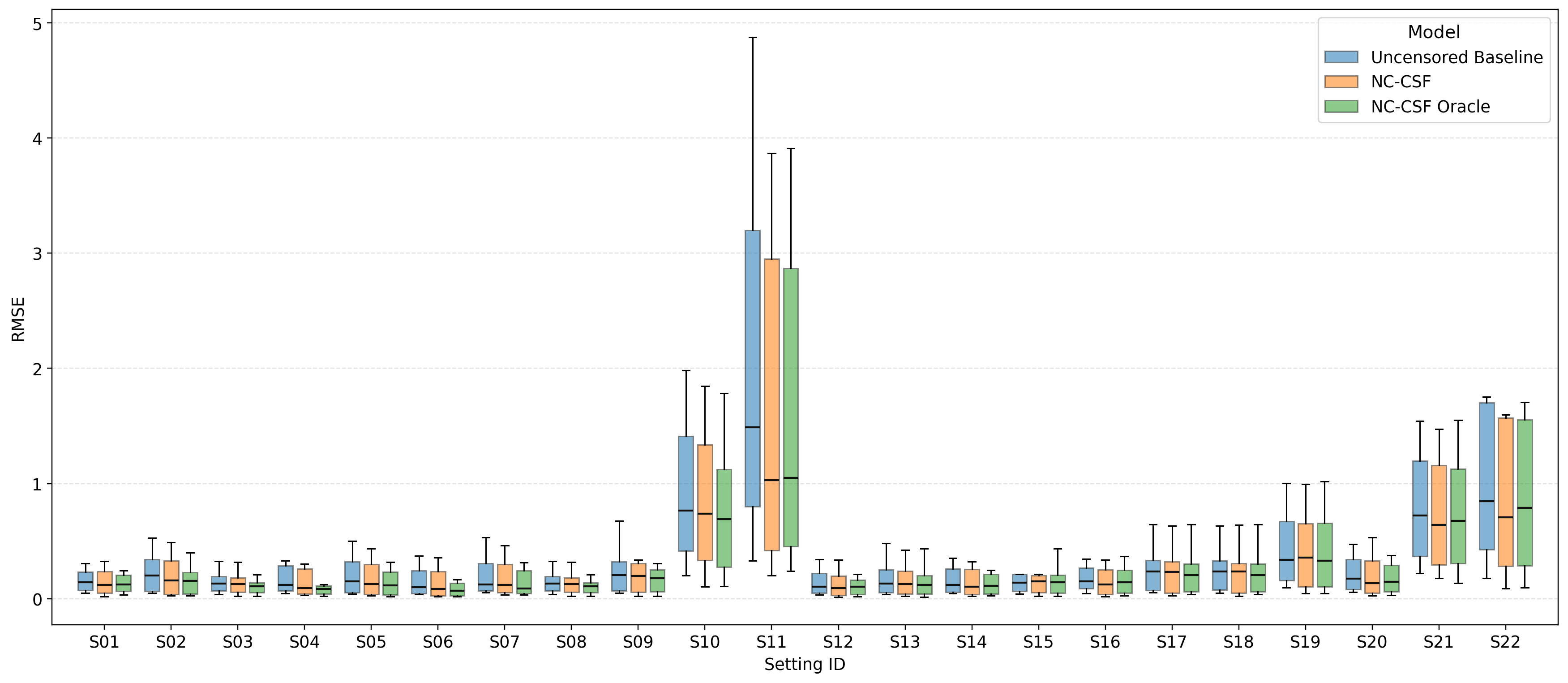}
\caption{RMSE of CATE estimation across settings S01--S22 under uncensoring. While NC-CSF generally improves over the baseline, the gains are modest in the uncensored setting; as shown later, the advantage becomes substantially more pronounced under censoring.}
\label{fig:appendix_rmse_uncensored}
\end{figure}

\subsection{Additional Results and Graphs for Censored Cases}

\begin{figure}[H]
\centering
\includegraphics[width=0.9\linewidth]{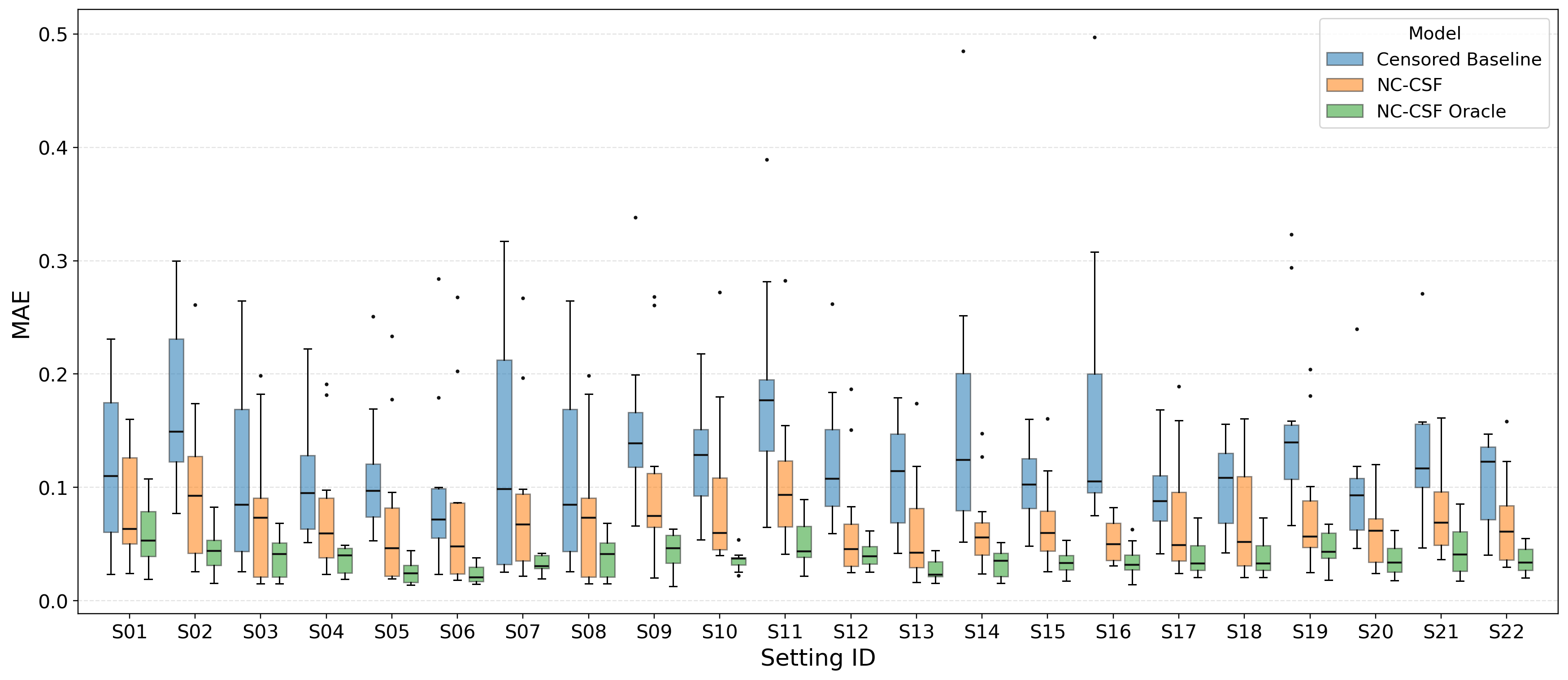}
\caption{MAE for RMST estimation across settings S01--S22 under censoring. NC-CSF achieves substantially lower MAE than the baseline and remains close to the oracle.}
\label{fig:appendix_mae_rmst_censored}
\end{figure}

\begin{figure}[H]
\centering
\includegraphics[width=0.9\linewidth]{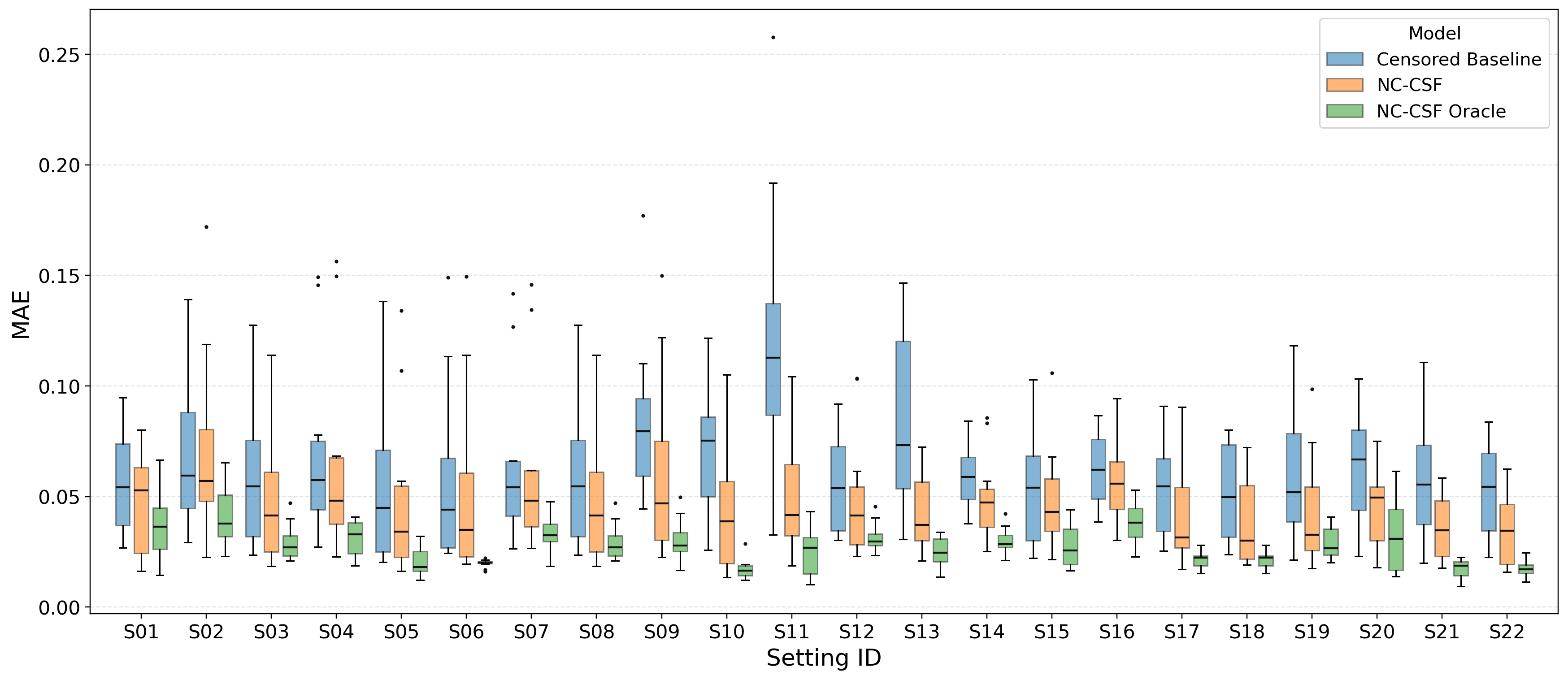}
\caption{MAE for survival probability estimation across settings S01--S22 under censoring. NC-CSF achieves substantially lower MAE than the baseline and remains close to the oracle.}
\label{fig:appendix_mae_survprob_censored}
\end{figure}

\begin{figure}[H]
\centering
\includegraphics[width=0.9\linewidth]{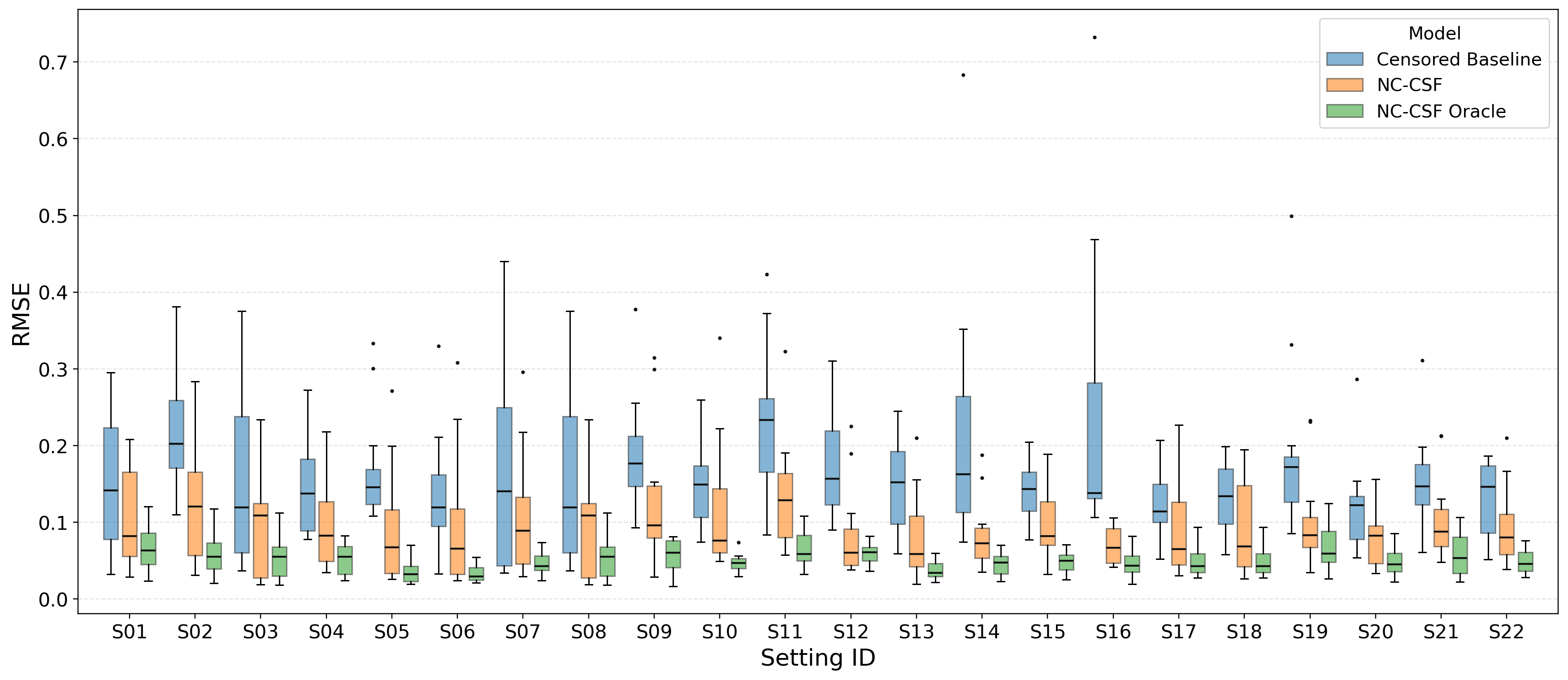}
\caption{RMSE for RMST estimation across settings S01--S22 under censoring. NC-CSF achieves substantially lower RMSE than the baseline and remains close to the oracle.}
\label{fig:appendix_rmse_rmst_censored}
\end{figure}

\begin{figure}[H]
\centering
\includegraphics[width=0.9\linewidth]{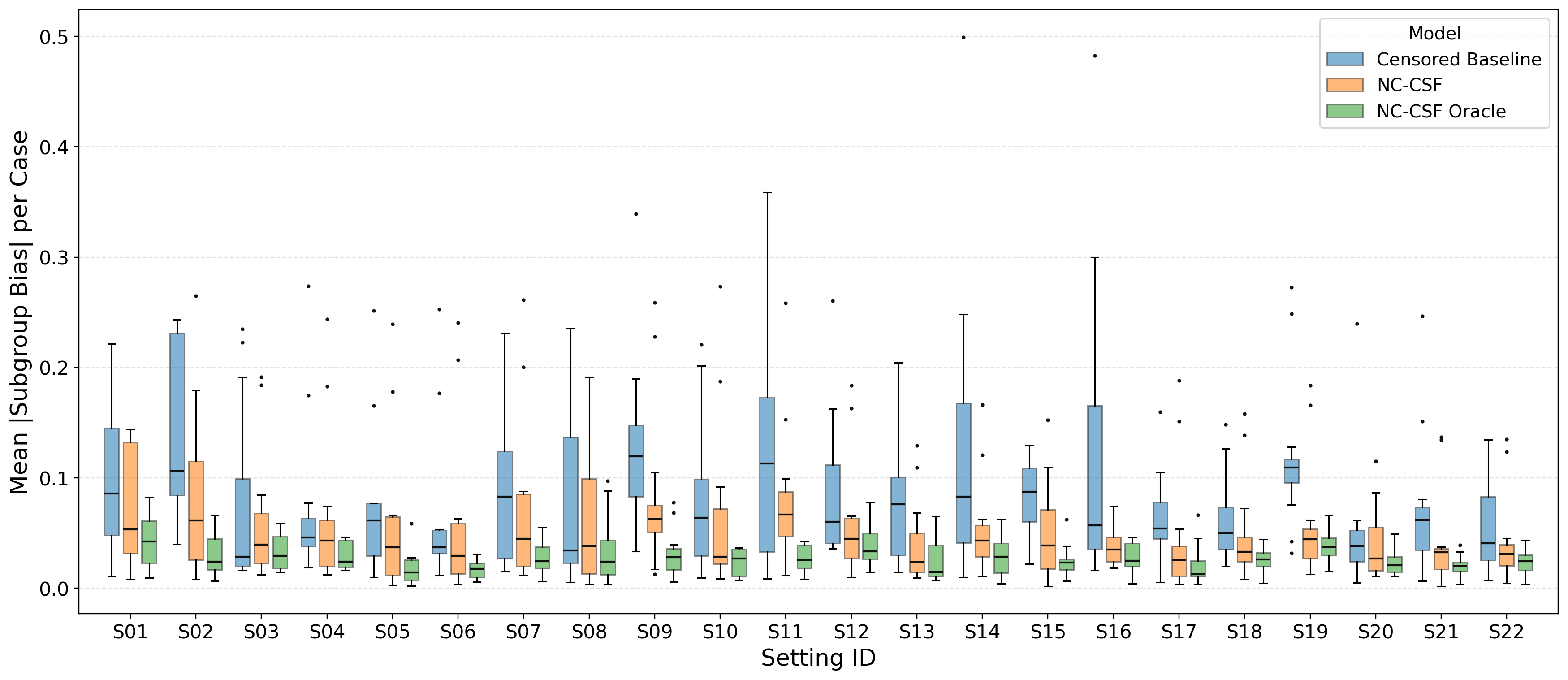}
\caption{Subgroup bias for RMST estimation across settings S01--S22 under censoring, using $X_0$-based subgroups. NC-CSF consistently reduces subgroup bias relative to the baseline and remains closer to the oracle, indicating improved heterogeneous treatment effect estimation.}
\label{fig:appendix_subbias_rmst_x0}
\end{figure}

\begin{figure}[H]
\centering
\includegraphics[width=0.9\linewidth]{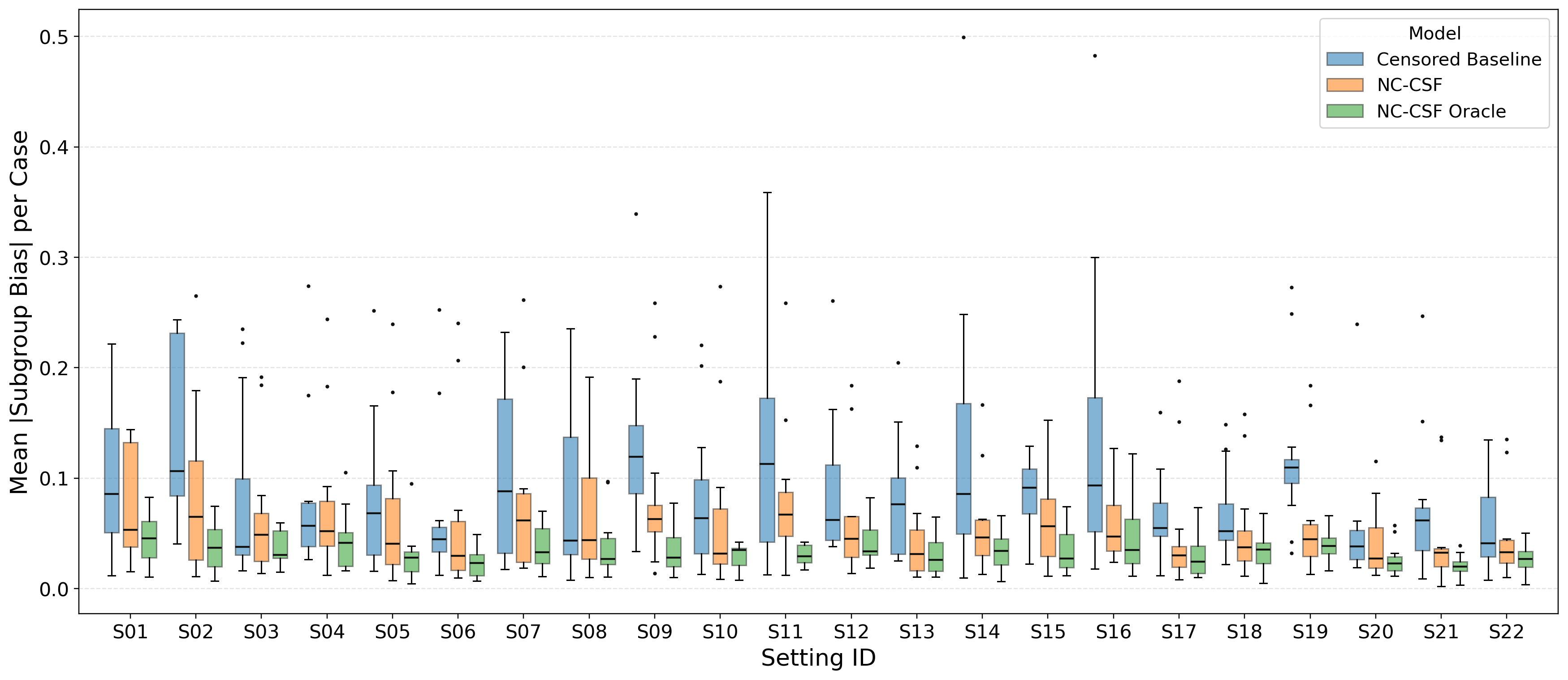}
\caption{Subgroup bias for RMST estimation across settings S01--S22 under censoring, using $X_0X_1$-based subgroups. NC-CSF consistently reduces subgroup bias relative to the baseline and remains closer to the oracle, indicating improved heterogeneous treatment effect estimation.}
\label{fig:appendix_subbias_rmst_x0x1}
\end{figure}

\begin{figure}[H]
\centering
\includegraphics[width=0.9\linewidth]{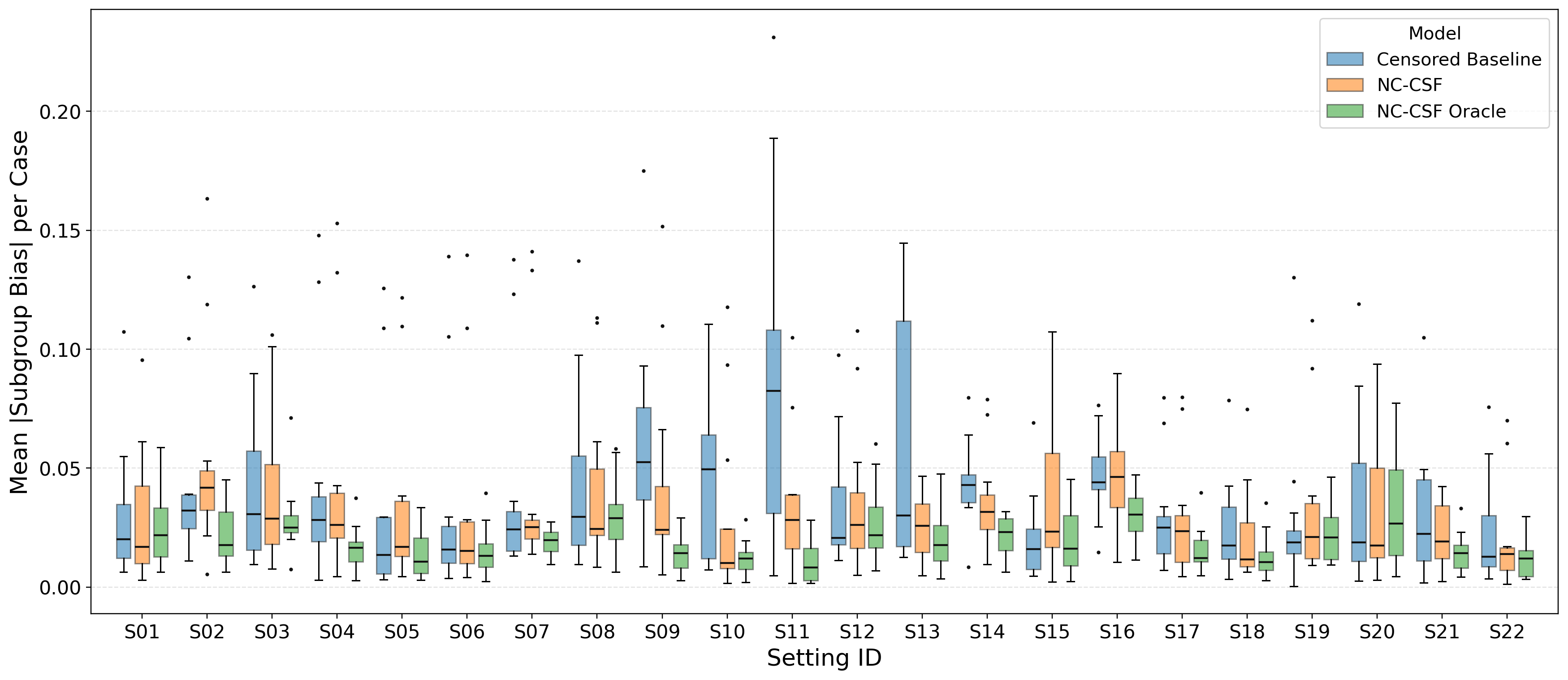}
\caption{Subgroup bias for survival probability estimation across settings S01--S22 under censoring, using $X_0$-based subgroups. NC-CSF consistently reduces subgroup bias relative to the baseline and remains closer to the oracle, indicating improved heterogeneous treatment effect estimation.}
\label{fig:appendix_subbias_survprob_x0}
\end{figure}

\begin{figure}[H]
\centering
\includegraphics[width=0.9\linewidth]{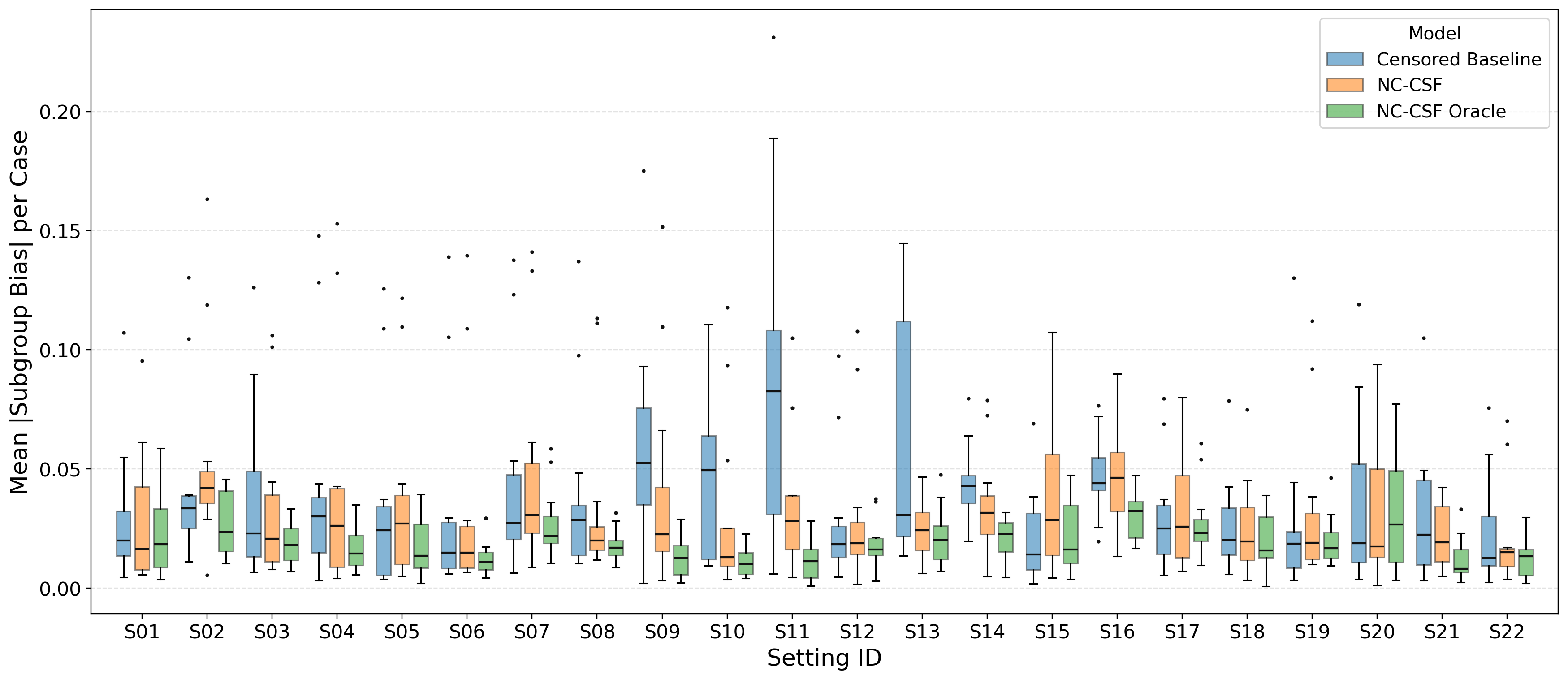}
\caption{Subgroup bias for survival probability estimation across settings S01--S22 under censoring, using $X_1$-based subgroups. NC-CSF consistently reduces subgroup bias relative to the baseline and remains closer to the oracle, indicating improved heterogeneous treatment effect estimation.}
\label{fig:appendix_subbias_survprob_x1}
\end{figure}

\begin{figure}[H]
\centering
\includegraphics[width=0.9\linewidth]{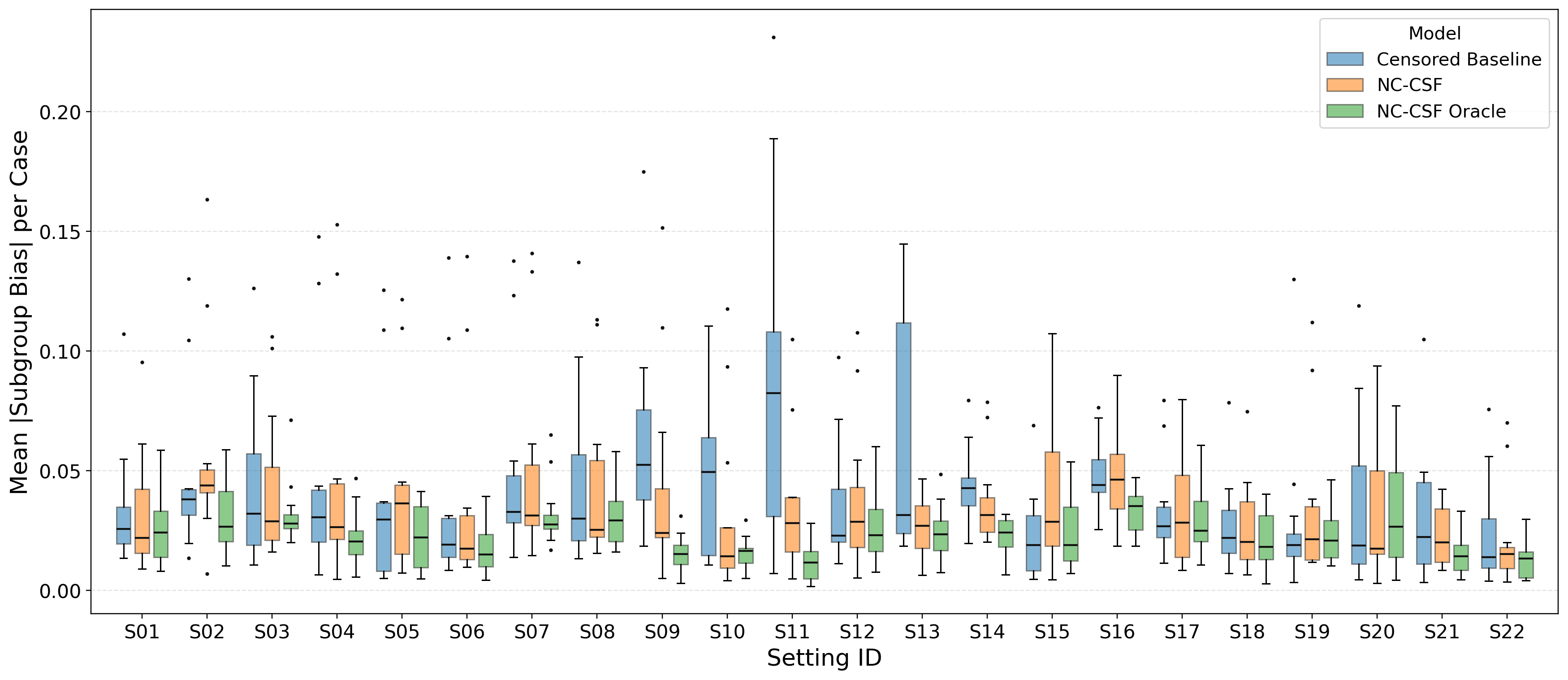}
\caption{Subgroup bias for survival probability estimation across settings S01--S22 under censoring, using $X_0X_1$-based subgroups. NC-CSF consistently reduces subgroup bias relative to the baseline and remains closer to the oracle, indicating improved heterogeneous treatment effect estimation.}
\label{fig:appendix_subbias_survprob_x0x1}
\end{figure}


\end{document}